\documentclass[12pt]{article}

\usepackage{amssymb,amsfonts,amsmath,amsthm,lineno,mathrsfs}
\usepackage{centernot}

\usepackage[T1]{fontenc}
\usepackage[cp1250]{inputenc}
\usepackage{cite}
\usepackage[a4paper, total={16cm, 25cm}]{geometry}
\usepackage{enumitem}
\usepackage{hyperref}

\usepackage{tikz}

\usetikzlibrary{arrows.meta}

\tikzstyle{axis arrow} = [-{Straight Barb[scale=1.2]},line width=1]

\tikzstyle{pointing arrow} = [-{Straight Barb[scale=1.2]},line width=1]

\tikzstyle{axis line} = [line width=1]

\newcommand{\Ar}{A^{\mathrm{ret}}}
\newcommand{\Aa}{A^{\mathrm{adv}}}
\newcommand{\Ac}{A^{\mathrm{Coul}}}
\newcommand{\Ai}{A^{\mathrm{in}}}
\newcommand{\Ao}{A^{\mathrm{out}}}
\newcommand{\ba}{\overline}
\newcommand{\be}{\begin{equation}}
\newcommand{\con}{\mathrm{const}}
\newcommand{\Co}{\mathcal{C}}
\newcommand{\cc}{\mathrm{compl.conj.}}
\newcommand{\ee}{\end{equation}}
\newcommand{\I}{1}
\newcommand{\mR}{\mathbb{R}}
\newcommand{\lb}{\left[}
\newcommand{\rb}{\right]}
\newcommand{\lp}{\left(}
\newcommand{\rp}{\right)}
\newcommand{\s}{\!\cdot\!}
\newcommand{\ac}{\overline{\alpha}}
\newcommand{\D}{\partial}
\newcommand{\du}{d^2 u}
\newcommand{\dl}{d^2 l}

\newcommand{\mrm}{\mathrm}
\newcommand{\e}{\epsilon}
\newcommand{\el}{\varphi_{AB}}
\newcommand{\ex}{\varphi_{AB}(x)}
\newcommand{\elx}{\varphi_{AB}(x)x^B_{A'}}
\newcommand{\ewl}{\partial_\lambda}
\newcommand{\ewo}{\partial_{\lambda_1}}
\newcommand{\ewt}{\partial_{\lambda_2}}
\newcommand{\fb}{\overline{f}}
\newcommand{\f}{\varphi}
\newcommand{\fa}{\varrho}
\newcommand{\fac}{\overline{\varrho}}
\newcommand{\fc}{\overline{\varphi}}
\newcommand{\xc}{\overline{\xi}}
\newcommand{\F}{\Phi}

\newcommand{\g}{\gamma}
\newcommand{\G}{\Gamma}

\newcommand{\io}{\iota}
\newcommand{\z}{\zeta}
\newcommand{\m}{d\mu}
\newcommand{\n}{\nabla}
\newcommand{\oc}{\bar{o}}
\newcommand{\w}{\omega}
\newcommand{\W}{\Omega}
\newcommand{\p}{\psi}
\newcommand{\bp}{\overline\psi}
\newcommand{\ch}{\chi}
\newcommand{\cho}{\dot{\chi}}
\newcommand{\cht}{\ddot{\chi}}

\newcommand{\bc}{\overline\chi}
\newcommand{\la}{\lambda}
\newcommand{\ov}{\overline}
\newcommand{\T}{\mathrm{T}}

\newcommand{\Hc}{\mathcal{H}}
\newcommand{\Sc}{\mathcal{S}}
\newcommand{\si}{\sigma}
\newcommand{\dsp}{\displaystyle}

\DeclareMathOperator{\sgn}{sgn}

\newtheorem{pr}{Proposition}
\newtheorem{lem}[pr]{Lemma}
\newtheorem{col}[pr]{Corollary}

\title{Long-range effects in asymptotic fields and angular momentum of
classical field electrodynamics\,$^\dagger$}
\author{$\mathrm{A{\scriptstyle NDRZEJ}~H{\scriptstyle ERDEGEN}}$\thanks{Alexander von Humboldt Fellow; on leave of absence from:
Institute of Physics, Jagiellonian University, Reymonta 4, 30-059 Cracow,
Poland\newline
\hspace*{1.4em}$^\dagger$\,{\bf  This article has been published in  JMP 36, 4044 (1995) (preceded by preprint \mbox{DESY 95-035}).}}\\
{\it II. Institute of Theoretical Physics, Hamburg University,}\\
{\it Luruper Chaussee 149, 22761 Hamburg, Germany}}
\date{}

\begin{document}
\maketitle

\begin{abstract}
\noindent
Asymptotic properties of classical field electrodynamics are considered.
Special attention is paid to the long-range structure of
the electromagnetic field.
It is shown that conserved Poincar\'e quantities may be expressed in terms
of the asymptotic fields. Long-range variables are shown to be responsible
for an angular momentum contribution which mixes Coulomb and infrared
free field characteristics; otherwise angular momentum and energy-momentum
separate into electromagnetic and matter fields contributions.

\vskip .3in
\noindent
PACS numbers: 03.50.De, 11.10.Jj, 11.30.Cp, 03.70.+k
\end{abstract}

\renewcommand{\thesection}{\arabic{section}}
\renewcommand{\theequation}{\arabic{section}.\arabic{equation}}
\renewcommand{\thepr}{\arabic{section}.\arabic{pr}}

\section{Introduction}
\label{sint}

It is well-known that the long-range character of electromagnetic field
causes certain peculiarities in quantum electrodynamics. Among them the
infraparticle problem and breaking of the Lorentz symmetry are the
most spectacular ones, for a review see the book by Haag \cite{haa} and
an article by Morchio and Strochci \cite{mor}. These properties can be
traced back, as shown most clearly by Buchholz \cite{buc86}, to the fact
provable within the standard system of ideas on properties of
quantum electrodynamics that the flux of electromagnetic field at
spacelike infinity is an essentially classical variable supplying a label
for uncountably many superselection sectors \cite{buc82}. Whether these are
ultimate features of the quantum theory of electromagnetic interaction
or artifacts due to our insufficient understanding of its algebraic structure
is in our opinion an open question as long as we lack consistent,
complete QED beyond Feynman graphs. Doubts about completeness of the
present-state knowledge of the long-range structure can also be raised on
grounds that it tells nothing about the quantization of charge or the
magnitude of the fine-structure constant; see works by Staruszkiewicz on this
point \cite{sta89,sta92}.

In the present work we try to better understand the long-range structure of
electrodynamics in classical field theory. We believe that in this
way one can gain new insights into the quantum case as well.
The domain in which the classical
structure is most likely to be of some
relevance for the quantum case is the asymptotic region. Rigorous
results on the asymptotics of electromagnetic field are presented in Section
\ref{selm} and on the asymptotics of Dirac field in Section \ref{sdir}.
The results are relevant for the interacting theory, as argued in Sections
\ref{selm} and \ref{stot}. In that case some
additional assumptions are made
which seem plausible, but remain unproved. Our main objective, when
discussing the asymptotic fields, is the description of the specific way
how matter and radiation separate in the asymptotic regions.
In this respect the approach of the
present paper differs from that of Flato, Simon and Taflin, who have
recently reported rigorous results on Cauchy problem and scattering states
in classical Maxwell-Dirac theory \cite{fla};
see also a comment in Section \ref{stot}.
Using results on asymptotic fields we express energy-momentum and angular
momentum of the system in terms of those fields.

We stress that our aim is not a purely
mathematical study in classical field theory. Rather, with quantization
in mind, we try to get a reasonably well-founded notion of
the asymptotic structure of fields and conserved Poincar\'e-quantities.

Some of the results described in the present work
were reported earlier in a letter
\cite{her93}. Quantization of the long-range variables within this approach in
a kind of "adiabatic approximation" was discussed in \cite{her}.

Throughout the article we use the abstract index notation \cite{pen}, in which
the index of a geometrical object rather indicates its type, then being a
set of numbers. This interpretation of indices is especially convenient
when spinors are introduced: the two-valence mixed spinors $\rho^{AA'}$
are objects of the same type as complex vectors in Minkowski space,
the respective structures being isomorphic. One identifies, accordingly,
the compound index $AA'$ with the spacetime index $a$ ($BB'$ with $b$ and
so on). We write therefore $\rho^a = \rho^{AA'}$ what in more traditional
notation would be written as $\rho^a = \sigma^a_{AA'}\rho^{AA'}$,
where $\sigma^a_{AA'}$ are the Infeld - van der Waerden symbols giving a
concrete realization of the isomorphism. In this notation the metric tensor
is $g_{ab} = \e_{AB}\e_{A'B'}$, where $\e_{AB}$ is the fixed antisymmetric
spinor (and $\e_{A'B'}\equiv\ba{\e}_{A'B'}$). The correspondence between
an antisymmetric tensor $F_{ab}$ and an equivalent symmetric spinor
$\f_{AB}$ has the form $F_{ab} = \f_{AB}\e_{A'B'} + \fc_{A'B'}\e_{AB}$.
For the null vector of the spinor $o_{A}$ we use fixed notation $l_a =
o_{A}\oc_{A'}$, and for the spinor $\xi_{A}$ respectively $u_a =
\xi_{A}\xc_{A'}$. If below the spinor index in $\oc_{A'}$, $\xc_{A'}$ is
not suppressed and there is no danger of confusion, the bar sign will be
omitted.

\section{Null asymptotics of the electromagnetic field}
\label{selm}

In this section we describe some null asymptotic properties of the
electromagnetic fields. Much of the material is not new, the null infinity
methods being the standard tool in the relativity theory. However, we do not
use the Penrose's conformal compactification, as employed in similar context
in \cite{bra} and \cite{ash}, and use an explicitly Lorentz-covariant
description in terms of homogeneous functions.
Moreover, we describe some global properties in Minkowski space,
which are needed in the discussion of Lorentz generators. The reason for
avoiding the conformal compactification is that it contracts the timelike
past and future infinity to points. This does not seem a natural setting
for the description of massive asymptotic fields living there, which is our
concern in Section \ref{sdir}.

Let us fix the origin in the affine Minkowski space and denote by $x$
a~general point-vector. Let $A(x)$ be a continuous field and suppose it has
well defined asymptotics ${\dsp \lim_{R\to\infty} R\, A(x + Rl)
\equiv b(x, l)}$ for every point $x$ and null vector $l$ (vector and
spinor indices will be often suppressed if no ambiguity arises). $b(x,l)$
is a homogeneous function of degree $-1$ in $l$. Suppose now that $y$ is
a vector lying in the hyperplane $y\s l =0$. If $y\propto l$ then obviously
$b(x+y, l)=b(x, l)$. If $y\not\propto l$, then it is spacelike, and there
always exists a null vector $n$ such that $n\s y =0$, $n\s l =1$.
Then ${\dsp l+ \frac{y}{R} - \frac{y^2}{2R^2}\, n}$ is a future null
vector and
\[
 b(x+y, l) = \lim_{R\to\infty} R\,A\Big(x + \frac{y^2}{2R}\, n
 + R\Big(l+ \frac{y}{R} -\frac{y^2}{2R^2}\, n\Big)\Big)\,.
\]
Therefore, if $A(x)$ is sufficiently regular, one should expect
that again\\ \mbox{$b(x+y, l)=b(x, l)$}, for all $y\s l=0$. This means that
\be
\lim_{R\to\infty} R\, A(x+ Rl)=\ch(x\s l, l)\, ,
\label{asnull1}
\ee
where $\ch(s, l)$ is a homogeneous function of degree $-1$: $\ch(\kappa s, \kappa l)=
\kappa^{-1}\ch(s, l)$. We shall show that (\ref{asnull1}) is indeed satisfied
for a large class, concerning us here, of solutions of the wave equation
(both homogeneous and inhomogeneous).
Instead of null vectors $l_a = o_A o_{A'}$ we shall use
as independent variables the spinors $o$ and $\oc$, adding further conditions
of invariance under the change of the overall spinor phase factor. Thus
$\ch(s, o, \oc )$ will satisfy
\be
\ch(\alpha\ac s, \alpha o, \ac\oc ) = (\alpha\ac)^{-1} \ch(s, o, \oc )
\label{scal1}
\ee
for any complex number $\alpha\neq 0$. The former notation $\ch(s, l)$ will be used
for functions invariant under the overall spinor phase factor change
as a~shorthand.

Let $A(x)\in C^2$ be a global solution of the wave equation
\be
\Box A(x) = 0\, .
\label{wave}
\ee
By the classical Kirchhoff integral formula (see e.g. \cite{pen})
the field $A(x)$ inside the
future lightcone may be recovered from its values on the cone itself.
If these values are represented with the use of a homogeneous function
$\eta(p, l)$ according to
\be\label{cone}
A(R\, l) = R^{-1}\eta\left(R^{-1}, l\right)\, ,\qquad
\eta(\kappa p, \kappa l) = \kappa^{-1}\eta(p, l)\, ,
\ee
then the formula takes on an
especially simple form
\be
A(x) = - \frac{1}{\pi x^2}\int \dot{\eta}\left( 2\frac{x\s u}{x^2}, u\right)\,
\du\, ,
\label{kir}
\ee
where dot over $\eta$ denotes the derivative with respect to the first
argument and $\du$ is the standard invariant measure on the set of null
directions discussed in \ref{apa}. From the homogeneity
property of $\eta$ it follows that the integrand is a homogeneous
of degree $-2$ function of $u$, which is the condition for the applicability of
$\du$. Suppose now, that $RA(Rl)$ has a limit for $R\to\infty$ for all $l$ and
that this limit is achieved without sharp oscillations, which can be expressed
as $\D_R \lp R\, A(Rl)\rp\sim R^{-1-\e}$ for some $\e >0$. Then $A(x)$ has the
anticipated asymptotic behaviour in the whole future lightcone, and moreover
a fall-off property of the asymptotic is implied. More precisely, we have the
following proposition (a $t$-gauge is a scaling of the spinor $o$ for which
$t\s l = 1$, $t^a$ being a timelike unit vector; see \ref{apa}).

\begin{pr}
\label{asnull2}
If $A(x)\in C^2$ is a solution of eq.(\ref{wave}) inside the future lightcone
with the data on the cone given by (\ref{cone}), with $\eta$ in the
$t$-gauge satisfying the bound
$$
\left|\dot{\eta}(p, l)\right| < \frac{\con}{p^{1-\e}}
$$
when
$0<p<p_t$ (for some $\e>0$, $p_t>0$),
then for all $x$ inside the future lightcone the asymptotics
(\ref{asnull1}) holds with
\be
\ch(s, l)= - {1\over 2\pi s}\int \dot{\eta}\Big( {l\s u\over s}, u\Big)\,
\du\, .
\label{fichi}
\ee
$\ch(s, l)\equiv \ch(s, o, \oc )$ has the scaling property (\ref{scal1})
and falls off according to
\be
\left|\ch(s, l)\right| < \frac{\con}{s^\e}
\label{falloff1}
\ee
for ${\dsp s>s_t\equiv {2\over  p_t}}$ in the $t$-gauge.
\end{pr}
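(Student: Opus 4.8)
The plan is to substitute $x\to x+Rl$ directly into the Kirchhoff representation (\ref{kir}) and to compute $\dsp\lim_{R\to\infty}R\,A(x+Rl)$. Since $l$ is null we have $(x+Rl)^2 = x^2+2R\,x\s l$ and $(x+Rl)\s u = x\s u+R\,l\s u$, so that
\be
R\,A(x+Rl) = -\frac{R}{\pi(x^2+2R\,x\s l)}\int \dot\eta\lp \frac{2(x\s u+R\,l\s u)}{x^2+2R\,x\s l},\,u\rp \du\,.
\label{eq:sub}
\ee
Inside the future cone $x\s l>0$, so dividing numerator and denominator by $R$ the prefactor tends to $1/(2\pi\,x\s l)$ and the first argument of $\dot\eta$ tends to $(l\s u)/(x\s l)$ as $R\to\infty$. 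Granting that the limit may be carried under the integral, this gives precisely (\ref{fichi}) with $s=x\s l$; in particular the limit depends on $x$ only through $x\s l$, which is the assertion (\ref{asnull1}).

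The crux is to justify passing the limit under the integral in (\ref{eq:sub}) by dominated convergence. Since $A\in C^2$, $\dot\eta$ is continuous in both arguments for positive first argument, so for each fixed $u$ with $l\s u>0$ the integrand converges pointwise to $\dot\eta\big((l\s u)/(x\s l),u\big)$; the single degenerate direction $u\propto l$, where $l\s u=0$ for future null vectors, carries no $\du$-measure. For the majorant I would split the set of directions. On a fixed neighborhood of $u\propto l$ the first argument is uniformly small for all large $R$, where the hypothesis $|\dot\eta(p,l)|<\con\,p^{-(1-\e)}$ supplies the $R$-independent bound $\con\,(l\s u)^{-(1-\e)}$ up to bounded factors; away from that neighborhood the first argument stays bounded away from $0$ uniformly in large $R$, where $\dot\eta$ is continuous, hence bounded on the compact set of null directions.

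It then remains to check that this majorant is $\du$-integrable near $u\propto l$, and I expect this, together with the uniformity in $R$ of the splitting, to be the main obstacle. In a $t$-gauge one may write $l=(1,\hat m)$, $u=(1,\hat n)$, so that $l\s u = 1-\cos\theta\sim\theta^2/2$ as the opening angle $\theta\to0$, while $\du$ reduces near $\theta=0$ to the rotation-invariant measure $\sim\theta\,d\theta\,d\phi$ (see \ref{apa}). The majorant then behaves like $\theta^{-2(1-\e)}\,\theta\,d\theta = \theta^{-1+2\e}\,d\theta$, whose integral converges precisely because $\e>0$. This is exactly where the exponent $\e$ enters, and it is the quantitative content of the ``no sharp oscillations'' requirement; once it is established, dominated convergence yields (\ref{fichi}) and hence (\ref{asnull1}).

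Finally, the scaling property and the fall-off are read off from the explicit formula (\ref{fichi}). Under $o\to\alpha o$, $\oc\to\ac\oc$ one has $l\to\alpha\ac\,l$, hence $l\s u\to\alpha\ac\,l\s u$, so the ratio $(l\s u)/s$ is invariant if simultaneously $s\to\alpha\ac\,s$; since $\du$ is invariant, the explicit prefactor $1/s$ then produces the factor $(\alpha\ac)^{-1}$, giving (\ref{scal1}). For (\ref{falloff1}), in the $t$-gauge $l\s u\le 2$, so for $s>s_t=2/p_t$ the argument $(l\s u)/s\le 2/s<p_t$ for every $u$ and the hypothesis bound applies on the whole domain, giving
\be
|\ch(s,l)| \le \frac{1}{2\pi s}\int \frac{\con}{\big((l\s u)/s\big)^{1-\e}}\,\du = \frac{\con}{2\pi\,s^\e}\int \frac{\du}{(l\s u)^{1-\e}}\,,
\ee
where the remaining integral is the finite constant just established. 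This is (\ref{falloff1}).
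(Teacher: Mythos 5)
Your proof is correct and takes essentially the same route as the paper: substitute into the Kirchhoff representation and pass to the limit by dominated convergence, with the majorant given by the hypothesis bound $\con\,(l\s u)^{\e-1}$ near the degenerate direction $u\propto l$ (integrable precisely because $\e>0$) and by continuity of $\dot\eta$ on a compact set away from it. The only difference is technical bookkeeping: the paper fixes $x=\la z$, works in the $z$-gauge and changes variables via $\rho_0^2=(2R\rho^2+\la)/(2R+\la)$ so that the first argument of $\dot\eta$ becomes $R$-independent, which makes the uniformity in $R$ of the splitting automatic, whereas you keep the original variables and verify that uniformity directly.
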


We note that the form of the bounds on homogeneous functions as those
appearing in this proposition (and in what follows) is independent of the
choice of the vector $t$ (gauge-independent),
only the bounding constants and $p_t$ (and $s_t$) do change. This is easily
seen
with the use of the inequalities $t\s l\leq e^{\dsp \psi}\,
\tilde{t}\s l$ and $\tilde{t}\s l\leq e^{\dsp \psi}\,
t\s l$ for any null vector $l$ and any two unit, timelike,
future-pointing vectors $t$ and $\tilde{t}$, where
$t\s\tilde{t} = \cosh \psi$.

\begin{proof}
Fix $x^a=\la z^a$, $z^2 = 1$, $z^0>0$, and choose $l$ and $u$
in $z$-gauge. Parametrize $u$ by $(\rho, \f)$ as in \ref{apa} (with $z$
playing the role of the time-vector) and change the $\rho$ variable to
$\rho_0 >0$ by $\dsp \rho_0^2=\frac{2R\rho^2+\la}{2R+\la}$.
Then, by (\ref{kir}),
$$
RA(x +Rl) = \frac{-1}{2\pi\la} \int_0^1 2d\rho_0^2 \int d\f\,
\dot{\eta}\left(\frac{2}{\la}\rho_0^2, u(\rho(\rho_0), \f)\right)\,
\theta\left(\rho_0^2-\frac{\la}{\la+2R}\right)\, .
$$
The integrand is bounded in module by
$\dsp
 \theta\Big(p_z -\frac{2}{\la}\rho_0^2\Big)\frac{\con}{(\rho_0^2)^{1-\e}} +
\con\,\theta\Big(\frac{2}{\la}\rho_0^2 - p_z\Big),
$
hence by the Lebesgue theorem
$$
\lim_{R\to\infty} RA(x+Rl) = \frac{-1}{2\pi\la}\int_0^1 2d\rho^2\int d\f\,
\dot{\eta}\left(\frac{2}{\la}\rho^2, u(\rho, \f)\right)\, ,
$$
which is (\ref{asnull1}) with $\ch$ (\ref{fichi}) in the $z$-gauge and
$(\rho, \f)$-parametrization. The bound for
$\ch$ is easily obtained.
\end{proof}

The next proposition gives the field itself from its asymptotic, by a slightly
strengthened fall-off condition (no sharp oscillations).

\begin{pr}
\label{freeas1}
Let $\ch(s, l)$ and its derivatives with respect to $s$ of up to the third
order be continuous functions of $s$ and $l$ for $s\in\mR$. Suppose
$\ch(s, l)$ and $\cho(s, l)$ satisfy respectively (\ref{falloff1}) and
\be
\left|\cho(s, l)\right| < \frac{\con}{s^{1 + \e}}
\label{falloff2}
\ee
for $s>s_t>0$ in the $t$-gauge.

Then $\ch(s, l)$ is the asymptotic (\ref{asnull1}) of the field
\be
A(x) = - {1\over 2\pi}\int \cho(x\s l, l)\,\dl\, ,
\label{freeas2}
\ee
which satisfies the wave equation. For a given $x$ there is
$|R A(x + Rl)| < \con$ for all $R\geq 0$ and $l$ in the $t$-gauge.

If in addition we demand that also $\cht(s, l)$ satisfies (\ref{falloff2}),
then uniqueness of $A(x)$ with the given asymptotic is guaranteed.
\end{pr}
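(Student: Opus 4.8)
The plan is to establish the four assertions in turn. First, the wave equation is almost algebraic: for a fixed null vector $l$ the function $x\mapsto\cho(x\s l,l)$ is annihilated by $\Box$, since each $\D_a$ acts only through the first argument, so $\Box\,\cho(x\s l,l)$ equals the third $s$-derivative of $\ch$ at $x\s l$ multiplied by $l_al^a=0$. Because $\ch$ is assumed continuous in $s$ up to the third order these derivatives exist and are continuous, and since the integral in (\ref{freeas2}) runs over the compact set of null directions (see \ref{apa}) with integrands bounded uniformly for $x$ in a neighbourhood of any given point, I may differentiate under the integral sign and conclude $\Box A(x)=0$. This is precisely why continuity up to the third order is required.

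For the asymptotics I would mimic the proof of Proposition \ref{asnull2}. Writing the dummy integration variable in (\ref{freeas2}) as $u$ and the fixed asymptotic direction as $l$,
\[
R\,A(x+Rl)=-\frac{R}{2\pi}\int\cho\big(x\s u+R\,l\s u,\,u\big)\,\du\,.
\]
Since $l$ and $u$ are future null, $l\s u\geq 0$ with equality only when $u\propto l$, so as $R\to\infty$ the argument is driven into the fall-off region (\ref{falloff2}) except near $u=l$. Choosing the $(\rho,\f)$-parametrization of \ref{apa} centred at $l$, one has $l\s u\propto\rho^2$ and $\du\propto d\rho^2\,d\f$, and the substitution $\si=2R\rho^2$ (the analogue of the $\rho_0$-substitution in Proposition \ref{asnull2}) removes the explicit factor $R$. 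In the limit $u$ collapses onto $l$, so $x\s u\to x\s l$, the radial integral of $\cho$ telescopes by the fundamental theorem of calculus, the upper endpoint being killed by (\ref{falloff1}), and the $\f$-integration supplies a factor $2\pi$; collecting the constants gives exactly $\lim_R R\,A(x+Rl)=\ch(x\s l,l)$, i.e. (\ref{asnull1}). The interchange of limit and integral is justified by the Lebesgue theorem with a dominating function built from (\ref{falloff2}) for large argument (for fixed $x$ the quantity $x\s u$ stays bounded, so $x\s u+\si\geq\si-\con$) and from continuity for bounded argument. The very same dominating function is integrable uniformly in $R$, so the rewritten integral is bounded independently of $R\geq 0$, which yields the claimed bound $|R\,A(x+Rl)|<\con$.

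The delicate assertion is uniqueness, equivalent to injectivity of the map carrying a free field to its null asymptotic: I must show that a $C^2$ solution of (\ref{wave}) whose asymptotic vanishes, and which obeys the assumed decay, is identically zero. Through (\ref{fichi}) the asymptotic $\ch$ is an integral transform over the sphere of the characteristic data $\dot\eta$ of (\ref{cone}); the plan is to invert this transform, recovering $\dot\eta$ from $\ch$, so that a vanishing $\ch$ forces $\dot\eta=0$ and hence $A=0$ inside the future cone by Kirchhoff's formula (\ref{kir}), the global conclusion $A\equiv0$ following from the corresponding statement for the mass-shell amplitude in momentum space. The strengthened requirement that $\cht$ also satisfy (\ref{falloff2}) is exactly what guarantees that this inverse transform (equivalently, the mass-shell amplitude) is well defined and that the relevant integrals converge. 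This inversion is the step I expect to be the main obstacle, since it requires controlling the invertibility of the sphere transform in (\ref{fichi}) rather than carrying out a mere change of variables as in the first two parts.
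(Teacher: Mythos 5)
Your treatment of the wave equation, of the limit (\ref{asnull1}) and of the uniform bound is correct and is essentially the paper's own argument: writing $R\,A(x+Rl)$ as an integral over null directions $u$, absorbing the explicit factor $R$ by the substitution $\beta=2R\rho^2$, dominating the integrand by a function pieced together from (\ref{falloff2}) for large argument and from continuity for bounded argument, and then letting $u$ collapse onto $l$ so that the radial integral telescopes, with (\ref{falloff1}) killing the endpoint at infinity -- all of this appears, step for step, in the paper's proof.

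The uniqueness part, however, contains a genuine gap, on two counts. First, your plan hinges on inverting the sphere transform (\ref{fichi}) carrying the characteristic data $\dot{\eta}$ to the asymptotic $\ch$, and you explicitly leave that inversion undone, so nothing is actually proved; note also that applying (\ref{fichi}) to an arbitrary competitor field presupposes that its data on the future lightcone satisfy the hypotheses of Proposition \ref{asnull2}, which is not given. Second, even granting the inversion, $\dot{\eta}=0$ only yields $A=0$ \emph{inside the future lightcone}; the cone interior does not determine the field outside it, and your appeal to the mass-shell amplitude is circular, since the representation (\ref{Four}), (\ref{afi}) is established in the paper only for fields already known to be of the form (\ref{freeas2}). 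The paper's argument needs no inversion at all and covers all of Minkowski space at once: apply the Kirchhoff formula on \emph{past-directed} lightcones whose vertex is sent to future timelike infinity. Any given point $x$ lies inside such a cone once the vertex is far enough up the time axis; the Kirchhoff integrand involves the field on the cone and its derivative along the generators, and these data tend to the null asymptotics $\ch$ and (for the derivative term) $\cht$. The additional hypothesis that $\cht$ satisfy (\ref{falloff2}) is exactly what makes the dominated-convergence argument of the first part applicable to the derivative term, so the limit of the Kirchhoff formula is precisely the representation (\ref{freeas2}). Hence every field in the considered class with asymptotic $\ch$ coincides with (\ref{freeas2}) everywhere, which is the uniqueness claim. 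You should either carry out this limiting Kirchhoff argument, or supply a genuine proof of injectivity of the transform (\ref{fichi}) \emph{together with} a separate argument for the region outside the future cone.
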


\begin{proof}
The wave equation is obviously satisfied. Further\\
\mbox{$R A(x + Rl) = {-1\over 2\pi}\int\cho(x\s u + Rl\s u, u)\, \du$.}
Choose $l$ and $u$ in the $t$-gauge, use $(\rho, \f)$-para\-metrization for
$u$ as in \ref{apa} and replace the $\rho$ variable by
\mbox{$\beta = 2R\rho^2$}. Then
\[
 R A(x + Rl) = {-1\over 2\pi}
\int_0^{2\pi}d\f\int_0^{2R}d\b\,\cho(x\s u +\beta, u),
\
\text{where}\
u=u(\rho , \f ) = u\left( \sqrt{\beta/2R}, \f\right).
\]
For \mbox{$\beta>s_t+|x^0| + |\vec{x}|$}
the condition (\ref{falloff2}) implies
\[
 |\cho(x\s u + \beta, u)|
< \con (\beta - |x^0| - |\vec{x}|)^{-1-\e}.
\]
For $0\leq\beta\leq s_t + |x^0| + |\vec{x}|$ there is
\[
 -(|x^0| + |\vec{x}|)\leq x\s u + \beta\leq 2(|x^0| + |\vec{x}|) + s_t,
\]
so, for fixed $x$, by continuity
$|\cho(x\s u + \beta, u)| <\con$ in this case. The asymptotic (\ref{asnull1}) and
the bound follow now easily. If (\ref{falloff2}) is assumed for $\cht(s, l)$,
then also the field
$\dsp \n_aA(x) = -{1\over 2\pi} \int l_a\cht(x\s l, l)\, \dl$
has similar asymptotic properties.
By the Kirchhoff formula $A(x)$ can be uniquely recovered from the values of
$A$ on any past-directed lightcone, such that $x$ lies inside the cone;
the formula involves the field itself on the cone and its derivative along the
generating lines of the cone. If one tends with the vertex of the cone to
the future timelike infinity, then the integrands tend to respective null
asymptotics. The dominated convergence given by the proposition gives then
(\ref{freeas2}) as the limit of the Kirchhoff formula, which implies
uniqueness.
\end{proof}

Up to now we have considered the null asymptotic in the future direction only.
In exactly the same way the past null asymptotic can be considered.
Proposition \ref{freeas1} holds again, with
(\ref{falloff1}), (\ref{falloff2}), (\ref{asnull1}) and
(\ref{freeas2}) replaced respectively by
\be
\left|\ch'(s, l)\right| < \frac{\con}{|s|^\e}
\label{falloff3}
\ee
and
\be
\left|\cho'(s, l)\right| < \frac{\con}{|s|^{1 + \e}}
\label{falloff4}
\ee
for $s<s'_t<0$ in the $t$-gauge,
\be
\lim_{R\to\infty} R\, A(x- Rl)=\ch'(x\s l, l)\, ,
\label{asnull3}
\ee
\be
A(x) = {1\over 2\pi}\int \cho'(x\s l, l)\,\dl\, .
\label{freeas3}
\ee
The null asymptotics $\ch$ and $\ch'$ are not independent, as for any $x$
the representations (\ref{freeas2}) and (\ref{freeas3}) have to agree:
\be
\int \Sigma(x\s l, l)\, \dl =0\, ,
\label{cons1}
\ee
where
\be
\Sigma(s, l) = \cho(s, l) + \cho'(s, l).
\label{cons2}
\ee
\begin{lem}
\label{cons3}
If $\Sigma(s, l)$ is continuous, satisfies (\ref{cons1}),  and
$|\Sigma(s, l)|$ is bounded by some polynomial in $s$ (in some $t$-gauge),
then
\[
 \Sigma(s, l) = \sum_{k=0}^N s^k\Sigma_k(l)\quad \text{with}\quad
 N<\infty\quad \text{and}\quad
 \int l_{a_1}\dots l_{a_k}\Sigma_k(l)\, \dl  = 0.
\]
\end{lem}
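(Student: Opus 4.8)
The plan is to work in a fixed $t$-gauge, where the scaling property of $\Sigma$ (inherited from (\ref{scal1})) lets me impose $t\s l=1$ and parametrize the null directions by the points $\hat n$ of the unit sphere $S^2$, writing $l=(1,\hat n)$, so that $\dl$ reduces to (a multiple of) the rotation-invariant measure $d\Omega$ on $S^2$ and $x\s l=x^0-\vec x\s\hat n$. Condition (\ref{cons1}) then reads
\[
\int_{S^2}\Sigma(x^0-\vec x\s\hat n,\hat n)\,d\Omega(\hat n)=0
\]
for every $x=(x^0,\vec x)$. The polynomial bound guarantees that, for each $\hat n$, the map $s\mapsto\Sigma(s,\hat n)$ is a tempered distribution, so I may take its Fourier transform $\widetilde\Sigma(\w,\hat n)$ in $s$.

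First I would Fourier-transform the displayed identity in $x^0$, turning it into the statement that, for each frequency $\w$,
\[
\int_{S^2}\widetilde\Sigma(\w,\hat n)\,e^{i\w\,\vec x\s\hat n}\,d\Omega(\hat n)=0\qquad\text{for all }\vec x .
\]
For $\w\neq0$ the covector $\vec\xi=-\w\,\vec x$ sweeps out all of $\mR^3$, so the left-hand side is exactly the three-dimensional Fourier transform of the distribution $\widetilde\Sigma(\w,\cdot)\,d\Omega$, which is supported on the unit sphere; by injectivity of the Fourier transform this forces $\widetilde\Sigma(\w,\cdot)=0$ for every $\w\neq0$. Equivalently, and more cleanly at the distributional level, the full four-dimensional transform of the left-hand side of (\ref{cons1}) is a superposition over $\hat n$ of contributions carried by the rays $\mR\,(1,\hat n)$ of the light cone; since distinct directions give rays meeting only at the apex, vanishing of the transform localizes the $s$-support of each $\widetilde\Sigma(\cdot,\hat n)$ to $\{0\}$.

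A tempered distribution in $s$ supported at the origin is a finite combination of derivatives of $\delta$, so its inverse transform is a polynomial in $s$; together with the hypothesised polynomial bound this yields $\Sigma(s,l)=\sum_{k=0}^N s^k\Sigma_k(l)$ with $N$ not exceeding the degree of the bounding polynomial, the coefficients $\Sigma_k$ being continuous (recoverable from $\Sigma$ at $N+1$ values of $s$ by a Vandermonde inversion) and carrying the homogeneity inherited from $\Sigma$. Finally I would substitute this expansion back into (\ref{cons1}); since $(x\s l)^k=x^{a_1}\cdots x^{a_k}l_{a_1}\cdots l_{a_k}$, the identity becomes
\[
\sum_{k=0}^N x^{a_1}\cdots x^{a_k}\int l_{a_1}\cdots l_{a_k}\Sigma_k(l)\,\dl=0\qquad\text{for all }x .
\]
Reading off the homogeneous parts in $x$ and polarizing shows that each symmetric tensor $\int l_{a_1}\cdots l_{a_k}\Sigma_k(l)\,\dl$ vanishes, which is precisely the asserted moment condition.

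The main obstacle is the middle step: making the passage ``for each $\w\neq0$'' rigorous when $\widetilde\Sigma$ is only a distribution in $\w$, and establishing the injectivity/localization cleanly. The safest route is to avoid slicing in $\w$ and argue entirely through the four-dimensional transform, showing that the light cone is swept out by the transversal rays $\mR\,(1,\hat n)$ so that the only possible common support is the apex $\w=0$; the interchange of the $\hat n$-integration with the transform, and the nondegeneracy of the change of variables from $(\w,\hat n)$ to null covectors away from the apex, are the points that need genuine care. Everything else — the reduction to the sphere, the identification of point-supported distributions with polynomials, and the polarization — is routine.
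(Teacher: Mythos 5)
Your proposal is correct and follows essentially the same route as the paper's own proof: Fourier transform in the retarded-time variable, localize the support of $\tilde{\Sigma}(\cdot,l)$ at $\w=0$ by exploiting that the cone minus its apex is swept out transversally by the rays $\w\mapsto\w l$, identify point-supported tempered distributions with finite sums of derivatives of $\delta(\w)$ (so $\Sigma$ is polynomial in $s$, of degree uniformly bounded thanks to the polynomial bound), and substitute back into (\ref{cons1}) to read off the moment constraints. The middle step you flag as the main obstacle is resolved in the paper exactly in the spirit of your proposed fix: one tests (\ref{cons1}) with Schwartz functions $f$ whose momentum-space restriction to the cone factorizes, $\hat{f}(\w l)=g(\w)h(l)$ with $h$ smooth and $g$ a Schwartz function vanishing in a neighbourhood of $\w=0$ (such data extend to Schwartz functions on $\mR^4$ precisely because the apex is avoided); varying $h$ then gives $\int\tilde{\Sigma}(\w,l)\,\ov{g(\w)}\,d\w=0$ for each fixed $l$, which is the distributionally meaningful version of your ``for each $\w\neq 0$'' slicing.
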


Before giving a proof we fix our conventions for the Fourier
transformations. For $f(x)$ a function of the spacetime point and
$g(s, o, \oc )$ a function of the spinor $o$ and a real variable $s$ we
denote
\be
\hat{f}(p) = {1\over 2\pi}\int f(x) e^{\dsp i\, p\s x}\, d^4x\, ,
\label{confour1}
\ee
\be
\tilde{g}(\w, o, \oc) = {1\over 2\pi}\int g(s, o, \oc)
e^{\dsp i\, \w s}\, ds\, .
\label{confour2}
\ee
If $g(\alpha\ac s, \alpha o, \ac\oc)= \alpha^p\ac^q g(s, o, \oc)$ then
$\dsp \tilde{g}\left({\w\over\alpha\ac}, \alpha o, \ac\oc\right) =
\alpha^{p+1}\ac^{q+1}\tilde{g}(\w, o, \oc)$.

\begin{proof}[Proof of the lemma]
We integrate the condition (\ref{cons1}) with a function
of fast decrease $f(x)$. The result can be rewritten as
$\dsp \int \tilde{\Sigma}(\w, l)\ov{\hat{f}(\w l)}\, d\w\, \dl =0$.
Fix a $t$-gauge and assume that $\hat{f}(\w l) = g(\w) h(l)$; this can be
extended to a~Schwartz function $\hat{f}(p)$ if $h(l)$ is infinitely
differentiable and $g(\w)$ is a~Schwartz function vanishing in some
neighbourhood of $\w =0$. Going over all possible $h(l)$ we obtain
$\dsp \int \tilde{\Sigma}(\w, l)\ov{\tilde{g}(\w)}\, d\w = 0$
for all $l$. Thus, for every $l$, $\tilde{\Sigma}(\w, l)$ is a
distribution concentrated in $\w =0$, hence a finite linear combination of
derivatives of $\delta(\w)$. By polynomial boundedness the supremum over $l$ of
the degree of the highest derivative of $\delta(\w)$ is finite, hence $N<\infty$.
Inserting the expansion into (\ref{cons1}) one obtains
the constraints on $\Sigma_k(l)$.
\end{proof}

The result of the lemma gives via (\ref{cons2}) the relation between the
future and the past null-asymptotics. We add now the physical condition
that the energy of the field be finite. It will be seen below, that it leads
for the electromagnetic field to the condition which corresponds here to
the integrability of $\left|\cho(s, l)\right|^2$ over all $s\in
(-\infty, +\infty)$. By the result of the lemma and the fall-off
conditions for $\cho(s, l)$ and $\cho'(s, l)$ it follows now that $\Sigma(s, l)$
vanishes identically. Thus we have
$$
\cho(s, l) + \cho'(s, l) = 0\, .
\label{cons4}
$$
This implies that both $|\cho(s, l)|$ and $|\cho'(s, l)|$ satisfy
both fall-off conditions (\ref{falloff2}) and (\ref{falloff4}).
This implies also that there exist limits
\[
 \ch(-\infty, l)\equiv\lim_{s\to\ -\infty} \ch(s, l),\qquad
 \ch'(+\infty, l)\equiv\lim_{s\to\ +\infty} \ch'(s, l)
\]
and
\be
\ch(s, l) + \ch'(s, l) = \ch(-\infty, l) = \ch'(+\infty, l)\, .
\label{cons5}
\ee
If we think of $A(x)$ as an analog of the electromagnetic potential, or in
fact a component of the latter, then the fields with nonvanishing
$\ch(-\infty, l)$ are exactly those infrared-singular in the usual sense
(as observed in \cite{ash}). This is easily seen, when the connection between
(\ref{freeas2}) and the usual Fourier representation is clarified. This is
simply achieved if $\cho(x\s l, l)$ in (\ref{freeas2}) is represented by its
transform (\ref{confour2}). If we write the Fourier representation of the
field $A(x)$ as
\be
A(x) = {1\over \pi} \int a(k)\delta(k^2)\e(k^0)\, e^{\dsp -i x\s k}\,
d^4k\, ,
\label{Four}
\ee
then
\be
a(\w l) = - {\tilde{\cho}(\w, l)\over \w}\, .
\label{afi}
\ee
But $\dsp \tilde{\cho}(0, l)= -{1\over 2\pi}\ch(-\infty, l)$,
as easily seen from
(\ref{confour2}). If this does not vanish, then $a(\w l)\sim \w^{-1}$ at the
origin. We note that the function $\ch(-\infty, l)$ is not only (as $\ch(s, l)$)
Lorentz-frame independent, but also independent of the choice of the origin in
Minkowski space. It describes uniquely the spacelike asymptotic of the field
$A(x)$:
\be
\lim_{R\to\infty}R\, A(x + Ry) = {1\over 2\pi}\int\ch(-\infty, l)\delta(y\s l)\,
\dl\, ,
\label{spas}
\ee
where $\delta$ is the Dirac distribution. This is true both point-like in $y$
for $y^2\neq 0$ (for timelike $y$ yielding simply $0$), and distributionally
when integrated with a test function $f(y)$, for any fixed $x$. One proves
this by a method similar to that used in the proof of the Proposition~\ref{freeas1}.

In the next step we want to take into consideration fields with nonvanishing
sources, satisfying equation
\be
\Box A(x) = 4\pi J(x)\, .
\label{sour1}
\ee
Some restrictions on the current density have to be assumed. As the scattering
aspects are those which concern us here, we want the free radiation field
$A^{\mrm{rad}} = \Ar - \Aa$ to fall into the class of fields considered
up to now. The Pauli-Jordan function
$\dsp D(x)={1\over 2\pi}\e(x^0)\delta(x^2)$
can be written in the representation (\ref{freeas2}) as
$$
D(x) = -{1\over 8\pi^2}\int \delta'(x\s l)\, \dl\, .
$$
Using it, we obtain $A^{\mrm{rad}}(x)$ in the representation
(\ref{freeas2}) with the integrand\\ $\cho^{\mrm{rad}}(s, l) =
\dot{c} (s, l)$,
\be
c(s, l) = \int\delta(s - l\s y) J(y)\, d^4y\, .
\label{sour2}
\ee
We assume therefore that this function is well defined, and that
$\dot{c} (s, l)$ satisfies the premises of Proposition \ref{freeas1}.
Note, however, that $c(+\infty, l)$ need not vanish, and the future null
asymptotic of $A^{\mrm{rad}}$ is given by $\ch^{\mrm{rad}}(s, l)=
c(s, l) - c(+\infty, l)$. Suppose further that the support of
the current is bounded in
spacelike directions, that is for every $x$ the set $\{y|\, y^2\leq 0,
J(x + y) \neq 0\}$ is bounded. This condition can be relaxed to some decay
in spacelike directions, but we dot study this problem in detail.
The asymptotics of the retarded and advanced solutions is easily found
\be
\lim_{R\to\infty}R\,\Ar(x + Rl) = \lim_{R\to\infty}R\,\Aa(x - Rl) =
c(x\s l, l)\, .
\label{sour3}
\ee
Combined with the asymptotics of the radiation field this also gives
\begin{equation}\label{sour4}
 \lim_{R\to\infty}R\,\Ar(x - Rl) = c(-\infty, l)\, , \qquad
 \lim_{R\to\infty}R\,\Aa(x + Rl) = c(+\infty, l)\, .
\end{equation}
The incoming and outgoing fields are defined as usual by
$A = \Ai + \Ar = \Ao + \Aa$ and they are assumed to belong to the class of
free fields considered here. Denoting $\ch$ and $\ch'$ the future and past
null asymptotics of $A$ we have the relations
\begin{equation*}
\ch'(s, l) = \ch'^{\mrm{in}}(s, l) + c(-\infty)\, ,\qquad
\ch(s, l) = \ch^{\mrm{out}}(s, l) + c(+\infty)\, .
\end{equation*}
The full relation (\ref{cons5}) is lost now, but it remains true, that
\be
\ch(-\infty, l) = \ch'(+\infty, l)\, .
\label{cons6}
\ee

The extension of the preceding discussion to the case of the electromagnetic
fields involves some physically important modifications. The Maxwell
equations in the spinor form read
\be
\n^A_{B'}\ex = 2\pi J_b(x)
\label{max1}
\ee
with a real conserved current $J_b$. If complex $J_b$ is admitted, its
imaginary part is the magnetic current of the generalized Maxwell equations in
tensor form. To see the physical consequences of the absence of magnetic
currents in the context of asymptotic fields we take this condition only later
into account. We shall see later that in order that the radiated
energy-momentum and angular momentum be well defined not only $\ex$ but
also $\elx$ should have the asymptotic behaviour discussed above.
As the latter field appears repeatedly in the present context it is
convenient to denote
\be
\fa_{AA'}(x)=\f_{AB}(x)x^B_{A'}\, .
\label{fa}
\ee
This field satisfies
\be
\n^A_{B'}\fa_{AA'}(x) = 2\pi J_{BB'}(x)x^B_{A'}\, .
\label{max2}
\ee
From (\ref{max1}) and (\ref{max2}) the inhomogeneous wave equations follow
\begin{align}
 \Box\ex &= 4\pi \n_{AC'}J^{C'}_B(x)\, , \label{box1}\\[1ex]
 \Box\fa_{AA'}(x) &= 4\pi \n_{AC'}\big(J^{C'}_B(x)x^B_{A'}\big)\, .
\label{box2}
\end{align}

In the free field case we demand therefore that
\begin{align*}
 \ex &= -{1\over 2\pi}\int\dot{f}_{AB} (x\s l, o, \oc)\, \dl\, ,\\
 \fa_{AA'}(x) &= -{1\over 2\pi}\int\dot{h}_{AA'} (x\s l, o, \oc)\, \dl\, ,
\end{align*}
where $|\dot{f}_{AB} (s, o, \oc)|$ and $|\dot{h}_{AA'} (s, o, \oc)|$
are bounded by $\con |s|^{-1-\e}$ for large $|s|$ and both
$|f_{AB} (s, o, \oc)|$ and $|h_{AA'} (s, o, \oc)|$ vanish for $s\to +\infty$.
(Here and in what follows such bounds on spinor and tensor functions are to
be understood in some $t$-gauge, component-wise in some Minkowski frame
in which time axis is parallel to $t$, and in the associated spinor frame;
for fixed vector $t$ the bounds do not depend on the choice of the
spacelike frame.)
Setting these formulae into (\ref{max1}) and (\ref{max2}) respectively
(with $J_b=0$) and taking null asymptotics one obtains
$\dsp o^A f_{AB} (s, o, \oc) = 0$,
$\dsp o^A h_{AA'} (s, o, \oc) = 0$, i.e.
$\dsp f_{AB} (s, o, \oc)= o_A o_B f(s, o, \oc)$,
$\dsp h_{AA'} (s, o, \oc)= o_A h_{A'} (s, o, \oc)$.
(These formulae
follow also from the generalized Kirchhoff formula \cite{pen}.) Contracting
the above representation of $\ex$ with $x^B_{A'}$ and using (\ref{inth2})
we have
\begin{align*}
 \elx &= {1\over 2\pi}\int o_A x_{A'B} o^B \dot{f}(x\s l, o, \oc)\,\dl
 = {1\over 2\pi}\int o_A (\D_{A'} - \D'_{A'}) f(x\s l, o, \oc)\,\dl\\
  &=  -{1\over 2\pi}\int o_A \D'_{A'} f(x\s l, o, \oc)\,\dl\, ,
\end{align*}
where $\dsp \D_{A'}\equiv {\D\over \D o^{A'}}$ and
$\dsp \D'_{A'} f(x\s l, o, \oc) \equiv
\D_{A'} f(s, o, \oc)\Big|_{\textstyle s = x\s l}$.

From now on we make a general assumption that the spinor derivatives of the
asymptotics up to the order which will appear in the future considerations
do not spoil the fall-off properties, so that e.g. together with
$\dsp |f(s, o, \oc)|$ also $\dsp |\D_{A'}f(s, o, \oc)|$
falls off as $s^{-\e}$ for
$s\to +\infty$ and with $\dsp |f(s, o, \oc) - f(-\infty, o, \oc)|$
also $\dsp |\D_{A'}[f(s, o, \oc) - f(-\infty, o, \oc)]|
\sim |s|^{-\e}$ for
$s\to -\infty$.
From the homogeneity properties of asymptotics then follows
that the differentiation with respect to $s$ increases the rate of
fall-off by one inverse power of $|s|$, as e.g. \mbox{$\dsp o^{A'}\D_{A'} f
+ s\dot{f} = -f$}.

Comparing now the two above representations of $\dsp \elx$
and using Lemma \ref{cons3} we have
$\D_{A'} f(s, o, \oc) = \dot{h}_{A'} (s, o, \oc)$.
Contracting this with $o^{A'}$ and using homogeneity we get
$\dsp \D_s (sf) = -o^{A'} \dot{h}_{A'}$,
or $\dsp sf = o^{A'} h_{A'} - g$, where
\mbox{$g=g(o, \oc)$} has the homogeneity property
$g(\alpha o, \ac\oc)= \alpha^{-2}g(o, \oc)$.
Differentiation on $\D_{B'}$ yields
$\dsp s\,\dot{h}_{B'} = \D_{B'}(o^{A'} h_{A'}) - \D_{B'} g$.
The l.h. side vanishes
for $|s|\to\infty$, whereas the r.h. side tends to $-\D_{B'} g$ for
$s\to +\infty$ and to $\dsp \D_{B'}(o^{A'} h_{A'}(-\infty) - g)$ for
$s\to -\infty$. Hence
$\dsp \D_{B'} g = \D_{B'}(o^{A'} h_{A'}(-\infty)) = 0$,
which implies by (\ref{swh2}) $\dsp g= o^{A'}h_{A'}(-\infty)=0$.
Therefore $f=s^{-1}o^{A'} h_{A'} \sim |s|^{-1-\e}$ for $|s|\to\infty$,
so there is a unique representation $f=\dot{\z}$ with $\z$ vanishing for
$s\to +\infty$.

Summarizing the free electromagnetic field case we have
\begin{align}
 \ex &= -\frac{1}{2\pi}\int o_A o_B \ddot{\z}(x\s l, o, \oc)\, \dl\, ,
 \label{frel1}\\
 \fa_{AA'}(x) &=
 -\frac{1}{2\pi}\int o_A \D'_{A'}\dot{\z}(x\s l, o, \oc)\, \dl\, ,
 \label{frel2}
\end{align}
\begin{align}
 \lim_{R\to\infty}R\,\el(x + Rl) &= o_A o_B \dot{\z}(x\s l, o, \oc)\, ,
 \label{elas1}\\
 \lim_{R\to\infty}R\,\fa_{AA'}(x+Rl)&= o_A \D'_{A'}\z(x\s l, o, \oc)\, .
\label{elas2}
\end{align}
This class of fields admits a~class of Lorentz-gauge potentials
with properties characterized by Proposition \ref{freeas1}
\be
A_a(x)=-\frac{1}{2\pi}\int \dot{V}_a(x\s l, l)\, \dl\, .
\label{freepot1}
\ee
$V_a(s, l)$ is a real vector function with properties of $\chi(s, l)$ of
Proposition \ref{freeas1} and such that
\be
o_{C'}V^{C'}_A(s, l)=o_A\z(s, o, \oc)\, .
\label{freepot2}
\ee

Turning now to the asymptotics of the retarded and advanced fields
$$
\f^{\mrm{ret,adv}}{}_{AB}(x) = 4\pi \int G^{\mrm{ret,adv}}(x - y)
\n_{AC'} J^{C'}_B(y)\, d^4y
$$
we observe first that the fields $\f^{\mrm{ret}}{}_{AB}(x)x^B_{A'}$ and
$\f^{\mrm{adv}}{}_{AB}(x)x^B_{A'}$ may be obtained as respectively retarded and
advanced solutions of (\ref{box2}). This is seen as follows.
Suppressing the labels "ret" or "adv" we have
\begin{equation*}
 4\pi\int G(x-y) \n_{AC'} \big(J^{C'}_B(y)y^B_{A'}\big)d^4y
 - \ex x^B_{A'}\\
 = 4\pi\int G(z)\n^{(z)}_{AC'} \big(J^{C'}_B(x-z)z^B_{A'} \big)d^4z\,.
\end{equation*}
Using the conservation law of $J_b$ and the rules for transforming spinor
into tensor expressions one has (all differentiations on $z$)
\begin{align*}
 \n_{AC'}&\big(J^{C'}_B(x-z)z^B_{A'}\big)
 = J_a(x-z) + z^B_{A'}\n_{BC'} J^{C'}_A(x-z) \\
 &= \Big( {1\over 2} z\s\n + 1\Big)J_a(x-z) +
\left( z_{[a}\n_{c]} - i e_{acbd}z^b\n^d\right) J^c(x-z)\, .
\end{align*}
As the retarded and advanced Green functions satisfy
$$
(z\s\n + 2)G(z) =0\, ,~~~(z_a\n_b - z_b\n_a)G(z)=0\, ,
$$
the above integral vanishes, which ends the proof of our statement.
This property implies that one can attach the ret/adv labels to
$\fa_{AA'}$ without risk of ambiguity.
The leading asymptotic terms can be now simply represented.
If we denote
\be
c_A (s, o, \oc) =\int \delta (s-x\s l)J^{C'}_A(x)\, d^4x\, o_{C'}\, ,
\label{elce}
\ee
then
\begin{align*}
 \lim_{R\to\infty}R\,\f^{\mrm{ret}}{}_{AB}(x+Rl)
 &= \lim_{R\to\infty}R\,\f^{\mrm{adv}}{}_{AB}(x-Rl)
 = o_{(A} \dot{c}_{B)}(x\s l, o, \oc)\, ,\\[1ex]
 \lim_{R\to\infty}R\,\fa^{\mrm{ret}}{}_{AA'}(x+Rl)
 &=\lim_{R\to\infty}R\,\fa^{\mrm{adv}}{}_{AA'}(x-Rl)
 = \D'_{A'}c_A(x\s l, o, \oc)\, .
\end{align*}
The last two equalities follow from
\begin{align*}
 \D_{A'}c_A&(s, o, \oc)= \int\left\{\delta'(s-x\s l)x^B_{A'} o_B o_{C'} J^{C'}_A(x) - \delta(s-x\s l)J_a\right\}d^4x\\[1ex]
 &=\int\delta(s-x\s l)\left(J_a + x^B_{A'}\n_{BC'}J^{C'}_A\right)d^4x =
 \int\delta(s-x\s l)\n_{AC'}\left(J^{C'}_B(x)x^B_{A'}\right)d^4x\, .
\end{align*}
As in the case of the scalar field the current is assumed such that
$\dot{c}_A(s, o, \oc)$ has the required fall-off properties (implying the
existence of limits $c_A(\pm\infty, o, \oc)$). The formulae analogous to
(\ref{sour4}) are
\begin{align*}
 &\lim_{R\to\infty}R\,\f^{\mrm{ret}}{}_{AB}(x-Rl) =
 \lim_{R\to\infty}R\,\f^{\mrm{adv}}{}_{AB}(x+Rl) = 0\, ,\\[1ex]
 &\lim_{R\to\infty}R\,\fa^{\mrm{ret}}{}_{AA'}(x-Rl) =
 \D'_{A'}c_A(-\infty , o, \oc)\, ,\\[1ex]
 &\lim_{R\to\infty}R\,\fa^{\mrm{adv}}{}_{AA'}(x+Rl) =
 \D'_{A'}c_A(+\infty , o, \oc)\, .
\end{align*}
There are further conditions on $c_A$ following from the conservation of
$J_a$, and from its reality, when this is the case (pure electrodynamics).
From the conservation law we have
$$
0= \int\delta (s-x\s l)\n_aJ^a(x)\, d^4x =
\int\delta' (s-x\s l)J^a(x)\, d^4x\, l_a\, ,
$$
that is $\dot{c}_A o^A =0$ or $c^A(s, o, \oc) o_A = Q = Q_{\mrm{el}}
- i Q_{\mrm{mag}}$. $Q_{\mrm{el}}$ and $Q_{\mrm{mag}}$ are the
electric and the magnetic charge of the field respectively. The last equation
implies also $\dot{c}_A(s, o, \oc)\propto o_A$ and
$\D_{A'}c_A(s, o, \oc)\propto o_A$. To see the consequence of reality of
$J_a$ we choose an arbitrary spinor $\io_A$ complementing $o_A$ to
a~normalized spinor basis $o_A\io^A = 1$ and decompose in the standard
null tetrad \cite{pen}
$$
\int\delta(s-x\s l) J_a(x)\, d^4x = \alpha(s, l)l_a + \beta(s, l)m_a +
\g(s, l)\bar{m}_a + Qn_a\, .
$$
If $J_a$ is real, then $\alpha(s, l)$ and $Q$ are real and $\g(s, l)=
\ov{\beta(s, l)}$. The only condition implied in this case for
$c_A(s, o, \oc) = \beta(s, l) o_A + Q\io_A$ is the reality of $Q$.
Moreover, in that case the retarded (advanced) Lorentz-gauge potentials
$A^{\mrm{ret}}{}_a(x)$ ($A^{\mrm{adv}}{}_a(x)$) have the required null
asymptotic behaviour with asymptotics characterized by
\be
c_a(s, l)=\int\delta(s-x\s l)J_a(x)\, d^4x\, .
\label{rapot}
\ee

We summarize the general field case now easily obtained as a superposition
of a free and the ret/adv fields. The necessary terms of the
electromagnetic field asymptotics are represented with the use of a spinor
function $\z_A(s, o, \oc)$ with the fall-off
\be
\left|\dot{\z}_A(s, o, \oc)\right| < \frac{\con}{|s|^{1+\e}}
\label{zfo}
\ee
for $|s|>s_t>0$,
differential properties as assumed for $\z$ above, homogeneity
\be
\z_A(\alpha\ac s, \alpha o, \ac\oc) = \alpha^{-1} \z_A(s, o, \oc)
\label{zet1}
\ee
and satisfying in addition
\be
\z^A(s, o, \oc) o_A = Q = Q_{\mrm{el}} - i Q_{\mrm{mag}}\, .
\label{zet2}
\ee
Then
\begin{align}
 &\lim_{R\to\infty}R\,\el(x + Rl) = o_A \dot{\z}_B (x\s l, o, \oc)\, ,
 \label{elas3}\\[1ex]
 &\lim_{R\to\infty}R\,\fa_{AA'}(x+Rl)
 = \D'_{A'}\z_A(x\s l, o, \oc)
 \equiv o_A \nu_{A'}(x\s l, o, \oc)\, .
\label{elas4}
\end{align}
The last identity is the definition of $\nu_{A'}$. If $J_a(x)$ is real
than there exists a~class of Lorentz-gauge potentials with null asymptotics
\be
\lim_{R\to\infty}R\, A_a(x + Rl) = V_a(x\s l, l)\, ,
\label{pot1}
\ee
where $V_a(s, l)$ has the properties of $\chi(s, l)$ of the scalar case
and satisfies
\be
o_{C'}V^{C'}_A(s, l)= \z_A(s, o, \oc)\, .
\label{pot2}
\ee

Past null asymptotics
are similarly given by another function $\z'_A(s, o, \oc)$ with the same
properties. As in the scalar case there is
\be
\z_A(-\infty, o, \oc) = \z'_A(+\infty, o, \oc)\, .
\label{zet3}
\ee
The future null asymptotic of the free outgoing field is given by
\be
\z_A(s, o, \oc) - \z_A(+\infty, o, \oc)
\equiv o_A \z^{\mrm{out}}(s, o, \oc)\, ,
\label{zet4}
\ee
which is the definition of $\z^{\mrm{out}}(s, o, \oc)$ at the same time.
Similarly, the past null asymptotic of the incoming field is supplied by
\be
\z'_A(s, o, \oc) - \z'_A(-\infty, o, \oc)
\equiv o_A \z'^{\mrm{in}}(s, o, \oc)\, .
\label{zet5}
\ee
One observes that the asymptotically relevant (needed for determination of the
radiated angular momentum, as will be seen later)
information on the asymptotics of
the electromagnetic field is not fully contained in the free outgoing or
incoming fields.
The remaining terms
$$
\z_A(+\infty, o, \oc) = c_A(+\infty, o, \oc)
= \lim_{s\to +\infty}\int \delta (s-x\s l)J^{C'}_A(x)\, d^4x\, o_{C'}
$$
and
$$
\z'_A(-\infty, o, \oc) = c_A(-\infty, o, \oc)
= \lim_{s\to -\infty}\int \delta (s-x\s l)J^{C'}_A(x)\, d^4x\, o_{C'}
$$
are connected with the Coulomb fields of the outgoing and incoming
currents respectively.

The physical significance of the limit values $\z_A(\pm\infty, o, \oc)$ and
$\z'_A(\pm\infty, o, \oc)$ is revealed by considering the spacelike limit
of the electromagnetic field. For a free field (\ref{frel1}) one obtains
by the method used already in the scalar case
\be
\lim_{R\to\infty} R^2 \f^{\mrm{free}}{}_{AB}(a+Ry) = {1\over 2\pi}
\int \delta'(y\s l) o_A o_B\z(-\infty, o, \oc)\, \dl
\label{espas1}
\ee
for any point $a$ and spacelike vector $y$. For a general field we use a trick,
which will be useful also in the next section. Decompose the field $\el$
into the retarded and free outgoing fields $\el = \f^{\mrm{ret}}{}_{AB}[J] +
\f^{\mrm{out}}{}_{AB}$, where the source $J_a$, according to our earlier
assumptions, has finite extension in spacelike directions.
Choose an arbitrary point $a$ and a
time-axis through $a$ in the direction of a unit timelike, future-pointing
vector $t$. For real positive $c$ denote
by $\Co_s^{\mrm{fut}}(-c)$ the solid future lightcone with the vertex
in $a - ct$, and by $\Co_s^{\mrm{past}}(c)$ the solid past lightcone
with the vertex in $a + ct$. Choose $c$ such that $J_b =0$ in
$R(c) \equiv\mathbf{\mrm{M}} \setminus \left\{\Co_s^{\mrm{fut}}(-c)\cup
\Co_s^{\mrm{past}}(c)\right\}$. The retarded field is not influenced in
$R(c)$ by the values of the current in $\Co_s^{\mrm{fut}}(-c)$. We make
advantage of this fact to replace $J_b$ by a current $J'_b$ which is
identical with $J_b$ in the past of $\Co_s^{\mrm{fut}}(-c)$ but to
the future of $\Co_s^{\mrm{past}}(c)$ represents a point charge $Q$
(possibly both electric and magnetic) sitting on the time-axis.
This is always possible, since the charge is the only characteristic of
a current which cannot be deformed without violation of the continuity
equation. Thus in the region $R(c)$ we can write
$\el = \f^{\mrm{ret}}{}_{AB}[J'] + \f^{\mrm{out}}{}_{AB}$.
However, if $J_b$ belongs to the class of admitted
currents, so does $J'_b$, and the radiated field $\f^{\mrm{rad}}{}_{AB}[J'] =
\f^{\mrm{ret}}{}_{AB}[J'] - \f^{\mrm{adv}}{}_{AB}[J']$ is an admissible free field.
On the other hand $\f^{\mrm{adv}}{}_{AB}[J']$ is identical in $R(c)$ with
the Coulomb field of a point charge $Q$ sitting on the time-axis.
Summarizing, the field $\el$ can be represented in $R(c)$ by
$\el = \f^Q{}_{AB} + \f^{\mrm{free}}{}_{AB}$, where $\f^Q{}_{AB}$ represents this Coulomb
field and $\f^{\mrm{free}}{}_{AB}$ is a free field. The region $R(c)$ is large
enough for this representation to be used for determination of
(i) future null asymptotics for $s<s_1$, for some $s_1$, in $t$-gauge;
(ii) past null asymptotics for $s>s_2$, for some $s_2$, in $t$-gauge;
(iii) spacelike asymptotics from the point $a$.

For the (generalized -- with possible magnetic charge) Coulomb field
\be
\f^Q{}_{AB}(x) =
{Q\over \left(\left[ (x-a)\s t\right]^2  - (x-a)^2\right)^{3/2} }\,
t^{C'}_{(A} (x-a)_{B)C'}
\label{Coul}
\ee
one has
$$
\lim_{R\to\infty} R^2 \f^Q{}_{AB}(a+Ry) =
{Q\over \left( (y\s t)^2  - y^2\right)^{3/2} }\, t^{C'}_{(A} y_{B)C'}\, .
$$
The null asymptotics are most easily found with the use of (\ref{elce})
\be
\z^Q{}_A(s, o, \oc) = \z'^Q{}_A(s, o, \oc) = {Q\over t\s l}\, t^{C'}_A o_{C'}\, .
\label{nullC}
\ee
Using the identity $\dsp \int\delta(y\s l)\,{\dl\over t\s l} =
{2\pi\over \left( (y\s t)^2  - y^2\right)^{1/2}}$ we find the relation
$$
{1\over 2\pi}\int\delta'(y\s l) o_{(A}\z^Q_{B)} (-\infty, o, \oc)\, \dl =
\lim_{R\to\infty} R^2 \f^Q{}_{AB} (a+Ry)\, .
$$

Comparison with the free field case and the use of the trick described above
allow us to write in general case for any point $a$ and spacelike
\mbox{vector $y$}
\be
\lim_{R\to\infty} R^2 \el(a+Ry) = {1\over 2\pi}
\int \delta'(y\s l) o_{(A} \z_{B)}(-\infty, o, \oc)\, \dl\, .
\label{espas2}
\ee
Formulae (\ref{espas1}) and (\ref{espas2}) furnish the required interpretation
of the limit values of asymptotics. The limit value $\z_A(-\infty, o, \oc) =
\z'_A(+\infty, o, \oc)$ describes the long-range degrees of freedom of the
total field; $\z^{\mrm{out}}(-\infty, o, \oc)$ and
$\z'^{\mrm{in}}(+\infty, o, \oc)$ furnish the characterization of the
asymptotic infrared degrees of freedom of the outgoing and incoming free
fields respectively; $\z_A(+\infty, o, \oc)$ and $\z'_A(-\infty, o, \oc)$
describe the asymptotics of the Coulomb field of the outgoing and incoming
asymptotic currents respectively.

Further transformation of these long-range variables will prove useful.
From the homogeneity (\ref{zet1}) it follows $\dsp
o^{A'} \D_{A'} \z_A (s, o, \oc) + s\, \dot{\z}_A(s, o, \oc) =0$.
For the limit points $s=\pm\infty$ we obtain
$\dsp o^{A'} \D_{A'} \z_A (\pm\infty, o, \oc)=0$, or
$\D_{A'} \z_A (\pm\infty, o, \oc) \propto l_a$ (as at the same time
$o^A\D_{A'} \z_A (s, o, \oc)=0$ for all $s$). The same holds true for
$\z'_A(\pm\infty, o, \oc)$. The r.h. sides of the following equations
introduce new variables
\begin{align}
 \D_{A'}\z_A(+\infty, o, \oc) &= - l_a q(o, \oc)\, ,
\label{q}\\[1ex]
 \D_{A'}\z'_A(-\infty, o, \oc) &= - l_a q'(o, \oc)\, ,
\label{q'}\\[1ex]
 \D_{A'}\left(\z_A(-\infty, o, \oc) - \z_A(+\infty, o, \oc)\right)
 &= o_A\D_{A'}\z^{\mrm{out}}(-\infty, o, \oc) = - l_a \si(o, \oc)\, ,
\label{si}\\[1ex]
 \D_{A'}\left(\z'_A(+\infty, o, \oc) - \z'_A(-\infty, o, \oc)\right)
 &= o_A\D_{A'}\z'^{\mrm{in}}(+\infty, o, \oc) = - l_a \si'(o, \oc)\, .
\label{si'}
\end{align}
As a consequence of (\ref{zet3}) one has a constraint
\be
q+\si = q'+\si'\, .
\label{invar}
\ee
All of the new variables are spinor functions of the homogeneity
\be
f(\alpha o, \ac\oc) = (\alpha\ac)^{-2} f(o, \oc)\, ,
\label{hom}
\ee
where $f$ stands for any of $q$, $q'$, $\si$ or $\si'$. Moreover they satisfy
\begin{align}
 \frac{1}{2\pi}\int q(l)\, \dl &= \frac{1}{2\pi}\int q'(l)\, \dl = Q\, ,
 \label{mq}\\[1ex]
 \frac{1}{2\pi}\int \si(l)\, \dl &= \frac{1}{2\pi}\int \si'(l)\, \dl=0\, .
 \label{ms}
\end{align}
One calculates these means by
contracting (\ref{q})--(\ref{si'}) with a timelike, unit, future-pointing
vector, integrating
by parts and using (\ref{zet2}), e.g. for $q$ one has
\[
 -{1\over 2\pi}\int t^a\D_{A'}\z_A(+\infty, o, \oc)\,
{\dl\over t\s l} = -{1\over 4\pi}\int o^A\z_A(+\infty, o, \oc) \,
{\dl\over (t\s l)^2} = Q.
\]

Conversely, the conditions (\ref{hom}), (\ref{mq}) and (\ref{ms}) are the
only ones following from (\ref{q}) - (\ref{si'}) and the functions $q$, $q'$,
$\si$ and $\si'$ satisfying them determine the long-range variables
$\z_A(+\infty, o, \oc)$, $\z'_A(-\infty, o, \oc)$,
$\z^{\mrm{out}}(-\infty, o, \oc)$ and
$\z'^{\mrm{in}}(+\infty, o, \oc)$ uniquely. For the last two of them this
follows directly from (\ref{swh2}). To prove the statement for
the other two, we choose a vector $t^a$ and denote
$\io^A = (t\s l)^{-1} t^{AA'} o_{A'}$. Then
$\z_A(+\infty, o, \oc) = Q\io_A + \io^B\z_B(+\infty, o, \oc) o_A$
and (\ref{q}) is equivalent to
\[
 \D_{A'}\Big( \io^A\z_A(+\infty, o, \oc)\Big) =
-o_{A'} \Big( q(o, \oc) - {Q\over 2(t\s l)^2}\Big).
\]
The proof now ends
as for $\si$'s. Similarly for the primed quantities.

The vanishing of means (\ref{ms}) implies also that there exist homogeneous
functions of degree zero
\be
\F(\alpha o, \ac\oc) = \F(o, \oc)\, , ~~~\F'(\alpha o, \ac\oc) = \F'(o, \oc)\, ,
\label{Fi1}
\ee
such that
\be
\D_A\D_{A'}\F(o, \oc) = l_a\, \si(o, \oc)\, , ~~
\D_A\D_{A'}\F'(o, \oc) = l_a\, \si'(o, \oc)\, .
\label{Fi2}
\ee
These conditions determine $\F$'s up to additive constants. In view of
(\ref{si}) and (\ref{si'}) eqs (\ref{Fi2}) are equivalent to
\be
\D_A\F(o, \oc) = -o_A \z^{\mrm{out}}(-\infty, o, \oc)\, ,~~
\D_A\F'(o, \oc) = -o_A \z'^{\mrm{in}}(+\infty, o, \oc)\, .
\label{Fi3}
\ee
Physical meaning of the additive constant in $\F$ (and $\F'$) comes from the
fact, that gauges of potentials can be divided into equivalence classes,
each class remaining in one to one correspondence with a choice of this
constant. Classes differ by their infrared contributions to
symplectic form; see ref. \cite{her}.

With the use of the variables $q$,\ldots,$\si'$ further insight into the
meaning of (\ref{espas2}) is possible. We observe first that the result of
(\ref{espas2}), considered as a function of $y$, is a free electromagnetic
field in the region $y^2 < 0$, which is homogeneous of degree~$-2$. We
denote this field $\f^{\mrm{l.r.}}{}_{AB}(y)$ (l.r. standing for long-range).
Using the identity \mbox{$\dsp o_A = o^C\e_{CA} = {2\over y^2}
o^C y_{CD'}y^{D'}_A$} we can represent $\dsp \delta'(y\s l) o_A =
{2\over y^2} y^{D'}_A\D_{D'} \delta(y\s l)$. Setting this into (\ref{espas2})
and integrating by parts we get
\be
\f^{\mrm{l.r.}}{}_{AB}(y) = y^{C'}_{(A} K_{B)C'}(y)\, ,
\label{long}
\ee
where
\be
K_a(y) = {1\over 2\pi y^2} \n_a\int\sgn(y\s l)(q+\si)(o, \oc)\, \dl\, .
\label{Ka}
\ee
The tensor form of equation (\ref{long}) gives the electromagnetic field
$F_{ab}^{\mrm{l.r.}}$ corresponding to the spinor $\f^{\mrm{l.r.}}{}_{AB}(y)$
\be
F^{\mrm{l.r.}}{}_{ab} = F^{\mrm{l.r.E}}{}_{ab} +
F^{\mrm{l.r.M}}{}_{ab}\, ,
\label{lEM}
\ee
where the fields on the r.h. side are determined by
\begin{align}
 F^{\mrm{l.r.E}}{}_{ab}(y) &= \mrm{Re}K_a(y)y_b -
 \mrm{Re}K_b(y)y_a\, ,\label{lE}\\[1ex]
 {}^*F^{\mrm{l.r.M}}{}_{ab}(y) &= \mrm{Im}K_a(y)y_b - \mrm{Im}K_b(y)y_a\, .
\label{lM}
\end{align}
The above equations imply $^*F^{\mrm{l.r.E}}{}_{ab}(y)y^b =0$ and
$F^{\mrm{l.r.M}}{}_{ab}(y)y^b =0$. This can be interpreted as follows:
in any Minkowski frame the radial components of the magnetic part of the field
$F^{\mrm{l.r.E}}{}_{ab}$ and of the electric part of the field
$F^{\mrm{l.r.M}}{}_{ab}$ vanish. Equivalently formulated:
$F^{\mrm{l.r.E}}{}_{ab}$ has no magnetic multipole contributions and
$F^{\mrm{l.r.M}}{}_{ab}$ has no electric multipole contributions in any
Minkowski frame. Accordingly, $F^{\mrm{l.r.E}}{}_{ab}$ and
$F^{\mrm{l.r.M}}{}_{ab}$ will be called the electric- and the magnetic-type
long-range fields respectively (cf. \cite{sta89}).
The whole above discussion applies also to
the long-range part of free asymptotic fields ($q+\si$ replaced by $\si$ or
$\si'$ in (\ref{Ka})) and of the Coulomb fields of the asymptotic currents
($q+\si$ replaced by $q$ or $q'$). The real (imaginary) parts of
$q$,\ldots,$\si'$ describe the electric (magnetic) parts of the respective
fields.

We end this section by testing our assumptions on admissible currents
in two cases: a system of charged point particles and a free Dirac field.
A~system of $N$ point charges moving along trajectories
$z_{\mrm{i}}{}^a(\tau)$, $\mrm{i}=1,\ldots,N$, each
parametrized by its proper time, corresponds to an obviously spacelike
finitely extended current density
\be
J^a(x) = \sum_{\mrm{i}=1}^N Q_{\mrm{i}}\int\delta(x-z_{\mrm{i}}(\tau))\,
v_{\mrm{i}}{}^a(\tau)\, d\tau\, ,
\label{point1}
\ee
where $\dsp v^a(\tau)\equiv {dz^a\over d\tau}(\tau)$. The asymptotic
characteristic (\ref{elce}) is easily obtained
\be
c^A(s, o, \oc) = \sum_{\mrm{i}=1}^N \left[ {Q_{\mrm{i}}\over v_{\mrm{i}}
(\tau)\s l}
v_{\mrm{i}}{}^{AA'}(\tau)o_{A'} \right]
\Big|_{\textstyle \tau : s=z_{\mrm{i}}(\tau)\s l}\, .
\label{point2}
\ee
If the asymptotic behaviour of four-velocities satisfies
$\dsp \left|\frac{dv^a}{d\tau}(\tau)\right| <
\frac{\con}{|\tau|^{1+\e}}$ for $|\tau|\to \infty$, then $c^A(s, o, \oc)$ has
limits
\be
c^A(\pm\infty, o, \oc) = \sum_{\mrm{i}=1}^N {Q_{\mrm{i}}\over
v_{\mrm{i}}(\pm\infty)\s l}
v_{\mrm{i}}{}^{AA'}(\pm\infty)o_{A'}
\label{point3}
\ee
and the assumptions on the fall-off of
$|c^A(s, o, \oc) - c^A(\pm\infty, o, \oc)|$ for $s\to\pm\infty$ are satisfied.
The class of thus admitted asymptotic motions includes the typical
behaviour of the Coulomb scattering, where $\e=1$. From (\ref{point3})
we get
\begin{equation}\label{point4}
\dsp q(o, \oc) =
\sum_{\mrm{i}=1}^N {Q_{\mrm{i}}\over 2(v_{\mrm{i}}(+\infty)\s l)^2}\, ,\qquad
\dsp q'(o, \oc) =
\sum_{\mrm{i}=1}^N {Q_{\mrm{i}}\over 2(v_{\mrm{i}}(-\infty)\s l)^2}\, .
\end{equation}
If no magnetic monopoles are present, the $q$'s are real.

For the discussion of a free Dirac field we use its Fourier representation in
the following form
\be
\psi(x) = \left({m\over 2\pi}\right)^{3/2}\int
\left(e^{\textstyle -imx\s v} P_+f(v) - e^{\textstyle +imx\s v} P_-f(v)\right)
\, \m(v)\, ,
\label{Di1}
\ee
in the notation explained in the first paragraph of Section \ref{sdir}. If we
assume that $f(v)$ is an infinitely differentiable function of compact support,
then $\psi(x)$ is a~regular wave packet \cite{ree}, so it is a function of fast
decrease in all spacelike and lightlike directions \cite{ree}. The current
density has infinite extension in spacelike directions,
but its exponential fall-off is sufficient
for the extension of our results to this case.

The definition (\ref{elce}) can be rewritten in the Fourier-transformed form
\be
\tilde{c}^A(\w, o, \oc) = \hat{J}^{AA'}(\w l)o_{A'}\, ,
\label{cefour}
\ee
with the conventions introduced in (\ref{confour1}, \ref{confour2}).
From (\ref{Di1}) we get
\begin{align*}
 \hat{\psi}(mv) &= \left({2\pi\over m}\right)^{3/2}{2\over m}\delta(v^2-1)
 \left(\theta(v^0)P_+f(v) - \theta(-v^0)P_-f(-v)\right)\\[1ex]
 &\equiv \left({2\pi\over m}\right)^{3/2}{2\over m}\delta(v^2-1) h(v)\, ,
\label{Di2}
\end{align*}
so
\begin{align*}
 \hat{J}^a(mu) &= {em^4\over (2\pi)^3} \int\ba{\hat{\psi}}
 \lp m\lp r-{u\over 2}\rp \rp\,
 \g^a\,\hat{\psi}\lp m\lp r+{u\over 2}\rp\rp\, d^4r \\[1ex]
 &= {2e\over m}\int\delta(u\s r)\delta\left( r^2+{u^2\over 4}-1\right)
 \ba{h}\left( r-{u\over 2}\right)\g^a h\left( r+{u\over 2}\right)\, d^4r\, .
\end{align*}
Hence
\be
\hat{J}^a (\w l) = e\delta(\w)\int v^a \fb\g\s v f(v){\m(v)\over v\s l}\, .
\label{Di3}
\ee
The asymptotic characteristic $c_A(s, o, \oc)$ does not depend on $s$, as was
to be expected (a free charged field sends no radiation field)
\be
c^A(s, o, \oc) = e\int v^{AA'} \fb\g\s v f(v)\,{\m(v)\over v\s l}\, o_{A'}\, .
\label{Di4}
\ee
Not only the charge of the field
\be
Q = c^Ao_A = e\int\fb\g\s v f(v)\, \m(v)
\label{Di5}
\ee
but also the asymptotic variables
\be
q(o, \oc)=q'(o, \oc)= e\int\fb\g\s v f(v)\,{\m(v)\over 2(v\s l)^2}
\label{Di6}
\ee
are obviously real. The long-range electromagnetic field produced by
a free Dirac field is therefore of purely electric type
\be
F^{\mrm{l.r.}}{}_{ab} = e\int\fb\g\s v f(v)
{y_av_b - y_bv_a\over \left((v\s y)^2-y^2\right)^{3/2}}\,\m(v)\, .
\label{Di7}
\ee

The absence of magnetic-type long-range fields is a typical feature of the
scattering processes involving no magnetic monopoles. To produce a~long-range
magnetic-type field without the use of magnetic monopoles one would need
an asymptotic current of infinitely increasing magnetic multipoles,
a~magnetic dipole linearly growing with time giving the simplest possibility.
As the infrared singular free fields are typically produced as radiation
fields of some scattering processes they also yield the long-range fields
of electric type only.

\setcounter{equation}{0}
\setcounter{pr}{0}
\section{Energy-momentum and angular momentum tensor of the asymptotic
electromagnetic field}
\label{srad}

Consider now a closed dynamical system, part of which constitute the
(generalized) Maxwell equations (\ref{max1}) with the current $J_a$
satisfying the assumptions of the previous section. Finite spacelike
extension of $J_a$, which we assume for simplicity, could be replaced
by some fast decrease condition, more appropriate in the case where
$J_a$ is due to some charged massive field. Suppose further that the
system is equipped with a locally conserved, symmetric energy-momentum
tensor $T_{ab}$,
which outside the electromagnetic sources reduces to the usual symmetric
electromagnetic tensor $T^{\mrm{elm}}{}_{ab} = -{1\over 4\pi}
\big(F_{ac}F_b{}^c - {1\over 4}g_{ab}F_{cd}F^{cd}\big)$, and the amount of
energy-momentum and angular momentum passing through a hypersurface $\Sc$
is given as usual respectively by
\begin{align}
 P_a[\Sc] &= \int_\Sc T_{ac}(x)\, d\si^c(x)\, ,\label{em}\\[1ex]
 M_{ab}[\Sc] &= \int_\Sc \lp x_aT_{bc}(x) - x_bT_{ac}(x)\rp\, d\si^c(x)\, .
\label{amt}
\end{align}
In the context of electromagnetic fields it proves convenient to use the spinor
version of equation (\ref{amt}). If the symmetric angular momentum spinor
$\mu_{AB}$ is defined by
\be
M_{ab} = \mu_{AB}\e_{A'B'} + \ba{\mu}_{A'B'}\e_{AB}\, ,
\label{amts}
\ee
then (\ref{amt}) is equivalent to
\be
\mu_{AB}[\Sc] = \int_\Sc \mu_{ABc}(x)\, d\si^c(x)\, ,
\label{ams1}
\ee
where
\be
\mu_{ABc}(x) = x_{D'(A}T^{D'}_{B)c}(x)\, .
\label{ams2}
\ee

We want to consider now the total energy-momentum and angular momentum of
the system and express these quantities in terms of asymptotic fields.
The usual straightforward expressions, obtained by setting for $\Sc$ in
(\ref{em}) and (\ref{amt}) a Cauchy surface $\Sigma$, are not appropriate for
our purpose for two reasons.

(i) The asymptotic electromagnetic fields discussed in the preceding section
are defined in the null asymptotic region, whereas one should expect the
asymptotic massive fields to be defined in timelike asymptotic regions
(we shall return to this question in Section \ref{sdir}). This physical
picture suggests that separation of the contributions to the conserved
quantities could be possible. We shall see that this is almost true, the
reservation representing a physically important term in the total angular
momentum involving long-range Coulomb and infrared degrees of freedom.
For the demonstration of this separation the Cauchy surface
integration is not well suited, as this surface contains the whole
information on the system, even if it is pushed to infinite past or future.

(ii) In the case of angular momentum an even more serious obstacle arises:
the integrand of (\ref{amt}) is not absolutely integrable for a Cauchy
surface, so, strictly speaking, the integral does not exist. This is
easily seen from the spacelike asymptotic behavior of electromagnetic field,
discussed in the preceding section. On a hypersurface $x^0=\con$ the field is
$O(|\vec{x}|^{-2})$, so the integrand is $O(|\vec{x}|^{-3})$, while
the measure is $d^3x$.

With our purpose in mind we consider first the energy-momentum and the angular
momentum radiated with the electromagnetic field into future null directions.
Let $a$ be a point vector of arbitrary point in Minkowski space and $t$
a timelike, unit, future-pointing vector.
Choose the line $a+\tau\, t$, $\tau\in\mR$, as the
time-axis of the origin of $3$-space orthogonal to $t$. Consider the timelike
tube given by $x= a+\tau\, t+R\, l$, $R=\con$, $\tau\in\mR$, $l$
going over the set of all future null vectors in the $t$-gauge $t\s l=1$.
The energy-momentum and the angular momentum passing
through a bounded portion of
this tube are given by
\be
\int_B\rho_c(a+\tau\, t+R\, l)(t^c-l^c)\, d\tau\,R^2d\W_t(l)\, ,
\label{tube}
\ee
where $\rho_c = T_{ac}$ for energy-momentum and $\rho_c=\mu_{ABc}$ for
angular momentum spinor, and integration extends over a bounded interval
of retarded time $T$ and a solid angle $\Theta$ of $l$-directions. The limit
of the above expressions when $R\to\infty$,
if it exists, gives the respective quantities
radiated into the solid angle $\Theta$ over the time-span $T$. More general
bounded measurable sets of integration $B$ are possible. For sufficiently
large $R$ we move into the region where sources vanish and $T_{ab}=
T^{\mrm{elm}}{}_{ab}$. In the spinor language the latter takes the form
\be
T^{\mrm{elm}}{}_{ab}(x) = {1\over 2\pi} \fc_{A'B'}(x)\ex\, ,
\label{emt}
\ee
which yields
\be
\mu^{\mrm{elm}}{}_{ABc}(x) =
-{1\over 2\pi}\fac_{C'(A}(x)\f_{B)C}(x)\, ,
\label{ams3}
\ee
with $\fa_{AA'}$ defined in (\ref{fa}).
We see now, that for the energy-momentum and the angular momentum radiated
over finite time intervals to be well defined, both $\f_{AB}(x)$ and
$\fa_{AA'}(x)$ must have the null asymptotics of the assumed type.
Then
\begin{align}
 \lim_{R\to\infty} R^2\, T^{\mrm{elm}}{}_{ac}(a+\tau\, t+R\, l)
 &= {1\over 2\pi}\ba{\dot{\z}}_{A'}\dot{\z}_A(a\s l +\tau, o, \oc)\, l_c\, ,
 \label{rad1}\\[1ex]
 \lim_{R\to\infty} R^2 \,\mu^{\mrm{elm}}{}_{ABc}(a+\tau\, t+R\, l)
 &= -{1\over 2\pi}\ba{\nu}_{(A}\dot{\z}_{B)}(a\s l +\tau, o, \oc)\, l_c\, ,
\label{rad2}
\end{align}
where (\ref{elas3}) and (\ref{elas4}) have been used (in $t$-gauge). This
justifies our assumptions on the asymptotic behaviour of electromagnetic
field (Section \ref{selm}). Using the trick described in Section \ref{selm}
after eq.(\ref{espas1}) and a bound analogous to that following
eq.(\ref{freeas2}) one easily shows that for large $R$ the quantities under
the limits on the l.h. sides of (\ref{rad1}) and (\ref{rad2}) are
bounded by $\con R^{-1}$ on any bounded set $B$. The limits may be thus
performed under the integral sign in (\ref{tube}), which yields the
radiated quantities
\begin{align}
 P^{\mrm{out-n}}{}_a[B] &= {1\over 2\pi}\int_B
 \ba{\dot{\z}}_{A'}\dot{\z}_A(a\s l +\tau, o, \oc)\, d\tau\, d\W_t(l)\, ,
  \label{rad3}\\[1ex]
 \mu^{\mrm{out-n}}{}_{AB}[B] &= -{1\over 2\pi}\int_B
 \ba{\nu}_{(A}\dot{\z}_{B)}(a\s l +\tau, o, \oc)\, d\tau\, d\W_t(l)\, ,
 \label{rad4}
\end{align}
n standing for null.
The fall-off of asymptotics is sufficient for the integrals in (\ref{rad3})
and (\ref{rad4}) to be absolutely integrable over any measurable set $B$,
not necessarily bounded. Thus extension of the range of integration $B$ to
all times and full solid angle is possible. The result is
\begin{align}
 P^{\mrm{out-n}}{}_a &= {1\over 2\pi}\int
 \ba{\dot{\z}}_{A'}\dot{\z}_A(s, o, \oc)\, ds\, \dl\, ,
 \label{rad5}\\[1ex]
 \mu^{\mrm{out-n}}{}_{AB} &= -{1\over 2\pi}\int
 \ba{\nu}_{(A}\dot{\z}_{B)}(s, o, \oc)\, ds\, \dl\, ,
  \label{rad6}
\end{align}
where no assumption on the gauge of spinors is needed any more and any
reference to the time-axis (the point vector $a$ and the vector $t$)
has been lost. Thus the total radiated quantities are unambiguously defined.

In the next step we turn to the energy-momentum and angular momentum going
out with the massive part of the system in timelike directions. We choose
again the time-axis $a+\tau\, t$, fix a time parameter $\tau=\tau_1$ and
consider the future lightcone $\Co^{\mrm{fut}}(\tau_1)$ with the
vertex in $a+\tau_1\, t$. The amount of energy-momentum and angular
momentum passing through that cone is that contained in the system when
the quantities radiated prior to $\tau=\tau_1$ are disregarded.
For this interpretation to make sense the appropriate integrals over the cone
should be absolutely convergent. That this is the case here,
can be seen with the use of the trick
described in the preceding section, the estimates for a free field discussed
in \ref{apb} and properties of the Coulomb field (\ref{Coul}). In this way
we obtain quantities $P_a[\Co^{\mrm{fut}}(\tau_1)]$ and
$\mu_{AB}[\Co^{\mrm{fut}}(\tau_1)]$. (We have tacitly assumed that there
are no nonintegrable singularities in the region of nonvanishing sources.)
Due to the assumed local
conservation of the energy-momentum tensor the difference
of quantities calculated on two different cones $\Co^{\mrm{fut}}(\tau_2)$
and $\Co^{\mrm{fut}}(\tau_1)$ is given by the integrals (\ref{rad3})
and (\ref{rad4}), with $B=|\tau_2-\tau_1|\times\{\mrm{full~solid~angle}\}$.
The convergence of integrals (\ref{rad5}) and (\ref{rad6}) implies now
the existence of the limits
\begin{align}
 P^{\mrm{out-t}}{}_a&\equiv\lim_{\tau\to\infty}
 P_a[\Co^{\mrm{fut}}(\tau)]\, ,
 \label{outm1}\\[1ex]
 \mu^{\mrm{out-t}}{}_{AB}&\equiv\lim_{\tau\to\infty}
 \mu_{AB}[\Co^{\mrm{fut}}(\tau)]\, ,
 \label{outm2}
\end{align}
t standing for timelike. By a similar argument, with the use of two different
time-axis, the quantities thus obtained are independent of the choice of
the axis. Moreover, instead of lightcones any timelike hypersurfaces tending
to them asymptotically can be used. In the case of free electromagnetic field
one should expect that all energy-momentum and angular momentum are
radiated into null directions. This is indeed the case, as shown in \ref{apb},
i.e. we have
\be
 P^{\mrm{out-t}}{}_a(\mrm{free})=0,\qquad
 \mu^{\mrm{out-t}}{}_{AB}(\mrm{free})=0\, .
\label{freefut}
\ee
In this way we are led to unambiguous interpretation of (\ref{outm1}) and
(\ref{outm2}) as quantities going out with the massive part of the system.
For an explicit representation in terms of dynamical asymptotic variables
one needs more detailed knowledge of the system. We discuss this question
in the following sections.

The preceding discussion strongly suggests the identification of the total
energy-momentum and angular momentum of the system by
\begin{align}
 P^{\mrm{tot}}{}_a &= P^{\mrm{out-n}}{}_a + P^{\mrm{out-t}}{}_a\, ,
 \label{tot1}\\[1ex]
 \mu^{\mrm{tot}}{}_{AB} &= \mu^{\mrm{out-n}}{}_{AB}
 + \mu^{\mrm{out-t}}{}_{AB}\, .
\label{tot2}
\end{align}
Two points in this connection have to be clarified.\\
(i) The connection of (\ref{tot1}) and (\ref{tot2}) to the quantities
obtained by the Cauchy surface integration, if it can be performed,
should be understood. \\
(ii) The picture lying at the base of our discussion can be reflected
in time, with the subsequent change of orientation of hypersurfaces.
The radiated quantities are then replaced by the respective quantities
incoming from the past null directions, given by
\begin{align}
 P^{\mrm{in-n}}{}_a &= {1\over 2\pi}\int
 \ba{\dot{\z'}}_{A'}\dot{\z'}_A(s, o, \oc)\, ds\, \dl\, ,
 \label{innu1}\\[1ex]
 \mu^{\mrm{in-n}}{}_{AB} &= -{1\over 2\pi}\int
 \ba{\nu'}_{(A}\dot{\z'}_{B)}(s, o, \oc)\, ds\, \dl\, .
\label{innu2}
\end{align}
Similarly, the past timelike limits of integrals over past lightcones,
$P^{\mrm{in-t}}{}_a$ and $\mu^{\mrm{in-t}}{}_{AB}$, replace
(\ref{outm1}) and (\ref{outm2}) respectively, and again
\be
P^{\mrm{in-t}}{}_a(\mrm{free})=0,
~~\mu^{\mrm{in-t}}{}_{AB}(\mrm{free})=0\, .
\label{freepast}
\ee
For the consistence of physical
interpretation formulae (\ref{tot1}) and (\ref{tot2}) should yield the
same results with in-quantities on the r.h. sides.

To clarify the above raised points consider the situation depicted in Fig.1.
We choose an arbitrary spacelike hyperplane $\Sigma$ and a time-axis with
the unit vector $t$ orthogonal to $\Sigma$,
crossing the plane at the point $a$.
$\Co^{\mrm{fut}}(\tau)$, $\tau>0$, is the future lightcone with the vertex in
$a+\tau\, t$; $\Co'^{\mrm{fut}}(-r)$, $r>0$, is the unbounded portion of
the future lightcone with the vertex in $a-rt$, cut off by the plane
$\Sigma$. $\Co^{\mrm{past}}(-\tau)$ and $\Co'^{\mrm{past}}(r)$ are
obtained from the former two cones by reflection with respect to $\Sigma$.
$\Sigma(r)$ is the portion of $\Sigma$ (a ball) closing the cut of
$\Co'^{\mrm{fut}}(-r)$ and $\Co'^{\mrm{past}}(r)$. We consider
conservation of energy-momentum and angular momentum for two infinite regions:
the first contained between $\Co^{\mrm{fut}}(\tau)$,
$\Co'^{\mrm{fut}}(-r)$ and $\Sigma(r)$, the second contained between
$\Co^{\mrm{past}}(-\tau)$, $\Co'^{\mrm{past}}(r)$
and $\Sigma(r)$. Taking the limits $\tau\to\infty$ and $r\to\infty$ we arrive
at
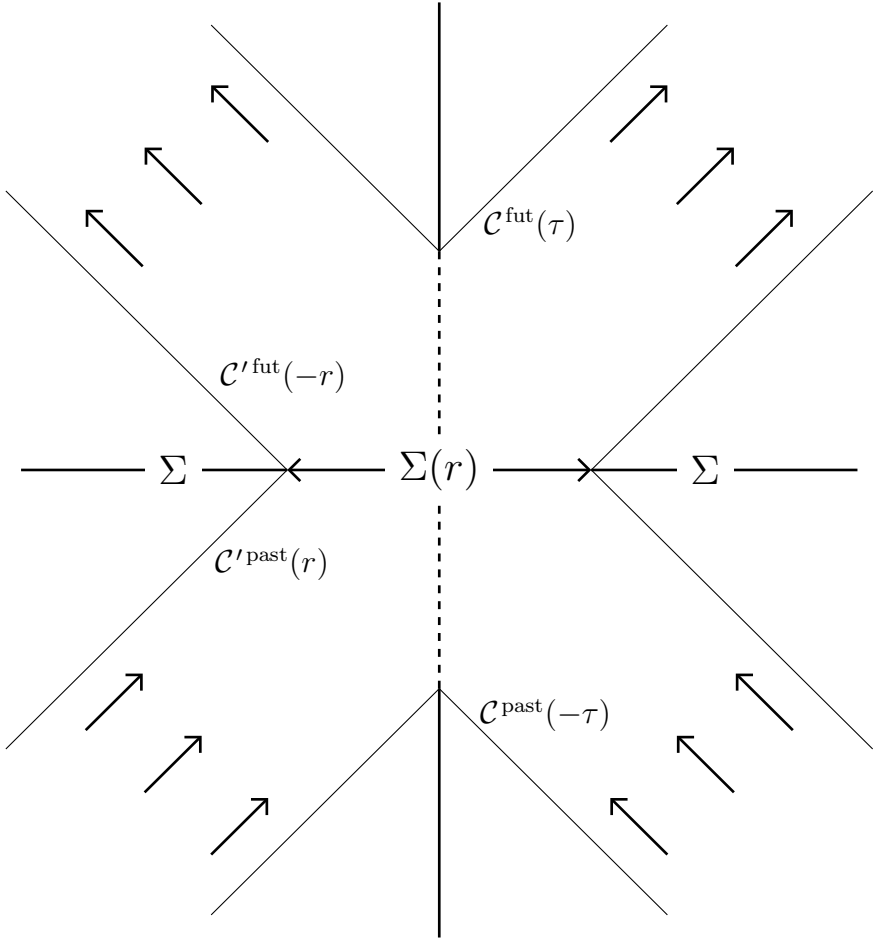
\begin{figure}

  \label{fig:-01}

  \centering

  \begin{tikzpicture}

    \draw[axis line] (-5.5,0) -- (-2,0);

    \draw[axis arrow] (0,0) -- (-2,0);

    \draw[axis arrow] (0,0) -- (2,0);

    \draw[axis line] (2,0) -- (5.5,0);

    \draw[axis line] (0,-6.2) -- (0,-2.9);

    \draw[axis line,dashed] (0,-2.9) -- (0,2.9);

    \draw[axis line] (0,2.9) -- (0,6.2);

    \node[scale=1.3,fill=white] at (0,0) {$\Sigma( r )$};

    \node[scale=1.3,fill=white] at (-3.5,0) {$\Sigma$};

    \node[scale=1.3,fill=white] at (3.5,0) {$\Sigma$};


    \draw (-5.7,3.7) -- (-2,0) -- (-5.7,-3.7);

    \draw (-3,5.9) -- (0,2.9) -- (3,5.9);

    \draw (5.7,3.7) -- (2,0) -- (5.7,-3.7);

    \draw (-3,-5.9) -- (0,-2.9) -- (3,-5.9);


    \draw[pointing arrow] (-3.9,2.7) -- (-4.65,3.45);

    \draw[pointing arrow] (-3.125,3.525) -- (-3.875,4.275);

    \draw[pointing arrow] (-2.25,4.35) -- (-3,5.1);

    \node at (-2.05,1.25)
    {$\mathcal{C}\hspace{0.07em}'^{ \hspace{0.14em} \textrm{fut} }( -r )$};

    \draw[pointing arrow] (3.9,2.7) -- (4.65,3.45);

    \draw[pointing arrow] (3.125,3.525) -- (3.875,4.275);

    \draw[pointing arrow] (2.25,4.35) -- (3,5.1);

    \node at (1.2,3.25) {$\mathcal{C}^{ \hspace{0.09em} \textrm{fut} }( \tau )$};

    \draw[pointing arrow] (-4.65,-3.45) -- (-3.9,-2.7);

    \draw[pointing arrow] (-3.875,-4.275) -- (-3.125,-3.525);

    \draw[pointing arrow] (-3,-5.1) -- (-2.25,-4.35);

    \node at (-2.2,-1.25)
    {$\mathcal{C}\hspace{0.07em}'^{ \hspace{0.12em} \textrm{past} }( r )$};

    \draw[pointing arrow] (4.65,-3.45) -- (3.9,-2.7);

    \draw[pointing arrow] (3.875,-4.275) -- (3.125,-3.525);

    \draw[pointing arrow] (3,-5.1) -- (2.25,-4.35);

    \node at (1.4,-3.25) {$\mathcal{C}^{ \hspace{0.09em} \textrm{past} }( -\tau )$};

  \end{tikzpicture}

  \caption{The choice of hypersurfaces for the derivation of Eqs (\ref{out1})
and (\ref{in1}) (two spacelike dimensions are suppressed).}

\end{figure}
\begin{align}
 G^{\mrm{out-n}} + G^{\mrm{out-t}} &=\lim_{r\to\infty}
 \lp G\/[\Sigma(r)] + G\/[\Co'^{\mrm{fut}}(-r)]\rp\, ,\label{out1}\\[1ex]
 G^{\mrm{in-n}} + G^{\mrm{in-t}} &=\lim_{r\to\infty}
 \lp G\/[\Sigma(r)] + G\/[\Co'^{\mrm{past}}(r)]\rp\, ,\label{in1}
\end{align}
where $G$ stands for $P_a$ or $\mu_{AB}$. Conventions of hypersurface
orientations are such, that positive direction of crossing the surface is
from past to future. The limits on the r.h. sides of equations
(\ref{out1}) and (\ref{in1}) exist, since the l.h. sides exist.
We shall show that those limits exist indeed for each term on the r.h.
sides separately and the following explicit formulae hold true
\begin{align}
 &\ \lim_{r\to\infty} P_a[\Co'^{\mrm{fut}}(-r)]
 =\lim_{r\to\infty} P_a[\Co'^{\mrm{past}}(r)] =0 \, ,
\label{reste}\\[1ex]
 &\begin{aligned}
 \lim_{r\to\infty} \mu_{AB}[\Co'^{\mrm{fut}}(-r)]
 &= -\lim_{r\to\infty} \mu_{AB}[\Co'^{\mrm{past}}(r)] \\
 &= {1\over 4\pi}
 \int\ba{\nu}_{(A}\z_{B)}(-\infty, o, \oc)\, \dl\, .
\end{aligned}
\label{resta}
\end{align}
It is easy to see, that if these formulae are shown for the special choice
of the point $a=0$, then they remain true for other time-axes, so we
consider this special case.
We observe first that for sufficiently large $r$ both
$\Co'^{\mrm{fut}}(-r)$ and $\Co'^{\mrm{past}}(r)$ enter the region where
we can use the trick of the preceding section. Contributions of the
quadratic free field terms, Coulomb terms and mixed terms can be considered
separately. For the free field one easily shows with the use of
Lemma \ref{lemsi} that
\begin{align}
 \lim_{r\to\infty} P_a[\Sigma(r)]\Big|_{\mrm{free}}
 &=P^{\mrm{out-n}}{}_a\Big|_{\mrm{free}} =
 P^{\mrm{in-n}}{}_a\Big|_{\mrm{free}}\, , \label{emcf}\\[1ex]
 \lim_{r\to\infty} \mu_{AB}[\Sigma(r)]\Big|_{\mrm{free}}
 &= \frac{1}{2}\lp \mu^{\mrm{out-n}}{}_{AB} + \mu^{\mrm{in-n}}{}_{AB}\rp
 \Big|_{\mrm{free}} \, . \label{amcf}
\end{align}
Setting this into (\ref{out1}) and (\ref{in1}), taking into account
(\ref{freefut}) and (\ref{freepast}) and using (\ref{rad5}), (\ref{rad6}),
(\ref{innu1}) and (\ref{innu2}) for free fields one arrives at
(\ref{reste}) and (\ref{resta}). For mixed term one shows by a direct
calculation demonstrated in \ref{apb} that (\ref{reste}) holds and
the r.h. side of (\ref{resta}) takes the required form
\[
{1\over 4\pi} \int\lp\ba{\nu^Q}_{(A}\z^{\mrm{free}}{}_{B)}
(-\infty, o, \oc) + \ba{\nu^{\mrm{free}}}_{(A}\z^Q{}_{B)}
(-\infty, o, \oc)\rp\,\dl.
\]
For the Coulomb terms all the terms in (\ref{reste}) and (\ref{resta})
vanish. This ends the proof of (\ref{reste}) and (\ref{resta}).

We return to the physical interpretation. Consistent identification of the
total energy-momentum by
\be
P^{\mrm{in}}{}_a = P^{\mrm{out}}{}_a = P_a[\Sigma]
\label{emtot}
\ee
is always correct due to (\ref{reste}); on the r.h. side the proper integral
replaces the limit, as the integrand is absolutely integrable. For angular
momentum we have to impose a (Poincar\'e covariant) condition on the
long-range variables
\be
{1\over 4\pi} \int\ba{\nu}_{(A}\z_{B)}(-\infty, o, \oc) =0
\label{exist}
\ee
to be able to conclude
\be
\mu^{\mrm{in}}{}_{AB} = \mu^{\mrm{out}}{}_{AB} =
\lim_{r\to\infty}\mu_{AB}[\Sigma(r)]\, .
\label{amtot}
\ee
Crucial for the interpretation is the first equality. The Cauchy integral is
not absolutely integrable, but the second equality gives its finite
regularization, which, however, has no independent direct physical
justification. (The mechanism of this regularization is the antisymmetry
of the leading asymptotic term of the integrand with respect to reflection
of 3-space $\Sigma$ -- this is easily seen from (\ref{long}) and (\ref{Ka}).)
In the absence of the condition (\ref{exist}) no well founded identification
of angular momentum seems to be possible -- angular momentum leaks out into
the spacelike infinity.
Condition (\ref{exist}) imposes constraints on the long-range field
(\ref{long}). We recall that $o_A\nu_{A'}(-\infty, o, \oc)=
\D_{A'}\z_A(-\infty, o, \oc)= -(q+\si)\, l_a$, and decompose
$\z_A(-\infty, o, \oc)$ and $\nu_{A'}(-\infty, o, \oc)$ into their
electric- and magnetic-type parts:
\begin{align*}
 o_A\nu^{\mrm{E}}{}_{A'}(-\infty, o, \oc) &=
 \D_{A'}\z^{\mrm{E}}{}_A(-\infty, o, \oc)= - \mrm{Re}(q+\si)l_a\,,\\[1ex]
 \text{and}\qquad o_A\nu^{\mrm{M}}{}_{A'}(-\infty, o, \oc) &=
 \D_{A'}\z^{\mrm{M}}{}_A(-\infty, o, \oc)= - i\mrm{Im}(q+\si)l_a.
\end{align*}
Then also
\[
 o_{A'}\ba{\nu^{\mrm{E}}}_A(-\infty, o, \oc) =
 \D_{A'}\z^{\mrm{E}}{}_A(-\infty, o, \oc)\quad \text{but}\quad
 o_{A'}\ba{\nu^{\mrm{M}}}_A(-\infty, o, \oc) =
 -\D_{A'}\z^{\mrm{M}}{}_A(-\infty, o, \oc).
\]
Choose an arbitrary timelike, unit, future-pointing vector $t$
and contract the two last equations
with the associated spinor $\io^{A'}$. Setting the result into (\ref{exist})
and integrating by parts we have
$$
\int\io^{C'}\D_{C'}\z^{\mrm{E}}{}_{(A}
\z^{\mrm{M}}{}_{B)}(-\infty, o, \oc)\, \dl =0\, .
$$
The electric and the magnetic parts are independent,
and we know that electric-type
fields may be present. Therefore we demand that magnetic-type fields do not
occur in the theory. This condition, as shown in the preceding section,
is satisfied in the known situations of scattering phenomena. On the other
hand, for the field of freely moving charged (possibly both electrically
and magnetically) particles the l.h. side of (\ref{exist}) takes the form
$\dsp \sum_{\mrm{i}<\mrm{k}} \lp Q_{\mrm{i}}\ba{Q_{\mrm{k}}}
- Q_{\mrm{k}}\ba{Q_{\mrm{i}}}\rp h(v_{\mrm{i}}\s v_{\mrm{k}})\,
v_{\mrm{i}}{}_{C'(A}v_{\mrm{k}}{}^{C'}_{B)}$,
where $h$ is a real function (use (\ref{point3})
as asymptotic). This vanishes identically only if the ratios of the magnetic to
the electric charge are equal for all particles. This excludes presence of
magnetic charges, as pure electric charges have to be admitted. Finally,
we extend the condition of no magnetic part also to free fields, and
assume accordingly from now on
\be
\ba{q} = q\, ,~~\ba{q'} = q'\, ,~~\ba{\si} = \si\, ,~~
\ba{\si'} = \si'\, ,~~\ba{\F} = \F \, ,~~\ba{\F'} = \F'\, .
\label{real}
\ee

With the knowledge gained on the long-range degrees of freedom we turn again
to the quantities radiated into or coming from null directions. With the usual
definitions of outgoing and incoming free fields ($\f^{\mrm{out}}{}_{AB} =
\el - \f^{\mrm{ret}}{}_{AB}$, $\f^{\mrm{in}}{}_{AB} = \el
- \f^{\mrm{adv}}{}_{AB}$)
the asymptotics split according to (\ref{zet4}), (\ref{zet5}). Using these
splittings in (\ref{rad5}), (\ref{rad6}), (\ref{innu1}) and (\ref{innu2})
we obtain
\begin{eqnarray}
&&P^{\mrm{out-n}}{}_a = {1\over 2\pi}\int l_a\, \ba{\dot{\z}^{\mrm{out}}}
\dot{\z}^{\mrm{out}}(s, o, \oc)\, ds\, \dl\, ,
\label{split1}\\
&&\begin{array}{@{}l}
\dsp \mu^{\mrm{out-n}}{}_{AB} = -{1\over 2\pi}\int o_{(A}\D_{B)}
\ba{\z^{\mrm{out}}}\dot{\z}^{\mrm{out}}(s, o, \oc)\, ds\, \dl \\
\dsp \hspace*{50mm} + {1\over 2\pi}\int q o_{(A}\D_{B)}
\F(o, \oc)\, \dl\, ,
\end{array}
\label{split2}\\
&&P^{\mrm{in-n}}{}_a = {1\over 2\pi}\int l_a\,\ba{\dot{\z'}^{\mrm{in}}}
\dot{\z'}^{\mrm{in}}(s, o, \oc)\, ds\, \dl\, ,
\label{split3}\\
&&\begin{array}{@{}l}
\dsp \mu^{\mrm{in-n}}{}_{AB} = -{1\over 2\pi}\int o_{(A}\D_{B)}
\ba{\z'^{\mrm{in}}}\dot{\z'}^{\mrm{in}}(s, o, \oc)\, ds\, \dl \\
\dsp \hspace*{50mm} - {1\over 2\pi}\int q' o_{(A}\D_{B)}
\F'(o, \oc)\, \dl\, ,
\end{array}
\label{split4}
\end{eqnarray}
The first terms on the r.h. sides of all above equations are the pure free
field quantities. However, there are additional angular momentum contributions
due to the long-range tail of the electromagnetic field. These terms mix free
electromagnetic field characteristics $\si$, $\si'$ with the Coulomb
characteristics of asymptotic currents $q$, $q'$. To illustrate a possible
observational consequence of these additional terms we present a very
heuristic argument based on a guess on possible asymptotic states of
the theory.

Suppose that the timelike in-asymptotic of a scattering process is
characterized by a single massive spinless particle, carrying charge $Q$,
energy-momentum $P^{\mrm{in-t}}{}_a = m\, v_a$ and angular momentum\\
\mbox{$M^{\mrm{in-t}}{}_{ab} = m\, \lp y^{\mrm{in}}{}_a v_b -
y^{\mrm{in}}{}_b v_a \rp$}, where $y^{\mrm{in}}{}_a$ is a point-vector of
any point on the trajectory of the particle. Suppose further that the dynamics
of the theory supports an "adiabatic limit" characterized by the following
statements on the scattering states. We assume that the electromagnetic
in-field is infinitely low-energetic, that is
\[
 \int t\s l\big|\dot{\z}'^{\mrm{in}}(s, o, \oc)\big|^2 ds\, \dl
\to 0,
\]
with the infrared characteristic $\si'$, however, remaining finite.
We guess that there is then no particle production, no energy transfer,
so $P^{\mrm{out-t}}{}_a = m\, v_a$, and no radiation field, so the free
electromagnetic out-field is identical with the in-field. In consequence
there is $\si = \si'$ and $\dsp q=q' = \frac{Q}{2(v\s l)^2}$.
The scattering
process, nevertheless, is not completely trivial, since according to
(\ref{amtot}), (\ref{split2}) and (\ref{split4}) there is an angular
momentum change of the particle
\begin{eqnarray*}
&&\mu^{\mrm{out-t}}{}_{AB} - \mu^{\mrm{in-t}}{}_{AB} =
-{1\over \pi}\int q o_{(A}\D_{B)}\F(o, \oc)\, \dl \\
&&\hspace*{1cm}= {1\over \pi}\int \F o_{(A}\D_{B)}q(o, \oc)\, \dl =
{Q\over \pi}\int o_{(A} v_{B)}^{C'} o_{C'}\F(o, \oc)\,
\frac{\dl}{(v\s l)^3}\, ,
\end{eqnarray*}
or in the tensor language
$$
M^{\mrm{out-t}}{}_{ab} = m\lp y^{\mrm{out}}{}_a v_b -
y^{\mrm{out}}{}_b v_a \rp \, ,
$$
where
\be
y^{\mrm{out}}{}_a = y^{\mrm{in}}{}_a + \frac{Q}{\pi m} \int l_a
\F(o, \oc)\, \frac{\dl}{(v\s l)^3}\, ,
\label{shift}
\ee
determined up to a multiple of $v_a$ (this freedom corresponding to
$\F\to\F +\con$).
The effect of the scattering is thus an adiabatic translation of the
trajectory of the particle. This kind of effect will not show up in the
usual scattering cross-section measurements. To get an idea about the size
of the effect let us assume that the infrared characteristic of the free
incoming field is that of a field radiated by a charge $Q_0$ if it
changes its 4-velocity from $u_1$ to $u_2$. In that case
\[
 \z'^{\mrm{free}}{}_A(+\infty, o, \oc) = Q_0 \lp
\frac{u_2{}_{AC'}}{u_2\s l} - \frac{u_1{}_{AC'}}{u_1\s l}\rp o^{C'},
\]
so that up to an arbitrary constant $\dsp \F(o, \oc) = Q_0
\ln\frac{u_1\s l}{u_2\s l}$. One calculates then
\[
 \Delta y\equiv y^{\mrm{out}} - y^{\mrm{in}} =
\frac{QQ_0}{m}\lp f(\mrm{arcosh} u_2\s v)u_2 -
f(\mrm{arcosh} u_1\s v)u_1\rp,
\]
where $\dsp f(\beta) =
\frac{\sinh\beta\cosh\beta -\beta}{\sinh^3\beta}$. Assume for simplicity the following
experimental arrangement: in the laboratory system the energy of the
particle producing the incoming free field
remains constant and its $3$-velocity
is adiabatically reflected $\vec{u}_2 = - \vec{u}_1$; the test particle is
almost at rest in laboratory, so we neglect $|\vec{v}|$; both particles
are taken to be "electrons" (but with spin neglected).
Then $\Delta y_a$ is a translation in the $3$-space
of the laboratory and \mbox{$\dsp \|\Delta y_a\|=  r_{\mrm{cl}}
\frac{\sinh \xi\cosh\xi - \xi}{\sinh^2\xi}$}, where $\cosh\xi=u_2\s v=u_1\s v$
and $\dsp r_{\mrm{cl}}=\frac{e^2}{m}
\approx 2.8\times 10^{-13}\mrm{cm}$ is
"the classical radius of electron". The maximum value of the displacement
is of the order of $r_{\mrm{cl}}$, which is not too impressive. However,
the effect cumulates by multiple sending of identical incoming fields.

The first calculation of an observable effect produced by a free zero
frequency field is due to Staruszkiewicz \cite{sta}. He calculates, in the
quasiclassical approximation, the change of the phase of the wave function
of a particle in an external electromagnetic field. The plane wave
$e^{\textstyle -imv\s x}$ ($v$ is a four-velocity) undergoes in the complete
process of scattering by the field (\ref{freepot1}) the change of phase
$$
\delta(v) = -\frac{Q}{2\pi}\int\frac{v\s V(-\infty, l)}{v\s l}\, \dl
$$
(this is eq. (6) of \cite{sta} in our notation). Every gauge of $V_a(-\infty, l)$
can be represented by $V_a(-\infty, l) = \D_A h_{A'}(o, \oc) +
\D_{A'} \bar{h}_A(o, \oc)$ with some choice of the function $h_{A'}$
satisfying $h_{A'}(\alpha o, \ac\oc) = \ac^{-1} h_{A'}(o, \oc)$,
$h_{A'}(o, \oc) o^{A'} = \F(o, \oc)$ (cf. \cite{her}). Using this representation
and integrating by parts we obtain
\be
\delta(v) = -\frac{Q}{2\pi}\int\frac{\F(l)}{(v\s l)^2}\, \dl\, .
\label{fasta}
\ee
If a wave packet is formed, then this phase induces exactly the shift of the
trajectory given by (\ref{shift}).

\setcounter{equation}{0}
\setcounter{pr}{0}
\section{Dirac equation in the forward lightcone}
\label{sdir}

Our aim in this section is to reformulate the Dirac equation for an electron in
electromagnetic field as an evolution equation on the set of hyperboloids
$x^2 = \la^2$, $x^0 >0$, with $\la$ taking the role of evolution
parameter. We show
next that, under certain assumptions, scattering states exist. The class of
admitted potentials includes those Coulomb-like as well, if an appropriate
gauge transformation is performed.

To fix our notation we rewrite some standard facts about free Dirac field.
The Cauchy problem for the free Dirac equation
$$
( i\g\s\n - m) \p (x) = 0
$$
is solved by
\be
\p(x) = \int_\Sc S(x-y)\g^a\p(y)\,d\sigma_a(y)\, ,
\label{cauchy}
\ee
where $\Sc$ is a spacelike hypersurface and the Fourier representation of
$S(x)$ can be written as
$$
S(x) =\lp{m\over 2\pi}\rp^3 \int e^{\textstyle {-im\,x \s v\,\g\s v}}
\g\s v\, \m(v)\, ,
$$
where $\m(v) = 2\delta(v^2 -1) \theta (v^0)\, d^4 v$ is the invariant measure on
the unit hyperboloid. If $\Sc$ is not a Cauchy surface, then $\p(x)$ is
still uniquely determined by (\ref{cauchy}) in the
domain of causal dependence of ${\cal S}$.
Similarly the Fourier representation of the free Dirac
field can be written as
\be
\p(x) =\lp{m\over 2\pi}\rp^{3/2} \int e^{{\textstyle-im\,x \s v\,\g\s v}}
\g\s v\, f(v)\,\m(v)\,  ,
\label{freed}
\ee
with $f$ some complex 4-component function on the unit hyperboloid.
If we set $x=\la z$, with $z^2=1$, $z^0>0$ then the
leading asymptotic term when
$\la\rightarrow\infty$ is
$$
\p(\la z)\, \sim\, -i\,\la^{-3/2}
e^{{\textstyle -i( m\la + \pi /4)\g\s z}} f(z)\, .
$$
To see this one only has to observe that
${\dsp e^{{\textstyle -i\alpha\g\s v}} = e^{{\textstyle-i\alpha}} P_+(v) +
e^{{\textstyle+i\alpha}} P_-(v)}$, with \linebreak
\mbox{$P_\pm(v) = {1\over 2 }(1\pm \g\s v)$}, and use the standard stationary phase
method. We note for later use that $P_\pm^2 = P_\pm$, $P_+ P_- = P_- P_+ = 0$,
$P_+ + P_- = 1$. The above asymptotic behaviour of free field will
guide us to the reformulation mentioned at the beginning of this section.

We start with some geometric preliminaries. Let $x=\la z$, with
\mbox{$z^2=1$,} $z^0
>0$, and let $\n_a$ denote the flat derivative with respect to $x^a$.
We denote $\delta_a = \la (\n_a - z_a\ewl)$. $\delta_a$~is the derivative in the
directions tangent to the hyperboloid, and
${\dsp \lb\delta_a,\ewl \rb = 0}$. Moreover,
$\delta_a z^b = h_a^b$, where $h_a^b = g_a^b - z_a z^b$ is the projection tensor.
Every vector (and tensor) can be decomposed according to $\xi^a =
z^a \xi\s z + \xi_\T{}^a$, $~\xi_\T{}^a z_a =0$. In particular, the algebra of the
Dirac matrices is given by
\begin{eqnarray*}
(\g\s z)^2 &=&1,\\
\g\s z\, \g_\T{}^a + \g_\T{}^a\,\g\s z&=&0,\\
\g_\T{}^a\,\g_\T{}^b + \g_\T{}^b\,\g_\T{}^a&=&2h^{ab}.\\
\end{eqnarray*}
For all differentiable functions which fall off fast enough for the surface
term in the Stokes theorem
$$
0 = \int_{x^2 = 1}\n_c\{(x^cg_a^b - x^bg_a^c) f(x)\}\, d\sigma_b(x)
$$
to vanish, one has the integral identity involving only the $\delta_a$ -
derivative:
\be
\int(\delta_a - 3 z_a) f(z)\,\m(z) = 0.
\label{idi}
\ee

The Dirac equation
$$
\left[\g\s(i\n - eA(x)) - m\right] \p (x) = 0
$$
written in terms of the variables $\la$ and $z^a$ reads now
\be
i\ewl\, \chi (\la,z) = \{-\la^{-1} \g\s z\,\g_\T\s p + m\g\s z + e\g\s z\,
\g\s A(\la z)\}\chi(\la, z)\, ,
\label{dirac}
\ee
where $\chi(\la, z) = \la^{3/2} \p (\la z)$, and the operator
\be
p_a = i\Big(\delta_a + {1\over 2}\g\s z\, \g_\T{}_a -
{3\over 2} z_a\Big)
\label{p}
\ee
has been introduced. The conserved current of the Dirac equation is now
$\bp (x)\g^a\p(x) = \la^{-3} \bc\g^a\ch(\la, z)$, which, when integrated over
the hyperboloid $x^2=\la^2$, $x^0>0$,
gives the conserved quantity $\int\ov{\ch(\la, z)}\g\s z\,\ch(\la, z)
\,\m(z)$; bar over a $4$-component spinor function
denotes the usual Dirac conjugation. The integrand is
easily shown to be\linebreak $\dsp\frac{1}{z^0}(\ch_+^\dagger \ch_+
+ \ch_-^\dagger \ch_-)$, where the
dagger denotes the matrix hermitian conjugation and $\ch_\pm = P_\pm \ch$.
The quantity is thus positive definite, which suggests the precise formulation
of the problem as a unitary evolution in the Hilbert space $\Hc$ of the
equivalence classes of $\mathrm{\mathbf C}^4$ - valued functions on
the unit hyperboloid
$z^2 = 1$, $z^0 > 0$, with the scalar product
\be
(g, f) = \int \ov{g(z)}\g\s z f(z)\,\m(z)\, .
\label{scalar}
\ee
(We note that this cannot be achieved by a simple evolution-independent
unitary transformation within the usual formulation on hypersurface of constant
time $x^0$, as the change to the hyperboloid mixes the space and time aspects.)
Special classes (dense in $\Hc$)
of such functions such as k-times continuously differentiable
functions of compact support $C_0^k$ and the Schwartz test functions $\Sc$ are
defined as those $f(z^0, \vec{z})$ for which the respective properties hold
for $f(\sqrt{1 + \vec{z}^2}, \vec{z})$ with respect to $\vec{z}$; the
identification is time-axis independent. In the Hilbert space $\Hc$
the operator of multiplication by $\g\s z$ is easily seen to be a self-adjoint
unitary operator and $P_\pm$ become projection operators.
The operators $i\g_\T^a$, and $p_a$ defined in (\ref{p}), are not
bounded, but they are symmetric on each of the
special class of functions mentioned above.

The discussion of the free field case is best carried through with the use
of Fourier-type transformation on the unit hyperboloid. For functions in
$C_0^\infty $ we define two integral transformations
\begin{align}
 F_\kappa f(u)&=\lp{\kappa\over 2\pi}\rp^{3/2}\int e^{{\textstyle -i\kappa u\s z\, \g\s z}}\g\s z\,f(z)\, \m(z)\, ,\label{four1}\\[1ex]
 F_\kappa^* f(u)&=\lp{\kappa\over 2\pi}\rp^{3/2}\int e^{{\textstyle +i\kappa u\s z\,\g\s u}}\g\s z\,f(z)\, \m(z)\, .\label{four2}
\end{align}
By $F_\kappa f(u)$ and $F_\kappa^* f(u)$ we mean functions defined on
the unit hyperboloid, but the above integrals are valid outside the hyperboloid
as well.

\begin{pr}
$F_\kappa$ and $F_\kappa^*$ are isometric operators from $C_0^\infty$ into $\Sc$,
so they both can be extended to isometries of $\Hc$.
$F_\kappa$ and $F_\kappa^*$ are then mutually conjugated, hence they are unitary.
\label{four3}
\end{pr}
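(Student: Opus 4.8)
The plan is to prove three things in turn: that $F_\kappa$ and $F_\kappa^*$ land in $\Sc$, that each is norm preserving, and that they are Hilbert-space adjoints of one another; unitarity then follows formally. For the mapping property I would use the plane-wave decomposition $e^{\mp i\kappa u\s z\,\g\s z} = e^{\mp i\kappa u\s z}P_+(z) + e^{\pm i\kappa u\s z}P_-(z)$, which splits $F_\kappa f(u)$ into a positive- and a negative-frequency Fourier integral over the hyperboloid of the compactly supported smooth density $\g\s z\, f(z)$. Each such integral is smooth in $u$ (differentiation under the integral sign being legitimate by compact support) and, since for $z$ in the fixed compact support the phase $\kappa\, u\s z$ has tangential gradient growing with $|u|$, repeated non-stationary-phase integration by parts yields decay faster than any power of $|u|$. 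Thus $F_\kappa f\in\Sc$, which is precisely the statement that regular wave packets are of fast decrease, cf.\ \cite{ree}; the same applies to $F_\kappa^*$. Since $C_0^\infty$ is dense in $\Hc$, once the isometry is in hand both operators extend to isometries of $\Hc$.

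For the isometry I would avoid computing the coincidence limit of the composed kernel and instead exploit two facts already established: the conservation along the hyperboloids $x^2=\la^2$ of the Dirac charge $\int\ov{\ch(\la,z)}\,\g\s z\,\ch(\la,z)\,\m(z)$, and the stationary-phase asymptotics $\ch(\la,z) = \la^{3/2}\p(\la z)\sim -i\,e^{-i(m\la+\pi/4)\g\s z}f(z)$. A direct comparison of (\ref{freed}) with (\ref{four1}) shows $\ch(\la,\cdot) = F_{m\la}f$, so $(F_{m\la}f, F_{m\la}f)$ equals the conserved charge and is independent of $\la$. Evaluating its $\la\to\infty$ value from the asymptotics, and using $\ov{(-i\, e^{-i\theta\g\s z}f)} = i\,\ov{f}\, e^{+i\theta\g\s z}$ together with $e^{i\theta\g\s z}\,\g\s z\, e^{-i\theta\g\s z} = \g\s z$ for real $\theta$, the integrand collapses to $\ov{f}(z)\,\g\s z\, f(z)$, so the charge equals $(f,f)$. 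As $\kappa=m\la$ sweeps all of $(0,\infty)$, this gives $(F_\kappa f, F_\kappa f) = (f,f)$ for every $\kappa>0$.

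That the two transforms are mutually conjugate I would check by a direct manipulation. Taking the Dirac conjugate of $F_\kappa g$ and using $\ov{\g\s z} = \g\s z$ and $\ov{e^{-i\kappa u\s z\,\g\s z}} = e^{+i\kappa u\s z\,\g\s z}$ gives $\ov{F_\kappa g(u)} = \lp\kappa/2\pi\rp^{3/2}\int\ov{g(z)}\,\g\s z\, e^{+i\kappa u\s z\,\g\s z}\,\m(z)$. Inserting this into $(F_\kappa g, f) = \int\ov{F_\kappa g(u)}\,\g\s u\, f(u)\,\m(u)$ and interchanging the two (compactly supported, hence absolutely convergent) integrations, the inner $u$-integral is recognized as $(F_\kappa^* f)(z)$, so $(F_\kappa g, f) = (g, F_\kappa^* f)$, i.e.\ $F_\kappa^* = F_\kappa^{\h}$. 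The isometry of $F_\kappa$ then polarizes to $F_\kappa^{\h} F_\kappa = F_\kappa^* F_\kappa = \mrm{id}$, and, together with the isometry of $F_\kappa^*$ giving $F_\kappa F_\kappa^{\h} = \mrm{id}$, this makes $F_\kappa$ (and hence $F_\kappa^*$) unitary.

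The step needing genuine care—the \emph{main obstacle}—is the passage to the limit in the conserved charge: one must justify exchanging $\la\to\infty$ with the $z$-integration. For $f\in C_0^\infty$ this is handled by dominated convergence, using that $\ch(\la,z)$ is uniformly bounded and that, for $z$ outside a neighbourhood of $\mrm{supp}\,f$, the phase is non-stationary so $\ch(\la,z)$ decays rapidly and uniformly, while the subleading terms of the stationary-phase expansion are $O(\la^{-1})$ and drop out. The norm preservation of $F_\kappa^*$—equivalently the surjectivity of $F_\kappa$ that upgrades isometry to unitarity—requires the same asymptotic analysis applied to the conjugate (negative-frequency) kernel of $F_\kappa^*$, where the stationary point $z=u$ again produces a purely unitary phase factor in $\g\s u$.
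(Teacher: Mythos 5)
Your first three steps are sound, and the middle one is genuinely different from the paper's. The mapping property via non-stationary phase is a direct proof of what the paper simply cites from \cite{ree}; the mutual-adjointness computation is the same as the paper's closing step. For the isometry of $F_\kappa$ you replace the paper's argument (current conservation used to slide the integral from the hyperboloid to the flat surface $u^0=0$, then ordinary Plancherel) by the identification $F_{m\la}f=\ch(\la,\cdot)$, constancy of the hyperboloid charge in $\la$, and stationary-phase evaluation at $\la\to\infty$; granted the uniform phase estimates you describe, this works and is well in the spirit of Section \ref{sdir}.

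The genuine gap is the isometry of $F_\kappa^*$, which you dispose of in one sentence ("the same asymptotic analysis applied to the conjugate kernel"), but the mechanism that made your argument work for $F_\kappa$ is absent for $F_\kappa^*$. Your proof for $F_\kappa$ rested on the fact that, as $\la$ varies, $F_{m\la}f$ is the family of hyperboloid restrictions of one fixed solution of the free Dirac equation, so $\|F_{m\la}f\|$ is $\la$-independent by charge conservation and can be evaluated at $\la\to\infty$. For $F_\kappa^*$ the $\kappa$-dependence is of the opposite kind: by (\ref{difk}) one has $-i\,\frac{d}{d\kappa}F_\kappa^* f = F_\kappa^* H_\kappa f$, with the generator standing to the \emph{right} of $F_\kappa^*$; the family $\kappa\mapsto F_\kappa^* f$ is not the restriction of any fixed solution to varying hyperboloids, and no conservation law makes $\|F_\kappa^* f\|$ constant in $\kappa$ --- such constancy is essentially what has to be proved. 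Stationary phase applied to the kernel of $F_\kappa^*$ yields only $\|F_\kappa^* f\|\to\|f\|$ as $\kappa\to\infty$ (this is in substance Lemma \ref{ast}, which the paper proves \emph{after} unitarity), not equality at a fixed $\kappa$. Nor can you bypass the point: the relation $F_\kappa^* F_\kappa=\mrm{id}$, which you do have, only says that $F_\kappa^*$ is isometric on the closed range of $F_\kappa$; unitarity still requires surjectivity of $F_\kappa$, i.e.\ injectivity of $F_\kappa^*$, which is exactly what is missing. The paper closes precisely this gap by a separate device which you would need (or an equivalent): assuming $\mrm{supp}\,f\subset\{z^0<a\}$, extend $f$ to a solution $k$ of $(i\,\g\s\n-\kappa)k=0$ with data on a suitably flattened deformation of the hyperboloid, observe that $\n^{(z)}_a\big(e^{\textstyle i\kappa u\s z\,\g\s u}\g^a k(z)\big)=0$ because $(\g\s u)^2=u^2=1$, and use this divergence identity together with finite propagation speed to slide the surface of integration in (\ref{four2}) down to $z^0=0$, where $\|F_\kappa^* f\|$ becomes an ordinary Fourier--Plancherel norm and equals $\|f\|$ by current conservation for $k$.
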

\begin{proof}
If $f\in C_0^\infty$ then $F_\kappa f(u)$ and $F_\kappa^* f(u)$ are infinitely
differentiable. Denote $\chi(u)$ the extension of $F_\kappa f(u)$ outside the
hyperboloid. $\chi(u)$ is a~regular wave packet, in the sense of ref.
\cite{ree}, so it vanishes rapidly between (and on) the surfaces $u^0=0$ and
$u^2=1$, $u^0>0$, in particular $F_\kappa f\in\Sc$. Moreover, the Dirac equation
$(i\,\g\s\n - \kappa )\chi(u)=0$ is satisfied. Hence, the current conservation
gives
$$
\int \ov{F_\kappa f(u)}\g\s u F_\kappa f(u)\,\m(u) =
\int \chi^\dagger(0, \vec{u})\,\chi(0, \vec{u})\,d^3 u.
$$
By the usual Fourier transform properties the r.h. side is easily transformed into
$$
\int {d^3z\over (z^0)^2}\lp f_+^\dagger(z)f_+(z) +
f_-^\dagger(z)f_-(z) -
f_+^\dagger(z)f_-(z^0, -\vec{z}) - f_-^\dagger(z)f_+(z^0, -\vec{z})\rp,
$$
which, after some manipulation with projectors $P_\pm$, gives
$\int \ov{f(z)}\,\g\s z\, f(z)\, \m(z)$.
This shows that $F_\kappa:C_0^\infty
\to\Sc$ isometrically, so it extends to isometry of $\Hc$. To prove the same
result for $F_\kappa^* f$ we assume that the support of $f$ lies in $z^0<a$.
For $z^0>a+\e$ deform the hyperboloid smoothly in such a way, that for large
$|\vec{z}|$ it tends to $z^0=a+2\e$, and regard $f(z)$ as initial data
on this surface for $k(z)$ satisfying \mbox{$(i\,\g\s\n -\kappa)\, k(z) = 0$}.
Then $k(z)$ has compact support between this surface and $z^0=0$, and
moreover
\mbox{$\n_a^{(z)}\big(e^{\textstyle i\kappa u\s z \g\s u} \g^a k(z)\big) = 0$.}
Therefore, changing the surface of integration in (\ref{four2})
to $z^0=0$ one obtains
$F_\kappa^* f(u)= P_+(u)\, G(\vec{u}) + P_-(u)\, G(-\vec{u})$, where
$$
G(\vec{u})= \lp\kappa\over 2\pi\rp^{3/2}\int
e^{{\textstyle -i\kappa\vec{u}\s\vec{z}}}
\g^0 k(0, \vec{z})\, d^3z\, .
$$
Hence $F_\kappa^* f(u)$ is a function of
fast decrease and one finds
$$
\int\ov{F_\kappa^* f(u)}\,\g\s u\,
F_\kappa^* f(u)\,\m(u) = \int G^\dagger (\vec{u})G(\vec{u})\, d^3 u\, .
$$
This, by
standard Fourier transformation properties and then by current conservation
for $k(u)$, is again $\|f\|^2$. Finally, one easily proves $(F_\kappa f, g)=
(f, F_\kappa^* g)$ for $f, g\in C_0^\infty$, which extends to $\Hc$. This ends the proof.
\end{proof}

The operator $U_0(\la_2, \la_1) = F_{m\la_2} F_{m\la_1}^*$ can be now
identified as the evolution operator of the free Dirac field in $\Hc$.
Indeed, the following proposition holds.

\begin{pr}
The families of operators $F_\kappa$, $F_\kappa^*$ and $U_0(\la_2, \la_1)$ are strong\-ly continuous in their parameters. For $f\in C_0^\infty$ the vectors
$F_\kappa^* f$ and $U_0(\la_2, \la_1)f$ are strongly differentiable in
$\kappa$, $\la_2$ and $\la_1$ according to the following formulae
\begin{align}
 -i{d\over d\kappa}F_\kappa^* f
 &= F_\kappa^*\,H_\kappa\, f\, ,\label{difk}\\[1ex]
 i\ewt U_0(\la_2, \la_1) f
 &= mH_{m\la_2} U_0(\la_2, \la_1)f\,,\label{difu2}\\[1.5ex]
 -i\ewo U_0(\la_2, \la_1) f
 &= U_0(\la_2, \la_1) mH_{m\la_1} f\, ,\label{difu1}
\end{align}
where $\dsp H_\kappa = \G \Big( -{1\over \kappa} \g_\T\s p + 1\Big)$\, ,
$\G f(z) = \g\s z f(z)$.
\label{free}
\end{pr}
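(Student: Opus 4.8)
The plan is to work with the integral transforms directly on the dense domain $C_0^\infty$ and then pass to all of $\Hc$ using the unitarity from Proposition~\ref{four3}. For strong continuity, I note that for $f\in C_0^\infty$ the integrands in (\ref{four1}) and (\ref{four2}) depend continuously on the parameter and, on the fixed compact support of $f$, are dominated uniformly; dominated convergence then gives convergence of $F_\kappa^* f$, $F_\kappa f$, and hence of $U_0(\la_2,\la_1)f=F_{m\la_2}F_{m\la_1}^* f$, in $\Hc$ as the parameters move. Since each operator is an isometry, an $\e/3$ argument based on the density of $C_0^\infty$ upgrades this to strong continuity on $\Hc$, and joint continuity of $U_0$ follows because a product of uniformly bounded, strongly continuous families is strongly continuous.

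The crux is the differentiation formula (\ref{difk}). For $f\in C_0^\infty$ the difference quotients of $F_\kappa^* f$ converge in $\Hc$ by dominated convergence (they are uniformly bounded on the fixed compact support of $f$), so $F_\kappa^* f$ is strongly differentiable and I may differentiate under the integral in (\ref{four2}). The $\kappa$-derivative has two pieces: the normalization contributes $\frac{3}{2\kappa}F_\kappa^* f$, and the phase contributes a factor $i(u\s z)\,\g\s u$, giving $-i\,\frac{d}{d\kappa}F_\kappa^* f(u) = -\frac{3i}{2\kappa}F_\kappa^* f(u) + \lp\frac{\kappa}{2\pi}\rp^{3/2}\int (u\s z)\,\g\s u\, e^{+i\kappa u\s z\,\g\s u}\,\g\s z\, f(z)\,\m(z)$. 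On the other side I use $(\g\s z)^2=1$ to write $\g\s z\,H_\kappa f = \lp -\frac{1}{\kappa}\g_\T\s p + 1\rp f$, and then reduce $\g_\T\s p$ with the Dirac algebra and (\ref{p}): since $\g_\T{}^a z_a=0$ and $\g_\T{}^a\,\g\s z\,\g_{\T a} = -3\,\g\s z$, I obtain $\g_\T\s p\, f = i\,\g_\T{}^a\delta_a f - \tfrac{3i}{2}\,\g\s z\, f$, so that $F_\kappa^* H_\kappa f$ contains the tangential derivative term $\g_\T{}^a\delta_a f$.

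To match the two sides I integrate this term by parts with the divergence form of (\ref{idi}): for a tangent, matrix-valued field $V^a$ (so $z_a V^a=0$) of compact support one has $\int\delta_a V^a\,\m(z)=0$, which comes from the same Stokes computation that yields (\ref{idi}). Taking $V^a = e^{+i\kappa u\s z\,\g\s u}\,\g_\T{}^a f$ and using $\delta_a e^{+i\kappa u\s z\,\g\s u} = i\kappa(u_a - z_a\,u\s z)\,\g\s u\, e^{+i\kappa u\s z\,\g\s u}$ together with $\delta_a\g_\T{}^a = -3\,\g\s z$ and $(\g\s u)^2=1$, a short computation shows that the phase integral, the normalization term $-\frac{3i}{2\kappa}F_\kappa^* f$, and the $\frac{3i}{2\kappa}\g\s z$ piece all cancel in pairs, leaving precisely $F_\kappa^* H_\kappa f$; this proves (\ref{difk}). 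I expect the bookkeeping here — keeping the noncommuting factors $\g\s u$, $\g\s z$ and the exponential in the correct order while integrating by parts — to be the main obstacle.

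Finally, $H_\kappa=\G\lp -\frac{1}{\kappa}\g_\T\s p + 1\rp$ is symmetric on $C_0^\infty$, since $\G$ is self-adjoint and $p_a$, $i\g_\T{}^a$ are symmetric there; this is the operator form of the current conservation already used in Proposition~\ref{four3}. Because $F_\kappa$ and $F_\kappa^*$ are mutually conjugate, strong differentiability of $F_\kappa f$ follows by the same difference-quotient argument, and conjugating (\ref{difk}) — i.e. differentiating $(g,F_\kappa f)=(F_\kappa^* g,f)$ and using the symmetry of $H_\kappa$ with $F_\kappa f$ in its domain — identifies the derivative as $\frac{d}{d\kappa}F_\kappa f = -iH_\kappa F_\kappa f$. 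The two evolution formulas then come from the product rule on $U_0(\la_2,\la_1)=F_{m\la_2}F_{m\la_1}^*$: differentiating the left factor and inserting $\frac{d}{d\kappa}F_\kappa=-iH_\kappa F_\kappa$ gives (\ref{difu2}), while differentiating the right factor and inserting (\ref{difk}) gives (\ref{difu1}). As a consistency check, $mH_{m\la}$ is exactly the free ($A=0$) Dirac Hamiltonian appearing in (\ref{dirac}).
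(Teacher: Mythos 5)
Your algebraic verification of (\ref{difk}) is correct, and it is a genuinely different route from the paper's. The paper never differentiates the oscillatory integrals at all: for $f\in C_0^\infty$ it identifies $U_0(\la_2,\la_1)f(z)$ with $\la_2^{3/2}\psi(\la_2 z)$, where $\psi$ is the solution (\ref{cauchy}) of the free Dirac equation with data $\la_1^{-3/2}f$ on the hyperboloid $x=\la_1 z$; finite propagation speed keeps everything compactly supported in $z$, joint continuity and differentiability come from regularity of this solution, (\ref{difu2}) is just the free version of (\ref{dirac}), (\ref{difu1}) follows from the group property and unitarity, and (\ref{difk}) is deduced last, as a corollary. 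You run the argument in the opposite order, and your computation does close: writing $E=e^{i\kappa u\s z\,\g\s u}$ one has $\g\s u\,E\,\g\s u=E$, hence
\begin{equation*}
-\lp u_a-u\s z\, z_a\rp \g\s u\, E\,\g_\T{}^a + E \;=\; (u\s z)\,\g\s u\, E\,\g\s z\, ,
\end{equation*}
which is exactly the statement that your integrated-by-parts expression for $F_\kappa^* H_\kappa f(u)$ reproduces the phase term of $-i\,\frac{d}{d\kappa}F_\kappa^* f(u)$, the $\frac{3i}{2\kappa}$-terms matching separately. Two smaller repairs are needed: symmetry of $H_\kappa$ on $C_0^\infty$ does not follow merely from symmetry of $\G$, $p_a$ and $i\g_\T{}^a$ (a product of symmetric operators need not be symmetric); it follows from $[p_a,\G]=0$, $\{\g_\T{}^a,\G\}=0$ and the resulting skew-symmetry of $\g_\T\s p$. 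And in (\ref{difu2}) your formula $\frac{d}{d\kappa}F_\kappa=-iH_\kappa F_\kappa$ is applied to the vector $F_{m\la_1}^*f$, which lies in $\Sc$ but not in $C_0^\infty$, so your $C_0^\infty$-based differentiation argument must first be extended to Schwartz vectors.

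The genuine gap is in the analysis, and it occurs at every limit you take. Strong continuity and strong differentiability are claims about convergence in the norm of $\Hc$, i.e. about integrals over the whole noncompact hyperboloid in the outer variable $u$; your domination argument only bounds the integrand over the compact $z$-support of $f$, which yields pointwise-in-$u$ control. Worse, the bound it yields grows with $u$: the matrix $e^{\pm i\kappa u\s z\,\g\s u}=\cos(\kappa u\s z)+i\sin(\kappa u\s z)\,\g\s u$ has operator norm of order $u^0$, so the crude estimate of $F_\kappa^* f(u)$ is not even bounded in $u$, let alone square-integrable against $\m(u)$. Pointwise convergence of difference quotients together with such bounds gives nothing in $\Hc$; to apply dominated convergence in $u$ you need rapid decrease of $F_\kappa^* f(u)$ and of its $\kappa$-difference quotients, locally uniformly in $\kappa$. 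For fixed $\kappa$ this rapid decrease is the nontrivial ``regular wave packet'' estimate established in the proof of Proposition \ref{four3}; making it uniform in the parameter is real work that your proof skips, and it is precisely what the paper's detour through the Cauchy problem (\ref{cauchy}) is engineered to avoid, since there all relevant functions are compactly supported in $z$ uniformly in the parameter. So either supply a $\kappa$-uniform rapid-decrease estimate for the integrals (\ref{four1}), (\ref{four2}) with $C_0^\infty$ data (non-stationary-phase bounds on compact $\kappa$-intervals), or adopt the paper's finite-propagation-speed device; as written, the interchanges of limit and integral underlying your continuity and differentiability claims are unjustified.
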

\begin{proof}
If $f\in C_0^\infty$ then $F_{m\la_1}^*f\in\Sc$ and $F_{m\la_2}$ applied to the latter is therefore expressible in the integral form (\ref{four1}). The
function $F_{m\la_2}F_{m\la_1}^*f(z)$ is continuously differentiable in
$\la_2$ and $z$, satisfies in these variables the free version of
equation (\ref{dirac}), and for $\la_2=\la_1$ is equal to $f(z)$. Formulated
in terms of the original Dirac equation this means that\linebreak
$F_{m\la_2}F_{m\la_1}^*f(z) = \la_2^{3/2}\,\psi(\la_2 z)$, where $\psi(x)$
is the solution (\ref{cauchy}) of the initial data
problem for the Dirac equation
with the initial data $\psi(x)= \la_1^{-3/2}\, f(z)$ on $x=\la_1 z$. Both
$F_{m\la_2}F_{m\la_1}^*f(z)$ and its derivative on $\la_2$ are therefore
jointly continuous in $\la_2$ and $z$, and have compact support in $z$ for
$\la_2$ in some neighbourhood of $\la_1$. The strong continuity of
$U_0(\la_2, \la_1)$, differentiability of $U_0(\la_2, \la_1) f$ in
$\la_2$ and eq.(\ref{difu2}) now easily follow.
$F_\kappa = U_0(\kappa/m, \kappa_0/m) F_{\kappa_0}$ is then strongly
continuous as well, so as is
$U_0(\la_2, \la_1)$ in $\la_1$. Strong differentiability of
$U_0(\la_2, \la_1)f$ in $\la_1$ for $f\in C_0^\infty$ and eq.(\ref{difu1}) follow from
\begin{eqnarray*}
&&\left\|\frac{U_0(\la_2, \la'_1)f - U_0(\la_2, \la_1)f}{\la'_1 - \la_1}
- U_0(\la_2, \la_1) imH_{m\la_1} f\right\| \\
&&=\left\|\frac{U_0(\la'_1, \la_1)f - f}{\la'_1 - \la_1}
+ U_0(\la'_1, \la_1) imH_{m\la_1} f\right\| \to 0
\end{eqnarray*}
for $\la'_1\to\la_1$. In consequence (\ref{difk}) follows as well.
\end{proof}

We are now ready to discuss evolution in the presence of the electromagnetic
potential $A_a(x)$. Let $R(\la_2, \la_1)$ be the unitary propagator generated
by the family of operators $V_R (\la) = F_{m\la}^*V(\la)F_{m\la}$, where
$V(\la)$ is the operator of multiplication by $e\g\s z\,\g\s A(\la z)$.
For $R(\la_2, \la_1)$ to be well defined it suffices to assume (which is
sufficient for our purposes), that $V(\la)$ is a strongly continuous family
of bounded operators. Then $R(\la_2, \la_1)$ is jointly strongly continuous
in $\la_2$ and $\la_1$ and
\begin{align}
 i\ewt R(\la_2, \la_1)f &= V_R(\la_2)\, R(\la_2, \la_1)f\, ,\label{r2}\\[1ex]
 -i\ewo R(\la_2, \la_1)f &= R(\la_2, \la_1)V_R(\la_1)f\,\label{r1}
\end{align}
for any $f\in\Hc$. If $a_a(z)$ is a measurable vector function then
$\|\g^\T\s a\, f\| = \| \sqrt{-a_\T^2} f\|$, hence
$\|\g\s a\, f\|\leq \|a\s z f\| +\|\sqrt{-a_\T^2} f\|$.
To satisfy the conditions on $V(\la)$ we assume therefore that $A_a(x)$ is
continuous and both $|z\s A(\la z)|$ and $|A_\T^2(\la z)|$ have bounds
independent of $z$.

The unitary propagator $U(\la_2, \la_1) = F_{m\la_2} R(\la_2, \la_1)
F_{m\la_1}^*$ gives the Dirac evolution at least in the weak sense
$$
i\ewt (f, U(\la_2, \la_1) g) = ([mH_{m\la_2} + V(\la_2)]f, \,
U(\la_2, \la_1)g)\, ,
$$
where $g$ is any vector in $\Hc$ and $f\in C_0^\infty$. Moreover, for any $f\in C_0^\infty$
$$
-i\ewo U(\la_2, \la_1)f = U(\la_2, \la_1)(mH_{m\la_1} + V(\la_1)) f\, .
$$

The scattering states of the evolution so determined are easily obtained
by a simple unitary transformation, as suggested by the asymptotics of
$\psi(\la z)$ discussed at the beginning of this section. Let us denote
$G_\kappa = e^{\dsp i\kappa\G}$, where $\G$ is the operator defined in
Proposition \ref{free}. $G_\kappa$ is strongly continuous, differentiable on
every $f\in\Hc$, family of unitary operators. Denote further
$T_\kappa = iG_{\kappa + \pi /4} F_\kappa$ and $W(\la_2, \la_1) = G_{m\la_2 + \pi/4}
U(\la_2, \la_1)G_{m\la_1 + \pi/4}^* = T_{m\la_2} R(\la_2, \la_1)
T_{m\la_1}^*$.

\begin{lem}
If $f\in C_0^2$ then $\dsp \| T_\kappa f - f\| = \|T_\kappa^* f - f\| \leq
{1\over \kappa} (\| hf\| + \| h^2f\|)$, where $h= -i\g_\T\s p$.
\label{ast}
\end{lem}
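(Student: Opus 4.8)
The two claims are handled separately.

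\emph{The equality.} Since $F_\kappa$ is unitary by Proposition~\ref{four3}, $G_{\kappa+\pi/4}$ is unitary by construction, and the scalar $i$ has unit modulus, $T_\kappa=iG_{\kappa+\pi/4}F_\kappa$ is unitary. For any unitary $U$ one has $U(f-U^{*}f)=Uf-f$, hence $\|Uf-f\|=\|f-U^{*}f\|=\|U^{*}f-f\|$; with $U=T_\kappa$ this gives the first equality. It therefore suffices to bound $\|T_\kappa f-f\|$.

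\emph{Organizing the bound via the generator.} The plan is to differentiate $T_\kappa$ in $\kappa$ using $\D_\kappa G_{\kappa+\pi/4}=i\G G_{\kappa+\pi/4}$ and, from the adjoint of \eqref{difk}, $\D_\kappa F_\kappa=-iH_\kappa F_\kappa$ with $H_\kappa=\G(1-\tfrac1\kappa\g_\T\s p)$. Since $\G$ commutes with $G_{\kappa+\pi/4}$ and $\G-H_\kappa=\tfrac1\kappa\G\g_\T\s p=\tfrac{i}{\kappa}\G h$, this gives
\[
\D_\kappa T_\kappa f=-\frac{i}{\kappa}\,G_{\kappa+\pi/4}\,\G\,h\,F_\kappa f.
\]
The decisive structural fact is the anticommutation $\{\G,h\}=0$, which follows from $h=\g_\T{}^a\delta_a-\tfrac32\g\s z$ together with $\g_\T{}^a z_a=0$ and $\g_\T{}^a\g_\T{}_a=3$; equivalently $G_\mu h=hG_\mu^{*}$ and $G_\mu hG_\mu=h$. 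Combined with $G_{\kappa+\pi/4}\G=-i\,\D_\kappa G_{\kappa+\pi/4}$ it recasts the derivative as
\[
\D_\kappa T_\kappa f=-\frac1\kappa\,\big(\D_\kappa G_{\kappa+\pi/4}\big)\,h\,F_\kappa f.
\]
Integrating from $\kappa$ to $\infty$ (with $T_{\kappa'}f\to f$, justified below) and integrating by parts once in $\kappa'$ moves the $\kappa'$-derivative off $G_{\kappa'+\pi/4}$; re-inserting the generator for the resulting $\D_{\kappa'}F_{\kappa'}$ and using $\{\G,h\}=0$ produces a second power of $h$ and reproduces a multiple of the original integral. Solving for $T_\kappa f-f$ then yields a closed expression: an explicit boundary term of order $1/\kappa$ built from $hF_\kappa f$, plus an absolutely convergent remainder $\int_\kappa^\infty\kappa'^{-2}G_{\kappa'+\pi/4}(h-h^2\G)F_{\kappa'}f\,d\kappa'$. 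Taking norms and using unitarity of $G_{\kappa'+\pi/4}$ and $F_{\kappa'}$, each term is $\kappa^{-1}$ times $\sup_{\kappa'\ge\kappa}\big(\|hF_{\kappa'}f\|+\|h^2F_{\kappa'}f\|\big)$.

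\emph{The main obstacle.} What remains --- and is the real content --- is to control $\|h^{j}F_{\kappa'}f\|$ by $\|h^{j}f\|$ for $j=1,2$, uniformly in $\kappa'$ and with the sharp constants needed to reach the stated bound. This is delicate because $h$ does not commute with the free generator $H_\kappa$. I would extract it from the explicit kernel $e^{-i\kappa u\s z\,\g\s z}\g\s z$ of $F_\kappa$ in \eqref{four1}: applying $h$ in the variable $u$, carrying the derivative onto the kernel, and integrating back by parts on the hyperboloid through the identity \eqref{idi}, one shows that $F_\kappa$ intertwines $h$ with a first-order operator of the same type, so that the graph norms of $h$ and $h^2$ are preserved. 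The same kernel analysis proves $T_{\kappa'}f\to f$: the phase $\mp\kappa u\s z$ is stationary only at $z=u$, where $u\s z=1$; there the Maslov factor $e^{\mp 3i\pi/4}$ and the Jacobian $1/z^0$ combine with the prefactor $iG_{\kappa+\pi/4}$ to the identity, leaving $f(u)$ as the leading term, while the first- and second-order variation of the amplitude about $z=u$ furnish exactly the $\kappa^{-1}\|hf\|$ and $\kappa^{-1}\|h^2f\|$ corrections. Making this stationary-phase remainder estimate rigorous in $L^2(u)$ --- rather than merely pointwise --- and pinning down the constant is where the work concentrates; the generator identity above serves only to isolate the decaying order and to tie the derivative content to $h$.
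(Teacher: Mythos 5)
Your preparatory steps are sound: the equality does follow from unitarity of $T_\kappa$, the generator identity $\partial_\kappa T_\kappa f=-\tfrac{i}{\kappa}G_{\kappa+\pi/4}\G\, h\,F_\kappa f$ is correct, and so is the anticommutation $\{\G,h\}=0$. But your route --- differentiating $T_\kappa$ rather than $T_\kappa^*$ --- creates the gap that you yourself flag: after the oscillatory integration by parts every term carries $hF_{\kappa'}f$ or $h^2F_{\kappa'}f$, and the estimate $\|h^jF_{\kappa'}f\|\le\|h^jf\|$ (uniformly in $\kappa'\ge\kappa$, and with constant exactly $1$, since the lemma's bound $\tfrac1\kappa(\|hf\|+\|h^2f\|)$ leaves no slack) is nowhere proved. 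This is not a routine fact: $h$ is an unbounded first-order operator that does not commute with $F_\kappa$, so $F_\kappa^*hF_\kappa$ is a genuinely different operator, and the kernel/stationary-phase analysis of (\ref{four1}) you sketch would have to deliver a sharp $L^2$ intertwining statement --- an undertaking at least as hard as the lemma itself. The same applies to the limit: you call for a quantitative $L^2$ stationary-phase proof of $T_{\kappa'}f\to f$, whereas only \emph{weak} convergence is needed to identify the limit once a Cauchy-in-$\kappa$ estimate is available (strong convergence to \emph{something} then comes for free).

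The paper's proof avoids the obstacle entirely by running your very same manipulation on the adjoint. Since $T_\kappa^*=-iF_\kappa^*G_{\kappa+\pi/4}^*$, the vector $f$ is hit first by the multiplication operator $G_{\kappa+\pi/4}^*$; differentiating and using that $p$ commutes and $\g_\T$ anticommutes with $\G$ (your own structural facts), one pulls $\g_\T\s p$ through $G_{\kappa+\pi/4}^*$ so that it acts on $f$ itself:
\[
\frac{d}{d\kappa}T_\kappa^* f \;=\; T_\kappa^*\,\frac{i}{\kappa}\,G_{2\kappa}\,hf
\;=\;T_\kappa^*\,\frac{d}{d\kappa}\bigl(g_{2\kappa}\bigr)\,hf,
\qquad
g_\kappa=-i\int_\kappa^\infty G_u\,\frac{du}{u},\quad \|g_{2\kappa}\|\le\frac{1}{\kappa},
\]
and a second application of the same device gives
$\frac{d}{d\kappa}\lp T_\kappa^*f-T_\kappa^*g_{2\kappa}hf\rp=T_\kappa^*\tfrac{i}{\kappa}G_{2\kappa}g_{2\kappa}^*h^2f$. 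Every operator standing to the left of $hf$ and $h^2f$ is now unitary or explicitly norm-bounded, so integration from $\kappa_1$ to $\kappa_2$ yields
$\|T_{\kappa_2}^*f-T_{\kappa_1}^*f\|\le\lp\tfrac{1}{\kappa_1}+\tfrac{1}{\kappa_2}\rp(\|hf\|+\|h^2f\|)$
without ever commuting $h$ past $F_\kappa$; the limit is identified weakly by stationary phase against test vectors, $\kappa_1\to\infty$ gives the bound for $f\in C_0^\infty$, and $C_0^2$ follows by uniform approximation. In short: take adjoints in your derivation \emph{before} estimating, and your argument becomes the paper's proof, with the unproven intertwining claim disappearing.
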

\begin{proof}
Let $f\in C_0^\infty$ first. Then $G_\kappa^* f \in C_0^\infty$ as well, so the
differentiations in $\dsp {d\over d\kappa}T_\kappa^* f$
can be performed. Using the fact
that $p$ commutes and $\g^\T$ anticommutes with $\G$ one finds
$$
{d\over d\kappa}T_\kappa^* f = T_\kappa^* {i\over\kappa} G_{2\kappa} hf =
T_\kappa^* {d\over d\kappa} g_{2\kappa} hf\, ,
$$
where ${\dsp g_\kappa = -i\int_\kappa^\infty G_u
{du\over u} = \G \lp{G_\kappa\over \kappa}-\int_\kappa^\infty G_u {du\over u^2}\rp}$
(the first form as an improper integral). From the latter form one has
$\dsp \|g_\kappa\|\leq {2\over\kappa}$. $g_{2\kappa}hf$ is again in $C_0^\infty$, so
$$
{d\over d\kappa}\lp T_\kappa^* f - T_\kappa^* g_{2\kappa} hf\rp =
T_\kappa^* {i\over \kappa} G_{2\kappa} g_{2\kappa}^* h^2f\, .
$$
Integration from $\kappa_1$ to $\kappa_2$ leads to
$$
T_{\kappa_2}^*f - T_{\kappa_1}^*f = T_{\kappa_2}g_{2\kappa_2}hf - T_{\kappa_1}g_{2\kappa_1}hf +
i\int_{\kappa_1}^{\kappa_2} T_u^* G_{2u} g_{2u}^* h^2 f {du\over u}
$$
and
$$
\|T_{\kappa_2}^* f - T_{\kappa_1}^* f\|\leq \lp {1\over\kappa_2} + {1\over\kappa_1}\rp
(\|hf\| + \|h^2 f\|)\, .
$$
This shows that $T_\kappa^* f$ has a limit; this limit has to be $f$, as for any
$g\in C_0^\infty$ there is $(g, T_\kappa^* f - f) \rightarrow 0$, which is easily seen
e.g. by stationary phase method. Taking the limit $\kappa_1 \rightarrow \infty$
one obtains the stated result for $f\in C_0^\infty$. Any $f\in C_0^2$ can be uniformly
approximated together with its derivatives by functions from $C_0^\infty$ vanishing
outside a common compact set. This ends the proof.
\end{proof}

The above lemma reduces the problem of asymptotics of $W(\la_2, \la_1)$ to
that of $R(\la_2, \la_1)$
\begin{equation}\label{w1}
\begin{aligned}
 \|W(\la, \la_2)f &- W(\la, \la_1)f\|\\
 &\leq \| R(\la, \la_2)f -R(\la, \la_1)f\|
+ \Big({1\over m\la_2} + {1\over m\la_1} \Big) \lp\|hf\| + \|h^2f\|\rp
\end{aligned}
\end{equation}
for $f\in C_0^\infty$. The generator of the propagator $R(\la_2, \la_1)$ can be written in the form \mbox{$V_R(\la) = T_{m\la}^* \lp G_{2m\la} v_1(\la) +
v_2(\la)\rp T_{m\la}$}, where $v_1(\la)$ and $v_2(\la)$ are the operators
of multiplication by $ie\g_\T\s A(\la z)$ and $ez\s A(\la z)$ respectively.
Transforming (\ref{r1}) with the use of the method applied in the proof of
Lemma~\ref{ast} one obtains
\begin{equation}\label{r3}
\begin{aligned}
 -i\partial_u &\left\{ R(\la, u)f - R(\la, u)g_{2mu} uv_1(u)f\right\}
 =R(\la, u)T_{mu}^*v_2(u)T_{mu}f \\[1ex]
 &-R(\la, u) V_R(u) g_{2mu} uv_1(u)f + R(\la, u)T_{mu}^*G_{2mu}v_1(u)(T_{mu} - \I )f\\
 &+R(\la, u)(T_{mu}^* - \I)G_{2mu}v_1(u)f
 + R(\la, u)g_{2mu} i{d\over du} uv_1(u)f\, .
\end{aligned}
\end{equation}
The strong differentiation in the last term will be allowed under the
assumptions of the following theorem.

\begin{pr}
Let $A_a(x)$ be a vector function twice continuously differentiable and
for $\la > \la_0 > 0$ subject
to the following bounds for some $\e > 0$
\begin{eqnarray}
& &|z\s A(\la z)|< \frac{\con}{\la^{1 + \e}},~~~~~ |A_\T^2 (\la z)| <
\frac{\con}{\la^{1 + \e}},\nonumber\\
& &\left| \left[\ewl (\la A_\T(\la z))\right]^2\right| < {C(z)\over \la^{2\e}},
\label{bounds}\\
& &|\delta_aA_\T{}_b(\la z)| < {D(z)\over \la^\e}, ~~~~~
|\delta_a\delta_bA_\T{}_c(\la z)| < {D(z)\over \la^\e}, \nonumber
\end{eqnarray}
where $C(z)$ and $D(z)$ are continuous functions and the last two estimates
hold component-wise in arbitrary fixed Lorentz frame (change of the frame
results only in the change of $D(z)$).

Then for all $f\in C_0^\infty$
\be
\|W(\la, \la_2)f - W(\la, \la_1)f\| \leq c(f)\left( {1\over \la_2^\alpha} +
{1\over \la_1^\alpha}\right),\label{asw}
\ee
where $\alpha = \min\{\e, 1\}$ and $c(f)$ is a constant depending on $f$.
Hence for every $\la>\la_0$ the strong limit
${\dsp \lim_{u\to\infty} W(\la, u)f = f_\la }$ exists, is strongly
continuous in $\la$ and
\be
\|f_\la - f\| \leq {c(f)\over \la^\alpha}.
\label{asf}
\ee
\label{scat1}
\end{pr}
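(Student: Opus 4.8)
The plan is to reduce the bound to an estimate on the propagator $R(\la_2,\la_1)$ and then integrate the identity (\ref{r3}). By (\ref{w1}), together with $1/\la\leq\con\,\la^{-\alpha}$ for $\la>\la_0$ (recall $\alpha=\min\{\e,1\}\leq 1$), the estimate (\ref{asw}) will follow once I establish
\[
\|R(\la,\la_2)f-R(\la,\la_1)f\|\leq c(f)\lp\la_1^{-\alpha}+\la_2^{-\alpha}\rp
\]
for $f\in C_0^\infty$. I would integrate (\ref{r3}) in $u$ from $\la_1$ to $\la_2$; its left-hand side then produces $R(\la,\la_2)f-R(\la,\la_1)f$ together with the boundary term $\lb R(\la,u)g_{2mu}uv_1(u)f\rb_{u=\la_1}^{u=\la_2}$. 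Using $\|g_\kappa\|\leq 2/\kappa$, the unitarity of $R$, and $\|v_1(u)f\|=e\|\sqrt{-A_\T^2(uz)}\,f\|\leq\con\,u^{-(1+\e)/2}\|f\|$ (first line of (\ref{bounds})), this boundary term is bounded by $\con\,\la_i^{-(1+\e)/2}\|f\|$, which is $O(\la_i^{-\alpha})$ since $(1+\e)/2\geq\alpha$.

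It then remains to integrate the five summands on the right of (\ref{r3}) over $[\la_1,\la_2]$; four of them are routine. The first two, built from $v_2(u)$ (multiplication by $ez\s A$, bounded by $\con\,u^{-(1+\e)}$ via the first line of (\ref{bounds})) and from $V_R(u)g_{2mu}uv_1(u)$, have norm $O(u^{-(1+\e)})$, where I use $\|V_R(u)\|=\|V(u)\|\leq\con\,u^{-(1+\e)/2}$ (the inequality $\|\g\s a f\|\leq\|a\s z f\|+\|\sqrt{-a_\T^2}f\|$); their integrals are $O(\la_1^{-\e})$. The third summand, $R(\la,u)T_{mu}^*G_{2mu}v_1(u)(T_{mu}-\I)f$, I would control by Lemma \ref{ast}, $\|(T_{mu}-\I)f\|\leq(mu)^{-1}(\|hf\|+\|h^2f\|)$, followed by the $v_1(u)$ bound, giving $O(u^{-(3+\e)/2})$ and integral $O(\la_1^{-(1+\e)/2})$. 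The fifth summand carries $\tfrac{d}{du}(uv_1(u))$, i.e. multiplication by $ie\,\g_\T\s[\ewl(\la A_\T(\la z))]|_{\la=u}$, of norm $e\|\sqrt{-[\ewl(\la A_\T)]^2}\,f\|\leq c(f)\,u^{-\e}$ by the second line of (\ref{bounds}); with $\|g_{2mu}\|\leq(mu)^{-1}$ its integral is again $O(\la_1^{-\e})$.

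The hard part is the fourth summand, $R(\la,u)(T_{mu}^*-\I)G_{2mu}v_1(u)f$, which is precisely where the derivative bounds in the last line of (\ref{bounds}) must be used. Since $p$ commutes and $\g_\T$ anticommutes with $\G$, the operator $h=-i\g_\T\s p$ anticommutes with $\G$, so $hG_{2mu}=G_{2mu}^*h$ and $h^2G_{2mu}=G_{2mu}h^2$; by unitarity of $G_{2mu}$ and Lemma \ref{ast} this reduces the summand to controlling $\|hv_1(u)f\|$ and $\|h^2v_1(u)f\|$ through $\|(T_{mu}^*-\I)G_{2mu}v_1(u)f\|\leq(mu)^{-1}(\|hv_1(u)f\|+\|h^2v_1(u)f\|)$. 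I would then commute $h$ and $h^2$ through the multiplication operator $v_1(u)$: the terms in which the derivative $\delta_a$ of $p_a$ lands on the coefficient $\g_\T\s A(uz)$ are bounded componentwise by $|\delta_aA_{\T b}|<D(z)/u^\e$ and $|\delta_a\delta_bA_{\T c}|<D(z)/u^\e$, while the remaining terms carry $v_1(u)$ against $hf$ or $h^2f$ and are handled by the $|A_\T^2|$ bound. Since $f\in C_0^\infty$ has compact support, $\sup_{\mrm{supp}f}D(z)<\infty$, and for $\e\leq 1$ the slowest decay is $O(u^{-\e})$; hence $\|hv_1(u)f\|+\|h^2v_1(u)f\|\leq c(f)\,u^{-\e}$, the whole summand is $O(u^{-(1+\e)})$, and its integral is $O(\la_1^{-\e})$. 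The bookkeeping of these commutators — applying $hG=G^*h$ correctly and matching every term to the appropriate line of (\ref{bounds}) with the correct power of $u$ — is the main obstacle I anticipate.

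Collecting the boundary and integral contributions, all $O(\la_1^{-\alpha})$ or $O(\la_2^{-\alpha})$ (using $\e\geq\alpha$ and $(1+\e)/2\geq\alpha$), yields the propagator estimate and therefore (\ref{asw}). Inequality (\ref{asw}) is a Cauchy condition as $\la_1,\la_2\to\infty$, so the strong limit $f_\la=\lim_{u\to\infty}W(\la,u)f$ exists for $f\in C_0^\infty$. Putting $\la_1=\la$ in (\ref{asw}) — where $W(\la,\la)=\I$ because $R(\la,\la)=\I$ and $T_{m\la}$ is unitary — and letting $\la_2=u\to\infty$ gives (\ref{asf}). Finally, the cocycle identity $W(\la',u)=W(\la',\la)W(\la,u)$ yields in the limit $f_{\la'}=W(\la',\la)f_\la$, and the joint strong continuity of $W$ (inherited from that of $R(\la_2,\la_1)$ and $T_\kappa$) gives $\|f_{\la'}-f_\la\|=\|W(\la',\la)f_\la-f_\la\|\to 0$ as $\la'\to\la$, which is the asserted strong continuity of $f_\la$.
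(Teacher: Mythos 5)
Your proposal is correct and follows essentially the same route as the paper's proof: reduction via (\ref{w1}), integration of (\ref{r3}) with the same five term-by-term norm estimates (including the use of Lemma \ref{ast} together with the anticommutation of $h$ with $\G$ for the third and fourth summands, and compact support of $f$ to control $C(z)$, $D(z)$), and the same endgame for existence of $f_\la$, the bound (\ref{asf}) via $\la_1=\la$, and continuity via $f_{\la'}=W(\la',\la)f_\la$. The only discrepancy is inconsequential: you bound the boundary term by $\con\, u^{-(1+\e)/2}\|f\|$ (which is the correct power), whereas the paper states $\con\,\|f\|\, u^{-1-\e}$; both are $O(u^{-\alpha})$, so the conclusion is unaffected.
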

\begin{proof}
To prove (\ref{asw}) it remains to estimate various terms in
(\ref{r3}). The successive terms (i),...,(v) on the r.h. side of (\ref{r3})
are bounded in norm respectively by
\begin{eqnarray*}
& &\|(i)\| \leq \|v_2(u)\|\,\|f\|\, ,\\
& &\|(ii)\| \leq {1\over m}(\|v_1(u)\| + \|v_2(u)\|)\, \|v_1(u)\|\,\|f\|\, ,\\
& &\|(iii)\| \leq \|v_1(u)\|{1\over mu}(\|hf\| + \|h^2f\|)\, ,\\
& &\|(iv)\| \leq {1\over mu}(\|hv_1(u)f\| + \|h^2v_1(u)f\|)\, ,\\
& &\|(v)\| \leq {1\over mu}\left\| {d\over du} uv_1(u)f\right\|\, ;
\end{eqnarray*}
for (iv) the fact was used, that $v_1(u)f\in C_0^2$. The assumed estimates
of the potential force all these bounds below some constant depending on $f$
times $\la^{-1 -\alpha}$. The integration of (\ref{r3}) leads therefore to
$$
\left\|\left[ R(\la, u)f - R(\la, u)g_{2mu} uv_1(u)f\right]
\Big|_{u=\la_1}^{u=\la_2}\right\| \leq \con (f)
\left( {1\over \la_2^\alpha} + {1\over \la_1^\alpha}\right)
$$
The form of this inequality remains unchanged, if we omit the second
term inside the brackets on the l.h. side (this term is bounded by
$\con \|f\|\, u^{-1-\e}$). Taking into account (\ref{w1}) one arrives at
(\ref{asw}). The continuity of $f_\la$ is evident from $f_{\la'} =
W(\la', \la)f_\la$, and (\ref{asf}) is obtained by putting $\la_1=\la$
and letting $\la_2\rightarrow\infty$ in (\ref{asw}). This ends the proof.
\end{proof}

\begin{col}
For every $f\in\Hc$ the strong limit
${\dsp \lim_{u\to\infty} W(\la, u)f = f_\la }$ exists.
$f_\la$ is strongly continuous and $\|f_\la - f\|\rightarrow 0$
for $\la\rightarrow\infty$.
\label{scat2}
\end{col}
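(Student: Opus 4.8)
The plan is to promote Proposition~\ref{scat1} from the dense domain $C_0^\infty$ to all of $\Hc$ by a routine density-plus-uniform-boundedness argument, the crucial input being that the operators $W(\la_2,\la_1)$ are unitary. Indeed, $W(\la_2,\la_1)=T_{m\la_2}R(\la_2,\la_1)T_{m\la_1}^*$, where $R(\la_2,\la_1)$ is the unitary propagator and each $T_\kappa=iG_{\kappa+\pi/4}F_\kappa$ is unitary, being a product of the unitary $G_{\kappa+\pi/4}$ and the unitary $F_\kappa$ of Proposition~\ref{four3}. Hence $\|W(\la,u)\|=1$ for all arguments, and the family is uniformly bounded.

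First I would establish existence of the strong limit for arbitrary $f\in\Hc$. Given $f$ and $\e_0>0$, choose $g\in C_0^\infty$ with $\|f-g\|<\e_0/3$ (possible since $C_0^\infty$ is dense in $\Hc$). For any $u_1,u_2$ the triangle inequality and unitarity give
\[
 \|W(\la,u_2)f - W(\la,u_1)f\| \leq \|W(\la,u_2)g - W(\la,u_1)g\| + \tfrac{2}{3}\e_0,
\]
and by Proposition~\ref{scat1} the first term on the right is below $c(g)(u_2^{-\alpha}+u_1^{-\alpha})$, hence smaller than $\e_0/3$ once $u_1,u_2$ are large enough. Thus $u\mapsto W(\la,u)f$ satisfies the Cauchy criterion and $f_\la\equiv\lim_{u\to\infty}W(\la,u)f$ exists for every $f\in\Hc$.

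Next I would read off the two remaining assertions. For continuity in $\la$ I would use the cocycle identity $W(\la',u)=W(\la',\la)W(\la,u)$, which holds by the propagator property of $R$; letting $u\to\infty$ and using boundedness of $W(\la',\la)$ yields $f_{\la'}=W(\la',\la)f_\la$. Strong continuity of $f_\la$ then follows from the joint strong continuity of $W(\la',\la)$, which is inherited from that of $R(\la_2,\la_1)$ and, via $T_\kappa$, from the strong continuity of the families $F_\kappa$ (Proposition~\ref{free}) and $G_\kappa$. Finally, for the claim $\|f_\la-f\|\to 0$ I would again split by a $g\in C_0^\infty$: since $f_\la-g_\la=\lim_u W(\la,u)(f-g)$ has norm at most $\|f-g\|<\e_0/3$ by unitarity, and $\|g_\la-g\|\le c(g)\la^{-\alpha}\to 0$ by Proposition~\ref{scat1}, the bound $\|f_\la-f\|<\e_0$ holds for all sufficiently large $\la$.

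There is no genuine obstacle here: the argument is the standard observation that a uniformly bounded family of operators converging strongly on a dense subspace converges strongly everywhere, with the limit map bounded by the same constant. The only points demanding a little care are that $f_\la$ inherits the bound $\|f_\la-g_\la\|\le\|f-g\|$ as a limit of quantities each so bounded, and that the composition law for $W$ is available to transport continuity from the pair-parameter propagator to the single-parameter limit $f_\la$.
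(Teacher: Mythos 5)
Your proposal is correct and is precisely the argument the paper leaves implicit: the corollary is stated without proof because it follows from Proposition \ref{scat1} by exactly this routine density-plus-unitarity extension (with the relation $f_{\la'}=W(\la',\la)f_\la$, already used in the paper's proof of Proposition \ref{scat1}, giving continuity). All steps check out, including the cocycle identity for $W$ and the inheritance of the bound $\|f_\la-g_\la\|\le\|f-g\|$ in the limit.
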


The crucial point of our discussion is the fact, that the long-range
electromagnetic fields of the Coulomb type are admitted by the premises
of Proposition \ref{scat1}, provided one chooses the potential
in an appropriate gauge. Let us observe first, that if the electromagnetic
field is represented as a superposition, it suffices to satisfy
(\ref{bounds}) for the potentials of the superposed fields separately.
Suppose that one of the superposed fields is the asymptotic Coulomb-type
field homogeneous of degree $-2$: $F_{ab}(\kappa x) = \kappa^{-2} F_{ab}(x)$.
The simplest choice of the potential inside the lightcone is of the form
$A_a(\la z) = \la^{-1} a_a(z)$. This potential breaks the first of the bounds
(\ref{bounds}). Assume, however, that $a_a (z)$ is three times continuously
differentiable and satisfies the bounds
\be
\left| a_\T^2(z)\right| < \con\, ,
~~~\left|\left[\delta(z\s a(z))\right]^2\right| < \con\, .
\label{abounds}
\ee
Choose the new gauge by $A_{\mrm{tr}}(x) = A(x) - \n S(x)$ with $S(x)$ given
by $S(\la z) = \ln\la\, z\s a(z)$ inside the lightcone. Then
\be
A_{\mrm{tr}}{}_b(\la z) = \la^{-1} \left\{a_\T{}_b (z) - \ln\la\,
\delta_b(z\s a(z))\right\}
\label{atr}
\ee
and $z\s A_{\mrm{tr}}(\la z) = 0$. The other bounds of (\ref{bounds}) are
satisfied for any $\e < 1$ (with constants depending on $\e$).

Another class of potentials admitted by Proposition (\ref{scat1})
consists of Lorentz-gauge potentials (\ref{freepot1}) of free fields
discussed in Section \ref{selm}. With the use of (\ref{bl1}, \ref{bl2})
we get
\[
|A_a(\la z)|<\frac{\con}{\la z^0}\, ,\quad
|\n_aA_b(\la z)|<\con\frac{(z^0)^\e}{\la^{2+\e}}\, ,\quad
|\n_a\n_bA_c(\la z)|<\con\frac{(z^0)^\e}{\la^{2+\e}}\, .
\]
These bounds imply the third, fourth and fifth of
the estimates (\ref{bounds}), while the first bound above is sufficient for
the second estimate in (\ref{bounds}) to hold, if the first one is satisfied.
To prove this remaining estimate we observe first that, as follows from
(\ref{freepot2}), $V_a(s, l)= o_A k_{A'}(s, o, \oc) + \cc$, where
$o_{C'}k^{C'}(s, o, \oc)=\z(s, o, \oc)$. Inserting this into
(\ref{freepot1}) we get by (\ref{inth2})
$$
x\s A(x)=\frac{1}{2\pi}\int\D'_{A'}k^{A'}(x\s l. o, \oc)\, \dl + \cc\, .
$$
Hence $\dsp |z\s A(\la z)|
<\frac{\con}{\la^{1+\e}(z^0)^\e}$ by (\ref{bl0}),
which ends the proof.

 We stress that the transformation used here to compensate the asymptotic
behaviour of the Dirac field is interaction-independent, unlike in the
usual Dollard treatment of the Coulomb potential \cite{dol}, \cite{ree},
or in a recent discussion of the Cauchy problem for the classical spinor
electrodynamics \cite{fla}.

The Dirac field is expressed in terms of
$f_\la(z)$ as
\be
\p(\la z) = -i\, \la^{-3/2}\, e^{\dsp -i(m\la + \pi /4)\g\s z}
f_\la (z)\, .
\label{psif}
\ee
If $f_\la^0 (z)$ is a solution of the free evolution, with the corresponding
Dirac field $\p^0(\la z)$, then
$$
\int_{x^2 =\la^2} \bp^0(x)\g^a\p(x)\, d\sigma_a(x) = (f_\la^0, f_\la)
\rightarrow (f^0, f)
$$
for $\la\to\infty$. This suggests that the precise formulation of the
asymptotic Dirac field in the quantum electrodynamics be looked for as
a limit of the expression on the l.h. side, with $\p^0$ being a test field.

\setcounter{equation}{0}
\setcounter{pr}{0}
\section{Total conserved quantities}
\label{stot}

We want to return now to the consideration of a closed system with
electromagnetic interaction, which has been taken up in Section \ref{srad}.
The results should not depend essentially on what kind of massive
field one couples minimally to the electromagnetic field, but we consider
for definiteness the Maxwell-Dirac system. The discussion of the present
section will make use of the results of the preceding sections, but in the
full theory we lack rigorous results along the lines presented here.
Rigorous results on the Cauchy problem and scattering properties of
the Maxwell-Dirac theory were recently reported by Flato et al \cite{fla}, but
the method used by these authors is quite different, and the relation
of the present work with \cite{fla} remains to be clarified.
What is clear, however, is the difference in the choice
of transformation leading to asymptotic states: our transformation is
interaction-independent, which is made possible by a special choice of gauge,
while the transformation of Flato et al is a Dollard-type treatment (cf.
\cite{dol}), consisting of extraction of a phase in momentum space, thus not
constituting a gauge transformation in the usual sense. Moreover, the
method used in the present work aims at appropriate description of the
spacetime separation of asymptotic matter and radiation, so far as it can
be achieved. We stress, however, that no results on the Cauchy problem or
asymptotic completeness are given here.

Proceeding heuristically we shall assume that the asymptotics of fields
of the interacting theory are of the type described in Section \ref{selm}
for the electromagnetic and in
Section \ref{sdir} for the Dirac field respectively.
When needed we shall add further assumptions on how these asymptotics are
achieved. These extrapolations seem plausible, provided (i) the full
electromagnetic potential falls into the class admitted by Proposition
\ref{scat1} and (ii) the current of the Dirac field vanishes in spacelike
directions sufficiently fast for the discussion of Section~\ref{selm}
to remain valid. Basing the intuitions on the free Dirac field case we
regard the second point as unproblematic, but for its rigorous justification
more control over the limit $f_{\la} \to f$, and also the solution
of the Dirac equation outside the cone would be needed. As to the first
point, we can only present a very simplified argument of self-consistency type,
which, however, takes care of the Coulomb term, the most troublesome from
the point of view of asymptotics of the matter field.

More explicitly, we represent the Dirac field inside the lightcone
as in (\ref{psif}) and assume that $f_{\la}\to f$ as in Corollary \ref{scat2}.
For any current density denote inside the lightcone
$j_a(\la, z) = \la^3 J_a(\la z)$, $z^2 =1$, $z^0>0$. For the Dirac field
\be
j_a(\la, z)= z_a\rho(z) + \lp e^{\textstyle -2im\la}\kappa_a(z) + \cc \rp
+ r_a(\la, z)\, ,
\label{jot}
\ee
where
\begin{align*}
 \rho(z) &= e\,\ov{f(z)}\g\s z f(z)\, ,\\
 \kappa_a(z) &= -ie\,\ov{P_- f(z)}\,\g_\T{}_a\,P_+f(z)\, ,\\
 r_a(\la, z) &= e\,\ov{e^{\textstyle -i(m\la+ \pi/4)\g\s z}f_\la(z)}\,\g_a\,
 e^{\textstyle -i(m\la+ \pi/4)\g\s z}f_\la(z)\\
 &-e\,\ov{e^{\textstyle -i(m\la+ \pi/4)\g\s z}f(z)}\,\g_a\,
 e^{\textstyle -i(m\la+ \pi/4)\g\s z}f(z)\, .
\end{align*}
The electromagnetic potential in the Lorentz gauge can be split into the free
outgoing and advanced parts. As for the free part, its admissibility in
Proposition \ref{scat1} has been proved already in the preceding section.
The advanced field of the current $J_a(x)$ can be written inside the future
lightcone as
$$
\Aa{}_a(x) = \int j_a\lp x\s v + \sqrt{(x\s v)^2 - x^2}, v \rp
\frac{\m(v)}{\sqrt{(x\s v)^2 - x^2}}\, .
$$
For the Dirac density the first term of (\ref{jot}) yields a Coulomb potential
\be
\Ac{}_b(\la z) = \frac{a_b(z)}{\la}\, ,
\label{poco1}
\ee
with
\be
a_b(z) = \int v_b\,\rho(v)\frac{\m(v)}{\sqrt{(z\s v)^2 - 1}}\, .
\label{poco2}
\ee
This is a homogeneous potential of the type discussed after Corollary
\ref{scat2}. All we have to show for admissibility of its gauged
form $A^{\mrm{Coul}}_{\mrm{tr}}$ (\ref{atr}) is the threefold
differentiability of $a_b(z)$ and the bounds (\ref{abounds}).
The differentiability of $a_b(z)$ follows easily by the use of the identity
$$
\delta_b\int h(v) \frac{\m(v)}{\sqrt{(z\s v)^2 - 1}} =
\int \lb z\s v (\delta_b - 3v_b) + z_b\rb h(v)\frac{\m(v)}{\sqrt{(z\s v)^2 - 1}}
$$
and suitable assumptions on the regularity and fall-off of $\rho(v)$.
(The identity follows by multiplication of
$$
\delta^{(z)}_b\frac{1}{\sqrt{(z\s v)^2 - 1}} =
\frac{z_b}{\sqrt{(z\s v)^2 - 1}} - \delta^{(v)}_b
\frac{z\s v}{\sqrt{(z\s v)^2 - 1}}
$$
by $h(v)$ and integration by parts according to (\ref{idi}).)
From (\ref{mest}) we have
\be
\left| a_b(z)\right| < \frac{\con}{z^0}\, ,
~~~~~\left| z\s a(z)\right| < \con\, ,
\label{poco3}
\ee
so that the first of the bounds (\ref{abounds}) is satisfied. To obtain the
other one we observe first that the components of any unit vector orthogonal to
a timelike unit vector $z^a$ are bounded by $z^0$, in particular
$\dsp \left|\frac{v^a - z\s v z^a}{\sqrt{(z\s v)^2 - 1}}\right|
\leq z^0$. (Proof: if $w\s z =0$, then $|z^0w^0|\leq|\vec{z}||\vec{w}|$, or,
using $z^2=-w^2=1$, $(|\vec{w}|^2 -1)(z^0)^2\leq |\vec{w}|^2((z^0)^2-1)$,
hence $z^0\geq |\vec{w}|\geq |w^0|$.) Hence, by (\ref{mest}),
\be
\delta_b(z\s a(z)) = -\int \rho(v)\,\frac{v_b- z\s v z_b}{\sqrt{(z\s v)^2 - 1}}\,
\frac{\m(v)}{(z\s v)^2 -1}\, ,
\label{ga}
\ee
and $\dsp \left|\delta_b(z\s a(z))\right|<\frac{\con}{z^0}$,
which implies the second
of the inequalities (\ref{abounds}). From now on $A^{\mrm{Coul}}_{\mrm{tr}}$
replaces $A^{\mrm{Coul}}$ in the Dirac equation.

The remaining contributions to $\Aa{}_b(\la z)$ will not be discussed in
detail, but we assume, what could be achieved with
some additional assumptions on uniformness of the limit
$f_\la(z)\to f(z)$ and on regularity of $f(z)$, that
\be
\left|\Aa{}_b(\la z) - \Ac{}_b(\la z)\right| < \frac{\con}{(\la z^0)^{1+\e}}\,
\label{rest1}
\ee
and
\be
\left|F^{\mrm{adv}}{}_{ab}(\la z) - F^{\mrm{Coul}}{}_{ab}(\la z)\right| <
\frac{\con}{(\la z^0)^{2+\e}}\, ,
\label{rest2}
\ee
where
\be
F^{\mrm{Coul}}{}_{ab}(\la z) = \frac{f_{ab}(z)}{\la^2}\, ,
\label{co1}
\ee
\be
f_{ab}(z) = \int\rho(v)\frac{z_av_b - z_bv_a}{\sqrt{(z\s v)^2 - 1}}\,
\frac{\m(v)}{(z\s v)^2 -1}\, .
\label{co2}
\ee
Since $z_{[a}v_{b]}=(z_{[a}-z\s v v_{[a}) v_{b]}$ we have
$\dsp \left|\frac{z_av_b-z_bv_a}{\sqrt{(z\s v)^2 - 1}}\right|
<2(v^0)^2$ and from (\ref{mest})
\be
\left| f_{ab}(z)\right|< \frac{\con}{(z^0)^2}\, ,
\label{co3}
\ee
if $\dsp |\rho(v)|<\frac{\con}{(v^0)^{4+\e}}$.

The new gauge of the electromagnetic potential, which we use here for
its simplicity, is a nonlocal one, being reached from a Lorentz gauge
by a transformation depending on the asymptotic current. However, the
same asymptotic effect can be achieved by a local gauge transformation
$A(x)\to A(x) - \n S(x)$, with $\dsp S(x) =
\ln\sqrt{x^2}\, x\s A(x)$ inside the lightcone.

We come now to our principal aim in this section. We want to complete the
discussion of Section \ref{selm} by supplying the up to now lacking
expressions for energy-momentum and angular momentum going out in
timelike directions with the massive part of the system. We recall, that these
quantities are determined by (\ref{outm1}) and (\ref{outm2}) respectively,
and they do not depend on the choice of the time-axis along which the limits in
those formulae are achieved. We take advantage of this independence to chose an
axis going through the origin of Minkowski space (with arbitrary time-vector
$t$). The total energy-momentum tensor of the theory is given by
$$
T_{ab} = T^{\mrm{D}}{}_{ab} + T^{\mrm{elm}}{}_{ab}\, ,
$$
where $T^{\mrm{elm}}{}_{ab}$ is the tensor of the total electromagnetic
field (\ref{emt}) and
$$
T^{\mrm{D}}{}_{ab} = \frac{1}{4} \left\{\ov{\psi}\g_a(i\n_b -eA_b)\psi
+ \cc \right\} + (a\leftrightarrow b)\, ,
$$
where $(a\leftrightarrow b)$ stands for terms with interchanged indices.
Recalling result (\ref{freefut}) we see that the contribution to the r.h.
sides of (\ref{outm1}) and (\ref{outm2}) coming from the out field vanish.
Also the contributions coming from the mixed adv-out terms in
$T^{\mrm{elm}}{}_{ab}$ vanish, as shown in \ref{apb}.

We are left with the task of calculating the r.h. sides of (\ref{outm1}) and
(\ref{outm2}) for
$$
T'_{ab} = T^{\mrm{D}}{}_{ab} + T^{\mrm{adv}}{}_{ab}\, ,
$$
where $T^{\mrm{adv}}{}_{ab}$ is the electromagnetic tensor of advanced field.
We want to show first that the limits of the
integrals over $\Co^{\mrm{fut}}(\tau)$ for $\tau\to\infty$
may be replaced by the limits for $\la\to\infty$ of
the integrals over hyperboloids  \mbox{$\Hc(\la)=\{x| x^2=\la^2, x^0>0\}$}.
To this end
consider integrals over the region contained between $\Co^{\mrm{fut}}(\tau)$
and $\Hc(\la)$ of the quantities
\be
\n^c T'_{ac} = -F^{\mrm{out}}{}_{ac}J^c
\label{dyw1}
\ee
and
\be
\n^c(x_a T'_{bc} - x_b T'_{ac}) = \lp -x_a F^{\mrm{out}}{}_{bc} +
x_b F^{\mrm{out}}{}_{ac}\rp J^c \, .
\label{dyw2}
\ee
Since $T'_{ab}$ gives no flow of energy-momentum or angular momentum
to null infinity, these integrals give the differences of energy-momentum and
angular momentum passing through $\Co^{\mrm{fut}}(\tau)$ and $\Hc(\la)$.
If the above divergencies are absolutely integrable over the region
$x^2 > 1$, $x^0>0$,
then these differences vanish in the limit and the replacement
of $\Co^{\mrm{fut}}(\tau)$ by $\Hc(\la)$ is justified. If we assume that
$\left| j_a(\la, z)\right| < h(z)$, then by (\ref{bl2})
the r.h. side of (\ref{dyw1})
is bounded by $\dsp
\con\frac{h(z) (z^0)^{1+\e}}{\la^4 z^0 (\la + s_t z^0)^{1+\e}}$ and
the r.h. side of (\ref{dyw2}) by a similar quantity multiplied by $\la z^0$.
For $x=\la z$ there is $d^4x = \la^3d\la\, \m(z)$,
so both these expressions are
integrable over $\la>1$ for $h(z)$ such that
$\dsp \int h(z) (z^0)^{1+\e}\, \m(z) <\infty$.

The preceding discussion brings us to the following representations
\begin{align*}
 P^{\mrm{out-t}}{}_a
 &= \lim_{\la\to\infty}\int \la^3 T'_{ac}(\la z) z^c\, \m(z)\, ,\\
 M^{\mrm{out-t}}{}_{ab}
 &= \lim_{\la\to\infty}
 \int \la^4\lp z_a T'_{bc}(\la z) - z_b T'_{ac}(\la z)\rp z^c\, \m(z)\, .
\end{align*}
The limits here will be treated rather formally, by assuming that for
large $\la$ only the leading (constant at least) terms of the integrands
contribute. In this way there is no contribution from $T^{\mrm{adv}}{}_{ab}$
to $P^{\mrm{out-t}}{}_a$ and contribution to $M^{\mrm{out-t}}{}_{ab}$ comes
from\\ $-\frac{1}{16\pi}\lp z_a f_{bd}(z) - z_b f_{ad}(z)\rp
f_c{}^d z^c$. This term, however, vanishes identically, since
\mbox{$z_{[a} f_{bc]} =0$}.

Consider finally $T^{\mrm{D}}{}_{ab}$, which gives the only nonvanishing
contributions. Writing $\psi(\la z) = \la^{-3/2}\ch(\la, z)$ and using the
Dirac equation (\ref{dirac}) we have
\begin{align*}
 (i\n^a &- eA^a)\psi(\la z) \\
 &= \la^{-3/2} \Big\{ z^a\g\s z
 \Big(m -\frac{1}{\la}\g_\T\s p + e\g\s A\Big) + \frac{1}{\la}p^a - eA^a
-\frac{i}{2\la}\g\s z \g_\T{}^a\Big\}\ch(\la, z)\, \
\end{align*}
where $p^a$ is the operator defined in (\ref{p}). Now, \\
\mbox{$\ch(\la, z)= -i e^{\textstyle -i(m\la+\pi/4)\g\s z} f_\la(z)$}
and we treat $\ch$ as $O(\la^0)$. Then
\begin{align*}
 &\la^3T^{\mrm{D}}{}_{ac} z^c = m z_a \ov{\ch}\ch + O(\la^{-\e})\, , \\
 &\la^4\lp z_aT^{\mrm{D}}{}_{bc} - z_bT^{\mrm{D}}{}_{ac}\rp z^c
 =\frac{1}{2}\left\{ z_a\ov{\ch}\,\g\s z \lp p_b-e\la A_b+\frac{1}{2}
 \g_\T{}_{[b}\g_\T{}_{c]} p^c\rp\ch - (a\leftrightarrow b)\right\}\\
 &\hspace{20em}+\cc \, .
\end{align*}
The result of integration over the hyperboloid can be written in terms of
the scalar product of Section \ref{sdir}
\begin{align}
 &\int\la^3 T^{\mrm{D}}{}_{ac}(\la z)z^c\, \m(z)
 = m(\ch, \g\s z z_a\ch) + O(\la^{-\e})\, ,
 \label{intdir1}\\
 &\begin{aligned}
 \int\la^4&\lp z_aT^{\mrm{D}}{}_{bc}(\la z)
 -z_bT^{\mrm{D}}{}_{ac}(\la z)\rp z^c\, \m(z) \\
 &= \lp\ch, (z_ap_b - z_bp_a)\ch\rp
 -e\lp\ch, (z_a\la A_b(\la z) - z_b\la A_a(\la z))\ch\rp \\
 &+ \frac{1}{4}\lp\ch,\left[ z_a\g_\T{}_{[b}\g_\T{}_{c]} -
 z_b\g_\T{}_{[a}\g_\T{}_{c]}, p^c\right]\ch\rp \, ,
\end{aligned}
\label{intdir2}
\end{align}
where the symmetry of operators was taken into account. The operators
appearing in the averages
commute with $\g\s z$, so $\ch$ can be replaced by$f_\la$, and further, up to
$O(\la^{-\e})$, by $f(z)$. Using
$\dsp \left[\g_\T{}_{[b} \g_\T{}_{c]}, p^c\right]=0$ and
$\left[p_c, z_a\right]= ih_{ca}$ we transform the third term in (\ref{intdir2})
to the form $\dsp \frac{i}{4}\lp f,
\left[\g_\T{}_a\g_\T{}_b\right]f\rp
+ O(\la^{-\e})$. Contributions to the second term up to $O(\la^{-\e})$
could only come from $A^{\mrm{Coul}}_{\mrm{tr}}$. However,
\begin{eqnarray*}
&&-2e\lp f, z_{[a}\lp a_{b]}(z) - \ln\la\, \delta_{b]}(z\s a(z)\rp f\rp \\
&&=\int\rho(z)\int\rho(v)\frac{z_av_b - z_bv_a}{\sqrt{(z\s v)^2 -1}}
\lp 1 + \frac{\ln\la}{(z\s v)^2-1}\rp\, \m(v)\, \m(z) = 0\, ,
\end{eqnarray*}
due to antisymmetry of the integrand with respect to interchange of
integration variables $z\leftrightarrow v$. Taking now the limit $\la\to\infty$
we finally obtain
\begin{eqnarray}
&&P^{\mrm{out-t}}{}_a = m\lp f, \g\s z z_af\rp=
m\int z_a \ov{f}f(z)\,\m(z)\, ,
\label{poutt}\\
&&\begin{array}{@{}l}
\dsp M^{\mrm{out-t}}{}_{ab}= \lp f, \lp z_ap_b-z_bp_a
+\frac{i}{4}\lb \g_\T{}_a, \g_\T{}_b\rb\rp f\rp \\
\dsp\hspace*{2cm} =\int \ov{f}\g\s z\lp z_ai\delta_b-z_bi\delta_a +
\frac{i}{4}\lb\g_a, \g_b\rb\rp f(z)\, \m(z)\, ,
\end{array}
\label{moutt}
\end{eqnarray}
which are the desired formulae for the quantities going out in timelike
directions. If we define the free outgoing Dirac field by (cf. (\ref{freed}))
$$
\psi^{\mrm{out}}(x)= \lp\frac{m}{2\pi}\rp^{3/2}\int
e^{\textstyle -imx\s v \g\s v}\g\s v\, f(v)\, \m(v)\, ,
$$
then the above expressions give the Fourier representations of the
conserved quantities of this field (in a somewhat unusual but most compact
form). Similar expressions could be obtained for the timelike past infinity.

The task of expressing the total energy-momentum and angular momentum of the
interacting theory in terms of asymptotic fields has been now completed.
As anticipated, the contributions of electromagnetic and massive free
fields almost separate, except for a term in the radiated angular momentum
due to the long-range part of the electromagnetic field. This term
(the second one on the r.h. side of (\ref{split2})) can be now rewritten
by the use of matter asymptotics. $q(o, \oc)$ is now given by formula
(\ref{Di6}), hence
\begin{align*}
 \Delta\mu_{AB}&\equiv \frac{1}{2\pi}\int qo_{(A}\D_{B)}\F(o, \oc)\, \dl
 =-\frac{e}{4\pi}\int \ov{f}\g\s z f(z)
 \int \F(o, \oc)o_{(A}\D_{B)}\frac{1}{(z\s l)^2}\, \dl\, \m(z)\\
 &= -\int \ov{f}\g\s z f(z)(z) z_{C'(A}\delta_{B)}^{C'} H(z)\,\m(z)\, ,
\end{align*}
or in the tensor form
$$
\Delta M_{ab}\equiv \Delta\mu_{AB}\e_{A'B'} + \cc =
-(f, (z_a\delta_bH - z_b\delta_aH)f)\, ,
$$
where
\be
H(z)=\frac{e}{4\pi}\int\frac{\F(l)}{(z\s l)^2}\, \dl\, .
\label{phase}
\ee
If we now change the phase of $f(z)$ by introducing
\be
g(z)= e^{\textstyle iH(z)} f(z)\, ,
\label{chph}
\ee
then $M^{\mrm{out-t}}{}_{ab} + \Delta M_{ab}$ has again the form
(\ref{moutt}) but with $f$ replaced by $g$, while $P^{\mrm{out-t}}{}_a$
retains its form under this replacement. With this final representation
the total quantities (\ref{tot1}) and (\ref{tot2}) look
formally like sums of two free-fields contributions. The very nonlocal
transformation (\ref{chph}) has now accommodated the mixing aspects of the
asymptotics.

It is interesting to note that $H(z)$ acquires here the role of a phase
in a very natural way. This is rather satisfying, since the same conclusion has
been reached earlier in a different way, by considering a quantum version of an
"adiabatic approximation", see ref. \cite{her} (for "quantum field" $\F(l)$
definition (\ref{phase}) gives $\F(g_z)$ of this reference).
Moreover, $-2H(z)$ is identical with the change of phase $\delta(z)$ (\ref{fasta})
in the external field problem calculated by Staruszkiewicz.
On the other hand $H(z)$ is distinct from a phase variable considered
by Staruszkiewicz \cite{sta89} in his theory of quantum Coulomb field.
We discuss the difference in some detail.
A phase field of \cite{sta89} is a homogeneous of degree $0$ field in the region
$x^2<0$, satisfying there the homogeneous wave equation. Such a field can be
represented by
$$
S(x)=\int\left\{\sgn(x\s l)\, f_1(l) +
\ln\frac{|x\s l|}{t\s l} f_2(l)\right\}\, \dl + c_t\, ,
$$
where $f_1$, $f_2$ are homogeneous of degree $-2$ functions of $l$ and
$\dsp \int f_2(l)\, \dl =0$, $t$ is a timelike, unit,
future-pointing vector and
$c_t$ is a constant; this constant changes for another choice of vector $t$
according to $\dsp c_{\tilde{t}}= c_t +
\int\ln\frac{\tilde{t}\s l}{t\s l}\, f_2(l)\, \dl$. Consider the spherically
symmetric term $S_z(x)$ in the expansion of $S(x)$ in spherical harmonics
in a coordinate system in which $z$ points in the direction of the
time-axis. One easily shows that
$$
S_z(x) = \int f_1(l)\, \dl\, \frac{x\s z}{\sqrt{(x\s z)^2 - x^2}} + c_z\, .
$$
Identifications of Staruszkiewicz are
$$
-\frac{1}{e}\int f_1(l)\, \dl = \text{charge}\, ,~~~~~
c_z = \text{phase~variable}\, .
$$
To compare this with our identifications we
use the relation of $S(x)$ to the long-range field of \cite{sta89}
$$
-eF^{\mrm{l.r.}}{}_{ab}(x) x^b = \n_aS(x)\, .
$$
One easily shows using (\ref{lE}) and (\ref{Ka}) that in our description
a field $S(x)$ which can be formed out of the long-range variables and which
satisfies this relation is given by
$$
S(x)=-\frac{e}{2\pi}\int\sgn(x\s l)\, (q(l)+\si(l))\, \dl\, .
$$
For this field
$$
S_z(x)= -e Q\frac{x\s z}{\sqrt{(x\s z)^2 - x^2}}\, .
$$
The charge part agrees with that of Staruszkiewicz, but the analog of his phase
variable is absent. Our phase variable, which is the null spherical harmonic in
$e\F(l)$, does not appear in $S_z(x)$. (The absence of logarithmic terms in
our version of $S(x)$ is due to the conditions on null asymptotics of fields.)

\section{Conclusions}

The main results of our analysis can be summarized as follows.
\begin{itemize}[topsep=1ex,itemsep=0ex,leftmargin=2.2em]
\item[(i)] Despite nonintegrability of the angular momentum tensor density over a Cauchy surface,
the total angular momentum (four dimensional) can be unambiguously identified, provided (a)
angular momentum radiated (or incoming from null directions) over finite time intervals is well
defined, and (b) the magnetic part of the spacelike asymptotic of the electromagnetic field vanishes.
\item[(ii)] Asymptotic Dirac field can be identified by a special choice of gauge and consideration of
the asymptotic behavior of the Dirac field on the hyperboloid $x^2=\lambda^2$ for $\lambda\to\infty$.
\item[(iii)] The total energy momentum of the system can be expressed as a sum of independent
contributions from the asymptotic free electromagnetic field and the asymptotic Dirac field. However,
in the analogous representation of the angular momentum an additional term survives, which
mixes the asymptotic Dirac field characteristic with the infrared characteristic of the free asymptotic
electromagnetic field. This effect persists in the limit of the energy tending to zero. The
additional term can be accommodated into the matter part by a redefinition of the asymptotic Dirac
field. This is a very nonlocal transformation mixing the matter aspects with the spacelike asymptotics
of radiation.
\end{itemize}

\section*{Acknowledgements}
\label{ack}
I would like to thank Professor D.\,Buchholz for careful reading of
the ma\-nu\-script and interesting discussions, and Professor
A.\,Staruszkiewicz for some remarks. I am grateful to the II. Institut
f\"{u}r Theoretische Physik, Universit\"{a}t Hamburg for hospitality and
to the Humboldt Foundation for financial support.

\setcounter{equation}{0}
\setcounter{pr}{0}
\setcounter{section}{0}
\renewcommand{\thesection}{Appendix \Alph{section}}
\renewcommand{\theequation}{\Alph{section}.\arabic{equation}}
\renewcommand{\thepr}{\Alph{section}.\arabic{pr}}
\section{Homogeneous functions\newline of a spinor variable}
\label{apa}

We reproduce here some facts about the invariant measure over the null
directions in Minkowski space \cite{sta,zwa,pen} and on spin-weighted spherical
harmonics \cite{pen}.

Let $u$ denote a vector on the future lightcone. The measure $\dsp
{d^3 u\over u^0}$ is known to be Lorentz invariant. If we define
a measure over the null directions
$d^2u$ by $\dsp {d^3 u\over u^0} = {du^0\over u^0} d^2u$
(in the notation of \cite{sta}), then the new measure is Lorentz invariant
in the following sense: The result of integration of a homogeneous of degree
$-2$ function of $u$ is manifestly Lorentz invariant.

The invariant measure has a very simple and elegant representation in the
spinor language \cite{pen}. If $\xi^A$ is a spinor of the null vector $u$,
then
$$
d^2u = i\xi^{A'} d\xi_{A'}\wedge \xi^A d\xi_A \, .
$$
Here any parametrization of spinors is implied for which every null direction
is represented by exactly one spinor. The scaling behaviour of $d^2 u$ is now
explicit:
$$
\text{if}~\xi\to\alpha\xi\, ,~\text{then}~\du\to(\alpha\ac)^2\du\, .
$$

Some special scalings of spinors are useful. We say that a spinor $o^A$ is
chosen in a $t$-gauge, if its null vector $l$ satisfies $t\s l =1$, where
$t$ is a~fixed unit timelike vector. In this scaling the measure $\du$ is
the rotationally invariant measure over unit $2$-sphere in the reference
system with time-axis along $t$ \cite{zwa,sta}, which we denote $d\W_t(u)$.

Let us choose a fixed spinor $o^A$ in a $t$-gauge and denote $\iota^A=
t^{AA'} o_{A'}$. Then $\{o^A, \iota^A\}$ is a normalised spinor basis:
$o_A\iota^A =1$. Parametrize $\xi^A$ with complex numbers $\kappa$ from the
closed unit circle by the formula
\be
\xi^A = (1-|\kappa|^2)^{1/2} o^A + \sqrt{2}\kappa\,\iota^A \, .
\label{par1}
\ee
Then $\xi^A$ is in a $t$-gauge and
$d\W_t(u) = 2i\, d\bar{\kappa}\wedge d\kappa$. Setting $\kappa=
\rho e^{\textstyle -i\varphi}$, $\rho\in\langle 0, 1\rangle$,
$\varphi\in\langle 0, 2\pi)$ we obtain $d\W_t(u) = 2d\rho^2\wedge d\varphi$.
Finally, substituting $\dsp \rho = \sin\frac{\vartheta}{2}$,
$\vartheta\in\langle 0,
\pi\rangle$ we obtain the spherical angles parametrization:
\be
\xi^A = \cos{\vartheta\over 2} o^A + \sqrt{2}\sin{\vartheta\over 2}
e^{\textstyle -i\varphi}\iota^A \, ,
\label{par2}
\ee
\be
u^a = t^a + \sin\vartheta (\cos\varphi X^a + \sin\varphi Y^a ) +
\cos\vartheta Z^a \, ,
\label{par3}
\ee
where $\dsp X^a = {1\over\sqrt{2}}(o^A\iota^{A'} + \iota^A o^{A'})$,
$\dsp Y^a = {i\over\sqrt{2}}(o^A\iota^{A'} - \iota^A o^{A'})$,
$Z^a = l^a - t^a$ is a~Cartesian basis, and $d\W_t(u) = \sin\vartheta\,
d\vartheta\wedge d\varphi$.

The invariant integral is an important tool in the theory of homogeneous
functions of a spinor variable, known as the theory of spin-weighted
spherical harmonics \cite{pen}. We reproduce some results of the theory needed
in the present paper.

A function $f(o, \oc)$ is said to be of type $\{p, q\}$ if
\be
f(\alpha o, \ac\oc)= \alpha^p\ac^q f(o, \oc)\, .
\label{swh1}
\ee
Choose any timelike versor $t^a$ and
denote $\dsp \io^A =\frac{t^{AA'}o_{A'}}{t\s l}$. Denote also
$\dsp \D_A=\frac{\D}{\D o^A}$,\linebreak
$\dsp \D_{A'}=\frac{\D}{\D o^{A'}}$. Then one has
\be
\begin{array}{@{}l}
\dsp \mrm{(i)~if}~p-q>0~\mrm{then}~\io^A\D_A f=0 \Rightarrow f=0;\\
\dsp \mrm{(ii)~if}~p-q<0~\mrm{then}~\io^{A'}\D_{A'} f=0
\Rightarrow f=0\, .
\end{array}
\label{swh2}
\ee
For $f_1$: $\{0, q_1\}$, $f_2$: $\{p_2, 0\}$ one has by Euler theorem
\be
\D_A f_1 = o_A g_1,~~~\D_{A'} f_2 = o_{A'} g_2\, .
\label{swh3}
\ee
If $q_1<0$, $p_2<0$ then
(i) and (ii) imply that $f_{\mrm{i}}$ are uniquely determined by
$g_{\mrm{i}}$ ($\mrm{i}=1,2$).
Moreover, if $q_1=p_2=-2$, so that $g_1$ and $g_2$ are of type $\{-2, -2\}$,
then
\be
\int g_1\, \dl = \int g_2\, \dl = 0\, .
\label{inth1}
\ee
Using this, one also easily shows that
\be
\int \D_A h_1\, \dl=0\, ,~~~\int \D_{A'} h_2\, \dl=0\, ,
\label{inth2}
\ee
for $h_1$: $\{-1, -2\}$, and $h_2$: $\{-2, -1\}$.

\setcounter{equation}{0}
\setcounter{pr}{0}
\section{Some estimates and limits}
\label{apb}

We prove here various estimates of asymptotic behaviour of fields and
quantities appearing in this paper. Our tool is the following simple
lemma. Let $a>0$, $b\geq 0$, $c>0$, $\alpha >0$ (all real).
Then
\be
\int_0^c (a + bu)^{-\alpha}\, du
< \left\{ \begin{array}{ll}{\dsp {\alpha \over \alpha -1}\,
{c\over a^{\alpha -1}(a + cb)}\, ,} &\alpha >1\\
{\dsp {1\over 1-\alpha}\,{c\over (a + cb)^\alpha} \, ,} &\alpha <1\, .
\end{array}\right. ~~~{}
\label{est}
\ee
To see this, represent the result of integration by
$\dsp \frac{c}{(a+bc)^\alpha}\, h_\alpha\lp\frac{cb}{a}\rp$ for $\alpha<1$ and by
$\dsp \frac{c}{a^{\alpha-1}(a+bc)}\, h_{2-\alpha}\lp\frac{cb}{a}\rp$ for
$\alpha>1$, where $\dsp h_\beta(x)=\frac{1}{(1-\beta)x}\lb x+1 - (x+1)^\beta\rb$
for $\beta<1$. $h_\beta(x)$ is a continuous function of $x\geq 0$, $h_\beta(0)=1$,
$\dsp \lim_{x\to\infty} h_\beta(x) = \frac{1}{1-\beta}$. Moreover, it is
easy to see that $\dsp x^2 h'_\beta(x)=\beta\int_0^x y(y+1)^{\beta-2}\, dy$,
so $h_\beta(x)$ is monotonous. Hence
$\dsp h_\beta(x)<\max\left\{1, \frac{1}{1-\beta}\right\}$, which ends
the proof of (\ref{est}).

Let $C_{\mrm{k}}(x)$, $\mrm{k}=0, 1, 2$, be free fields
$$
C_{\mrm{k}}(x) = -\frac{1}{2\pi}\int f_{\mrm{k}}(x\s u, u)\, \du\, ,
$$
such that for $s>0$ there is $\dsp |f_{\mrm{k}}(s, u)|<
\frac{\con}{(s+s_t)^{\mrm{k}+\e}}$ in some $t$-gauge ($s_t>0$). Then using
the above lemma we obtain immediately in $t$-gauge
\begin{eqnarray}
&&|C_0(st+Rl)|<\frac{\con}{(s+s_t+2R)^\e}\, ,
\label{bs0}\\
&&|C_1(st+Rl)|<\frac{\con}{(s+s_t)^\e(s+s_t+2R)}\, ,
\label{bs1}\\
&&|C_2(st+Rl)|<\frac{\con}{(s+s_t)^{1+\e}(s+s_t+2R)}\, .
\label{bs2}
\end{eqnarray}
For $x^a$ inside the forward lightcone set $x^a=\la z^a$, with $z^2=1$,
$z^0>0$. The above bounds imply then
\begin{eqnarray}
&&|C_0(\la z)|<\frac{\con}{(s_t +\la z^0)^\e}\, ,
\label{bl0}\\
&&|C_1(\la z)|<\con\frac{(z^0)^\e}{(\la +s_t z^0)^\e(s_t+\la z^0)}\, ,
\label{bl1}\\
&&|C_2(\la z)|<\con\frac{(z^0)^{1+\e}}{(\la +s_t z^0)^{1+\e}(s_t+\la z^0)}\, .
\label{bl2}
\end{eqnarray}
All bounds (\ref{bs0}-\ref{bl2}) hold also in other reference systems, with
$s_t$ and other constants depending on vector $t$.

We turn now to the energy-momentum and angular momentum of electromagnetic
field passing through the cone $\Co^{\mrm{fut}}(\tau)$:
\begin{eqnarray}
&&P^{\mrm{elm}}{}_a\lb\Co^{\mrm{fut}}(\tau)\rb=\int T^{\mrm{elm}}{}_{ac}
(\tau t+Rl)\, l^c\, R^2dR\, d\W_t(l)\, ,
\label{estcone1}\\
&&\mu^{\mrm{elm}}{}_{AB}\lb\Co^{\mrm{fut}}(\tau)\rb=\int
\mu^{\mrm{elm}}{}_{ABc}
(\tau t+Rl)\, l^c\, R^2dR\, d\W_t(l)\, ,
\label{estcone2}
\end{eqnarray}
where $l^c$ is in a $t$-gauge, $T^{\mrm{elm}}{}_{ac}$ and
$\mu^{\mrm{elm}}{}_{ABc}$ are given by (\ref{emt}) and (\ref{ams3})
respectively. We shall show that both quantities vanish in the limit
$\tau\to\infty$ for the free field (\ref{frel1}) and for the
mixed terms of the free
outgoing field and the advanced field of the asymptotic Dirac current.
To this end estimates of $\f_{AC}(\tau t+Rl)o^C$ and
$\fa_{A'C}(\tau t+Rl)o^C$ are needed. For the
free field we use the representations
(\ref{frel1}), (\ref{frel2}) with the spinor variable $\xi^A$ and integrate
by the use of identity
$$
\lp\frac{\tau}{2}+R\rp o_A\xi^A \dot{f}((\tau t+Rl)\s u, \xi, \xc)
= \io^{A'}\lp\frac{\D}{\D\xi^{A'}} - \frac{\D'}{\D\xi^{A'}}\rp
f((\tau t+Rl)\s u, \xi, \xc) \, ,
$$
($\io^{A'}=t^{A'A}o_A$). Then
\begin{eqnarray*}
&&\f_{AC}(\tau t+Rl)o^C =-\frac{1}{\pi(\tau +2R)}\int\xi_A\io^{B'}
\D'_{B'} \dot{\z}((\tau t+Rl)\s u, \xi, \xc)\, \du\, ,\\
&&\fa_{A'C}(\tau t+Rl)o^C =-\frac{1}{\pi(\tau +2R)}\int\io^{B'}
\D'_{B'}\D'_{A'}\z((\tau t+Rl)\s u, \xi, \xc)\, \du\, .
\end{eqnarray*}
The estimates (\ref{bs1}), (\ref{bs0}) give
\begin{eqnarray}
&&\left|\f_{AC}(\tau t+Rl)o^C \right|<\frac{\con}{\tau^\e(\tau+2R)^2}\, ,
\label{estcone3}\\
&&\left|\fa_{A'C}(\tau t+Rl)o^C \right|<\frac{\con}{(\tau+2R)^{1+\e}}\, .
\label{estcone4}
\end{eqnarray}
\begin{flushleft}
Using these bounds in (\ref{estcone1}) and (\ref{estcone2}) one gets
$\dsp \left|P_a\lb\Co^{\mrm{fut}}(\tau)\rb\right|<
\frac{\con}{\tau^{1+2\e}}$,
$\dsp \left|\mu_{AB}\lb\Co^{\mrm{fut}}(\tau)\rb\right|<
\frac{\con}{\tau^{2\e}}$ for the free field, which proves (\ref{freefut}).
For the advanced field we use (\ref{rest2})--(\ref{co2}). This yields
\end{flushleft}
\begin{eqnarray}
&&\left|\f^{\mrm{adv}}{}_{AB}(\tau t+Rl)\right|<\frac{\con}{(\tau+R)^2}\, ,
\label{estcone5}\\
&&\left|\lp\fa^{\mrm{adv}}{}_{AA'}-\fa^{\mrm{Coul}}{}_{AA'}\rp(\tau t+Rl)
\right|<\frac{\con}{(\tau+R)^{1+\e}}\, .
\label{estcone6}
\end{eqnarray}
For estimation of $\fa^{\mrm{Coul}}{}_{A'C}(\tau t+Rl)o^C$ one has to use
more specific algebraic property of the Coulomb field. One shows with the use
of (\ref{co1},\ref{co2}) that
$$
\f^{\mrm{Coul}}{}_{AB}(x)x^A_{A'}x^B_{B'} = \frac{x^2}{2}\int\rho(v)
\frac{v_{C(A'}x^C_{B')}}{\sqrt{(x\s v)^2-x^2}}\,
\frac{\m(v)}{(x\s v)^2-x^2}\, .
$$
The integral on the r.h. side is estimated as the Coulomb field itself.
Taking into account that
$$
\f^{\mrm{Coul}}{}_{AB}(x)x^A_{A'}x^B_{B'}\io^{B'}
\Big|_{\textstyle x=\tau t+Rl}
=\left(\frac{\tau}{2}+R\right)\fa^{\mrm{Coul}}{}_{A'C}(\tau t+Rl)o^C
$$
we get
\be
\left|\fa^{\mrm{Coul}}{}_{A'C}(\tau t+Rl)o^C\right|<\con
\frac{\tau}{(\tau +R)^2}\, .
\label{estcone7}
\ee
A straightforward calculation shows now that
(\ref{estcone5}--\ref{estcone7}) together with
(\ref{estcone3}, \ref{estcone4}) are
sufficient for vanishing of the mixed terms contributions to
(\ref{estcone1}, \ref{estcone2}) in the limit $\tau\to\infty$.

The last point in this appendix is the demonstration of (\ref{reste}) and
(\ref{resta}) for the mixed contributions of the free field and
the (generalized) Coulomb
field (\ref{Coul}) to the electromagnetic energy-momentum tensor. We give
explicit calculations for the case of $\Co'^{\mrm{fut}}(-r)$, where
\begin{eqnarray*}
&&P^{\mrm{mix}}{}_a\lb\Co'^{\mrm{fut}}(-r)\rb = \int d\W_t(u)
\int_r^\infty T^{\mrm{mix}}{}_{ac}(-rt+Ru)u^c\, R^2\, dR\, ,\\
&&\mu^{\mrm{mix}}{}_{AB}\lb\Co'^{\mrm{fut}}(-r)\rb = \int d\W_t(u)
\int_r^\infty \mu^{\mrm{mix}}{}_{ABc}(-rt+Ru)u^c\, R^2\, dR\, .
\end{eqnarray*}
We use (\ref{Coul}) (with $a=0$) and (\ref{frel1}, \ref{frel2}) in
mixed terms of (\ref{emt}) and (\ref{ams3}). Integration over $R$ is then
explicitly carried out with the use of identities
\begin{eqnarray*}
&&\f_{AC}(-rt+Ru)\xi^C = \D_R\lp-\frac{1}{2\pi}\rp
\int o_A\dot{\z}(-r+Rl\s u, o, \oc)\, \frac{d\W_t(l)}{o_{C'}\xi^{C'}}\, ,\\
&&\fa_{A'C}(-rt+Ru)\xi^C = \D_R\lp-\frac{1}{2\pi}\rp
\int \D'_{A'}\z(-r+Rl\s u, o, \oc)\, \frac{d\W_t(l)}{o_{C'}\xi^{C'}}\, .
\end{eqnarray*}
We get
\begin{align*}
 &P^{\mrm{mix}}{}_a\lb\Co'^{\mrm{fut}}(-r)\rb = -\frac{\ov{Q}}{8\pi^2}
 \int d\W_t(u)\, d\W_t(l) \frac{o_A\xi_{A'}}{o_{C'}\xi^{C'}}
 \dot{\z}(r(l\s u-1), o, \oc)+\cc\, ,\\
 &\mu^{\mrm{mix}}{}_{AB}\lb\Co'^{\mrm{fut}}(-r)\rb = \frac{1}{8\pi^2}
 \int d\W_t(u)\, d\W_t(l)\times\\
 &\hspace*{4em}\times\left\{\frac{Q}{o_C\xi^C}\xi_{(A}\D'_{B)}
 \ov{\z(r(l\s u-1), o, \oc)} -
 \frac{\ov{Q}r}{o_{C'}\xi^{C'}}\xi_{D'}t^{D'}_{(A} o_{B)}\dot{\z}(r(l\s u-1), o, \oc)\right\}\, .
\end{align*}
The energy-momentum expression vanishes in the limit $r\to\infty$ by the
Lebes\-gue theorem, while the first term in the angular momentum expression
yields
\begin{eqnarray*}
&&\frac{Q}{8\pi^2}\int d\W_t(l)\, d\W_t(u)\,
\frac{\theta(1-l\s u)}{o_C\xi^C} \xi_{(A}\D_{B)}\ov{\z(-\infty, o, \oc)}\\
&&=\frac{1}{4\pi}\int \z^Q{}_{(A}\ov{\nu}_{B)}(-\infty, o, \oc)\, \dl\, ,
\end{eqnarray*}
where $\xi$-integration in the parametrization (\ref{par2}) was performed. The
second term in this parametrization after $\f$-integration is
\begin{eqnarray*}
&&\frac{\ov{Q}r}{8\pi}\int d\W_t(l)\, d\theta\, \sin\vartheta o_Ao_B
\dot{\z}(-r\cos\theta, o, \oc)\\
&&\hspace*{-1em}=\frac{\ov{Q}}{8\pi}\int d\W_t(l)\, o_Ao_B\lb\z(r, o, \oc)-
\z(-r, o, \oc)\rb
\to\frac{1}{4\pi}\int\z_{(A}\ov{\nu}^Q{}_{B)}(-\infty, o, \oc)\, \dl
\end{eqnarray*}
for $r\to\infty$, which ends the proof.

\setcounter{equation}{0}
\setcounter{pr}{0}
\section{3-space integrals}
\label{apc}

We prove here a lemma, from which the formulae (\ref{emcf}) and (\ref{amcf})
for conserved quantities of a free electromagnetic field follow by a
simple computation.

\begin{lem}
\label{lemsi}
Let $f_1(s, o, \oc)$ and $f_2(s, o, \oc)$ be continuously differentiable
functions satisfying scaling law \mbox{$f(\alpha\ac s, \alpha o, \ac\oc) =
\alpha^{-2}\ac{}^{-1} f(s, o, \oc)$}, such that $|f_2(s, o, \oc)|$ is boun\-ded,
$|f_1(s, o, \oc)|$, $|\D_Af_1(s, o, \oc)|$ and $|\D_{A'}f_1(s, o, \oc)|$
are boun\-ded by an integrable function (in any fixed gauge), and there exist
limits  $\dsp \lim_{s\to+\infty}f_2(s, o, \oc) =
-\lim_{s\to-\infty}f_2(s, o, \oc)$.\\
Then
\begin{equation}\label{calsi}
\begin{aligned}
 \lim_{r\to\infty}&
 \int\limits_{\{ t\cdot x=c,~(t\cdot x)^2-x^2\leq r^2\} }\left\{
 t^a\frac{1}{2\pi}\int\ov{\dot{f}_1(x\s u, \xi, \xc)}\xi_{A'}\, \du
 \frac{1}{2\pi}\int \dot{f}_2(x\s l, o, \oc)o_A\, \dl\,\right\} d^3x\\[1ex]
 &=\int\ov{f_1(s, o, \oc)}f_2(s, o, \oc)\, ds\, \dl\, .
\end{aligned}
\end{equation}
\end{lem}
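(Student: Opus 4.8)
The plan is to compute the $3$-space integral by passing to Fourier representations in the null variable and then using the fact that integrating a product of two null wave-packets over an entire spacelike hyperplane forces the two null directions to coincide. First I would fix the $t$-gauge and write $x = c\,t + y$ with $t\s y = 0$; since $(t\s x)^2 - x^2 = -y^2 = |\vec{y}|^2$, the domain $\{t\s x = c,\ (t\s x)^2 - x^2 \le r^2\}$ is the ball $|\vec{y}|\le r$, and in this gauge $x\s u = c + y\s u$, $x\s l = c + y\s l$.

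Next I would Fourier-expand each factor in its scalar argument via (\ref{confour2}), so that $\dot f_j(x\s l,\cdot) = \int(-i\w)\tilde f_j(\w,\cdot)\,e^{-i\w(c+y\s l)}\,d\w$. Writing $u = t + \hat{u}$, $l = t + \hat{l}$ with $\hat{u},\hat{l}$ unit spatial directions, the only $y$-dependence is the plane wave $e^{\,i\vec{y}\s(\w_2\hat{l}-\w_1\hat{u})}$, and $\int_{|\vec{y}|\le r} d^3y$ is an approximate $\delta^3$ converging distributionally to $(2\pi)^3\delta^3(\w_2\hat{l}-\w_1\hat{u})$ as $r\to\infty$. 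The remaining integrand in $(\w_1,\w_2,\hat{u},\hat{l})$ is a genuine test function: the integrability bounds on $f_1$ and $\dot f_1$ give decay in $\w_1$, while the bounds on $\D_A f_1$ and $\D_{A'}f_1$ supply the angular smoothness needed to pair against the approximate delta, and the hypothesis $f_2(+\infty)=-f_2(-\infty)$ guarantees that $\widetilde{\dot f_2}(\w)=-i\w\,\tilde f_2(\w)$ is bounded and continuous at $\w=0$, so that the $\w=0$ contribution is picked up as a principal value and no spacelike-infinity divergence survives.

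Solving the delta constraint $\w_2\hat{l}=\w_1\hat{u}$ with $|\hat{u}|=|\hat{l}|=1$ leaves only the diagonal $\hat{u}=\hat{l},\ \w_1=\w_2$ and the antipodal $\hat{u}=-\hat{l},\ \w_1=-\w_2$, each with Jacobian $1/\w_2^2$ after integrating out $(\w_1,\hat{u})$. On the diagonal $u=l$, so the spinor contraction is $t^{AA'}\oc_{A'}o_A = t\s l = 1$, the phase $e^{i(\w_1-\w_2)c}$ is trivial, and $\overline{\widetilde{\dot f_1}}\,\widetilde{\dot f_2} = \w_2^2\,\overline{\tilde f_1}\,\tilde f_2$ cancels the Jacobian; collecting the prefactor $(2\pi)^3/(2\pi)^2 = 2\pi$ and using Plancherel $\int\overline{\tilde f_1}\tilde f_2\,d\w = \frac{1}{2\pi}\int\overline{f_1}f_2\,ds$ reproduces exactly $\int\overline{f_1}f_2\,ds\,\dl$. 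On the antipodal configuration the spinor of $u$ is, up to phase, the complementary spinor $\iota$, so the contraction becomes $t^{AA'}\bar\iota_{A'}o_A = t\s m = 0$ for the complex null vector $m_a = o_A\bar\iota_{A'}$; hence the antipodal term vanishes identically and only the diagonal contributes, giving the claimed identity.

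I expect the main obstacle to be the rigorous justification of the step that interchanges the $\vec{y}$-integral with the null-direction integrals and controls the $r\to\infty$ limit. For fixed $u\neq\pm l$ the integrand does not decay along the direction orthogonal to both $\hat{u}$ and $\hat{l}$, so the ball integral of a single direction pair diverges linearly in $r$; finiteness emerges only after the angular integration concentrates this divergence onto the diagonal. This is precisely why the argument must be organized as a distributional limit of the approximate delta paired with a test function rather than as a naive application of Fubini's theorem, and it is exactly here that the differentiability-plus-integrability hypotheses on $f_1$, together with the antisymmetric limit condition on $f_2$, become indispensable.
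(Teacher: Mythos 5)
Your route is genuinely different from the paper's, and its formal skeleton does isolate the right mechanism: in Fourier space the ball integral produces an approximate $\delta^3(\w_2\hat l-\w_1\hat u)$, the constraint has a diagonal and an antipodal branch each with Jacobian $1/\w_2^2$, the antipodal branch is killed because $t^{AA'}\bar\iota_{A'}o_A=0$, and on the diagonal $\ov{\widetilde{\dot f}_1}\,\widetilde{\dot f}_2=\w_1\w_2\,\ov{\tilde f_1}\tilde f_2$ cancels the Jacobian so that Plancherel returns the right-hand side. (The paper's proof in \ref{apc} relies on the same two facts in real-space disguise: there the contraction equals $\cos\frac{\theta}{2}$, which is $1$ at $u=l$ and $0$ at the antipode; the paper trades $\dot f_1$ for $f_1$ by one integration by parts, rewrites $y_2\dot f_2$ as $-\D_\theta f_2$, integrates by parts in $\theta$ so that the boundary term at $\theta=0$ is exactly the right-hand side, and kills the bulk term by dominated convergence using $f_2(+\infty)+f_2(-\infty)=0$.)

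Nevertheless your proposal has genuine gaps, and they sit precisely where the lemma's analytic content lies. First, you invoke control that the hypotheses do not provide: the lemma assumes integrable bounds on $f_1$, $\D_A f_1$, $\D_{A'}f_1$ --- \emph{not} on $\dot f_1$ --- and for $f_2$ only boundedness plus existence of the limits; under these assumptions $\dot f_1$ and $\dot f_2$ need not be in $L^1$ or $L^2$, so $\widetilde{\dot f}_1$, $\widetilde{\dot f}_2$ are a priori only tempered distributions, and neither their pointwise restriction to the surface $\w_1\hat u=\w_2\hat l$, nor the Plancherel step, nor your claim that $\widetilde{\dot f}_2$ is ``bounded and continuous at $\w=0$'' is justified. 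The actual role of $f_2(+\infty)=-f_2(-\infty)$ is different from what you state: it removes the $\delta(\w)$ component of $\tilde f_2$, i.e. it guarantees that dividing $\widetilde{\dot f}_2$ by $-i\w$ reconstructs $\tilde f_2$ with no lost constant. (That some such condition must enter exactly there is forced: adding a constant to $f_2$ changes the right-hand side but not the left-hand side, which sees only $\dot f_2$.) Second, and decisively, the step you yourself flag as the main obstacle --- passing $r\to\infty$ in the pairing of the truncated plane-wave integral with an integrand that is not a test function, carrying the Jacobian singularity $1/\w_1^2$ at $\w_1=0$, distribution-valued Fourier factors and no decay in $\w$ --- is left entirely open; everything before it is algebra. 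As written, the proposal is a correct heuristic identifying the right mechanism, not a proof. The paper's hypotheses are tailored so that after its two real-space integrations by parts the Lebesgue theorem applies directly, with no Fourier analysis at all; your version still needs a substitute for that step, and very likely stronger assumptions to run it in Fourier space.
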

Note that the r.h. side is explicitly hyperplane-independent.
\begin{proof}
Choose a $t$-gauge $t\s l=t\s u=1$, fix $o^A$ and parametrize $\xi^A$
by (\ref{par2}) and $x^a$ by
\[
 x^a=x^0 t^a -y_1Z^a -y_2\lp\cos\f\, X^a+\sin\f\, Y^a\rp-
y_3\lp\sin\f\, X^a - \cos\f\, Y^a\rp\, .
\]
Then $x\s l=x^0+y_1$, $x\s u=x^0+y_2\sin\theta+y_1\cos\theta$. Hence
\begin{eqnarray*}
&&\int_{\{ x^0=c,~|\vec{x}|\leq r\} }
t^a\int\ov{\dot{f}_1(x\s u, \xi, \xc)}\xi_{A'}\, \du\,
\dot{f}_2(x\s l, o, \oc)o_A\, d^3x \\
&&=2\int d\Omega(\theta,\f)\int_{\left\{y_1^2+y_2^2\leq r^2\right\}}d^2y\,
\sqrt{r^2-y_1^2-y_2^2}\times\\
&&\hspace*{3cm}\times\cos\frac{\theta}{2}
\ov{\dot{f}_1(y_1\cos\theta+y_2\sin\vartheta + c, \xi, \xc)} \dot{f}_2(y_1 + c, o, \oc)
\\
&&=2\int d\theta\, d\f \int_{\left\{y_1^2+y_2^2\leq r^2\right\}}d^2y\,
\frac{y_2}{\sqrt{r^2-y_1^2-y_2^2}}\times \\
&&\hspace*{3cm}\times\cos\frac{\theta}{2}
\ov{f_1(y_1\cos\theta+y_2\sin\vartheta +c, \xi, \xc)} \dot{f}_2(y_1 +c, o, \oc)\, .
\end{eqnarray*}
The effect of the constant $c$ is a translation of both functions in the first
argument, so if the lemma is proved for $c=0$, then it is true for all $c$.
We set $c=0$ for simplicity.
By the change of variables $s=y_1\cos\vartheta +y_2\sin\theta$,
$v=-y_1\sin\theta+y_2\cos\theta$ we get
\begin{align*}
 &-2\int d\theta\, d\f\int_{\left\{s^2+v^2\leq r^2\right\}}\frac{ds\, dv}
 {\sqrt{r^2-s^2-v^2}}\, \cos\frac{\theta}{2}\ov{f_1(s, \xi, \xc)}
 \D_{\theta} f_2(s\cos\theta-v\sin\theta, o, \oc)\\
 &=-2\int_{-r}^r ds\int_{-1}^1 \frac{d\kappa}{\sqrt{1-\kappa^2}}
 \int d\theta\, d\f \cos\frac{\theta}{2}\ov{f_1(s, \xi, \xc)}
 \D_{\theta} f_2(s\cos\theta-\sqrt{r^2-s^2}\kappa\sin\theta, o, \oc)\, .
\end{align*}
Integrating  by parts over $\theta$ we obtain
\begin{align*}
 &(2\pi)^2\int_{-r}^r \ov{f_1(s, o, \oc)} f_2(s, o, \oc)\, ds
 +2\int_{-r}^r ds\int_{0}^1 \frac{d\kappa}{\sqrt{1-\kappa^2}}\int d\theta\,d\f\,
 \D_{\theta}\lp\cos\frac{\theta}{2}\ov{f_1(s, \xi, \xc)}\rp \times\\[1ex]
 &\times\lb f_2(s\cos\theta+\sqrt{r^2-s^2}\kappa\sin\theta, o, \oc) +
 f_2(s\cos\theta-\sqrt{r^2-s^2}\kappa\sin\theta, o, \oc)\rb\, .
\end{align*}
By the Lebesgue theorem the second integral vanishes in the limit, which
ends the proof.
\end{proof}

\setcounter{equation}{0}
\setcounter{pr}{0}
\section{An estimate}
\label{apd}

We show here, that if $2\geq\beta\geq 0 $, $\g\geq 0$, $\alpha>|\beta+\g -1|$ and
$\dsp |G(z, v)|<\frac{\con}{(v^0)^{\alpha +1}}$ for both
$z$ and $v$ on the unit four-velocity hyperboloid, then
\be
\left|\int\frac{G(z, v)}{\lp\sqrt{(z\s v)^2 -1}\rp^\beta
\lp z\s v + \sqrt{(z\s v)^2 -1}\rp^\g}\, \m(v)\right|
< \frac{\con}{(z^0)^{\beta+\g}}\, .
\label{mest}
\ee
The bound is then valid in any other reference system (with some other
constant).

Let $t^a$ be the time-axis versor of the reference system. We show that
$$
I\equiv\int\frac{\m(v)}{\lp\sqrt{(z\s v)^2 -1}\rp^\beta
\lp z\s v + \sqrt{(z\s v)^2 -1}\rp^\g (t\s v)^{\alpha +1}} <
\frac{\con}{(t\s z)^{\beta+\g}}\, .
$$
Choose the time-axis of the coordinate system in which integration is performed
along $z$ and set $v^0=\sqrt{|\vec{v}|^2 +1}$. Then
$$
I=2\pi\int\frac{|\vec{v}|^{2-\beta}\, d|\vec{v}|}{v^0 (v^0+|\vec{v}|)^\g}\,
\int_0^2\frac{d\xi}{\lp\cosh\ch v^0 - \sinh\ch |\vec{v}|
+ \sinh\ch |\vec{v}|\xi\rp^{\alpha +1}}\, ,
$$
where $\cosh\ch=t\s z$, $\ch\geq 0$. By (\ref{est}) the inside integral
is bounded by
\[
 \frac{\con}{(\cosh\ch v^0 - \sinh\ch |\vec{v}|)^\alpha
(\cosh\ch v^0 + \sinh\ch |\vec{v}|)}.
\]
By change of integration variable
$|\vec{v}| = \sinh\psi$ we get
\begin{align*}
 &I<\con\int_0^\infty\frac{(\sinh\psi)^{2-\beta}\, d\psi}{(e^{\textstyle\psi})^\g
 \cosh(\psi+\ch)(\cosh(\psi-\ch))^\alpha} \\
 &<\frac{\con}{(\cosh\ch)^{\beta+\g}}\int_{-\ch}^{\infty}
\frac{d\psi}{(e^{\textstyle\psi})^{\beta+\g-1}(\cosh\psi)^\alpha}
 < \frac{\con}{(\cosh\ch)^{\beta+\g}}\int_{-\infty}^{+\infty}
\frac{d\psi}{(e^{\textstyle\psi})^{\beta+\g-1}(\cosh\psi)^\alpha}\, .
\end{align*}

\newpage

\end{document}